\documentclass[final]{IEEEtran}
\usepackage{amsthm,amssymb,graphicx,multirow,amsmath,color,amsfonts,physics}
\usepackage[update,prepend]{epstopdf}
\usepackage[noadjust]{cite}
\usepackage{tikz}
\usepackage{bbm} 
\usepackage{pdfpages}
\usepackage{balance}
\usepackage{multirow}
\usepackage{comment}
\usepackage{subfigure}
\usepackage{yhmath}
\usepackage{amsmath}
\usepackage{stfloats}

\allowdisplaybreaks 
\allowdisplaybreaks 

\usepackage{float}

\begin{document}
\def\nba{{\mathbf{a}}}
\def\nbb{{\mathbf{b}}}
\def\nbc{{\mathbf{c}}}
\def\nbd{{\mathbf{d}}}
\def\nbe{{\mathbf{e}}}
\def\nbf{{\mathbf{f}}}
\def\nbg{{\mathbf{g}}}
\def\nbh{{\mathbf{h}}}
\def\nbi{{\mathbf{i}}}
\def\nbj{{\mathbf{j}}}
\def\nbk{{\mathbf{k}}}
\def\nbl{{\mathbf{l}}}
\def\nbm{{\mathbf{m}}}
\def\nbn{{\mathbf{n}}}
\def\nbo{{\mathbf{o}}}
\def\nbp{{\mathbf{p}}}
\def\nbq{{\mathbf{q}}}
\def\nbr{{\mathbf{r}}}
\def\nbs{{\mathbf{s}}}
\def\nbt{{\mathbf{t}}}
\def\nbu{{\mathbf{u}}}
\def\nbv{{\mathbf{v}}}
\def\nbw{{\mathbf{w}}}
\def\nbx{{\mathbf{x}}}
\def\nby{{\mathbf{y}}}
\def\nbz{{\mathbf{z}}}
\def\nb0{{\mathbf{0}}}
\def\nb1{{\mathbf{1}}}

\def\nbA{{\mathbf{A}}}
\def\nbB{{\mathbf{B}}}
\def\nbC{{\mathbf{C}}}
\def\nbD{{\mathbf{D}}}
\def\nbE{{\mathbf{E}}}
\def\nbF{{\mathbf{F}}}
\def\nbG{{\mathbf{G}}}
\def\nbH{{\mathbf{H}}}
\def\nbI{{\mathbf{I}}}
\def\nbJ{{\mathbf{J}}}
\def\nbK{{\mathbf{K}}}
\def\nbL{{\mathbf{L}}}
\def\nbM{{\mathbf{M}}}
\def\nbN{{\mathbf{N}}}
\def\nbO{{\mathbf{O}}}
\def\nbP{{\mathbf{P}}}
\def\nbQ{{\mathbf{Q}}}
\def\nbR{{\mathbf{R}}}
\def\nbS{{\mathbf{S}}}
\def\nbT{{\mathbf{T}}}
\def\nbU{{\mathbf{U}}}
\def\nbV{{\mathbf{V}}}
\def\nbW{{\mathbf{W}}}
\def\nbX{{\mathbf{X}}}
\def\nbY{{\mathbf{Y}}}
\def\nbZ{{\mathbf{Z}}}

\def\ncalA{{\mathcal{A}}}
\def\ncalB{{\mathcal{B}}}
\def\ncalC{{\mathcal{C}}}
\def\ncalD{{\mathcal{D}}}
\def\ncalE{{\mathcal{E}}}
\def\ncalF{{\mathcal{F}}}
\def\ncalG{{\mathcal{G}}}
\def\ncalH{{\mathcal{H}}}
\def\ncalI{{\mathcal{I}}}
\def\ncalJ{{\mathcal{J}}}
\def\ncalK{{\mathcal{K}}}
\def\ncalL{{\mathcal{L}}}
\def\ncalM{{\mathcal{M}}}
\def\ncalN{{\mathcal{N}}}
\def\ncalO{{\mathcal{O}}}
\def\ncalP{{\mathcal{P}}}
\def\ncalQ{{\mathcal{Q}}}
\def\ncalR{{\mathcal{R}}}
\def\ncalS{{\mathcal{S}}}
\def\ncalT{{\mathcal{T}}}
\def\ncalU{{\mathcal{U}}}
\def\ncalV{{\mathcal{V}}}
\def\ncalW{{\mathcal{W}}}
\def\ncalX{{\mathcal{X}}}
\def\ncalY{{\mathcal{Y}}}
\def\ncalZ{{\mathcal{Z}}}

\def\nbbA{{\mathbb{A}}}
\def\nbbB{{\mathbb{B}}}
\def\nbbC{{\mathbb{C}}}
\def\nbbD{{\mathbb{D}}}
\def\nbbE{{\mathbb{E}}}
\def\nbbF{{\mathbb{F}}}
\def\nbbG{{\mathbb{G}}}
\def\nbbH{{\mathbb{H}}}
\def\nbbI{{\mathbb{I}}}
\def\nbbJ{{\mathbb{J}}}
\def\nbbK{{\mathbb{K}}}
\def\nbbL{{\mathbb{L}}}
\def\nbbM{{\mathbb{M}}}
\def\nbbN{{\mathbb{N}}}
\def\nbbO{{\mathbb{O}}}
\def\nbbP{{\mathbb{P}}}
\def\nbbQ{{\mathbb{Q}}}
\def\nbbR{{\mathbb{R}}}
\def\nbbS{{\mathbb{S}}}
\def\nbbT{{\mathbb{T}}}
\def\nbbU{{\mathbb{U}}}
\def\nbbV{{\mathbb{V}}}
\def\nbbW{{\mathbb{W}}}
\def\nbbX{{\mathbb{X}}}
\def\nbbY{{\mathbb{Y}}}
\def\nbbZ{{\mathbb{Z}}}

\def\nfrakR{{\mathfrak{R}}}

\def\nrma{{\rm a}}
\def\nrmb{{\rm b}}
\def\nrmc{{\rm c}}
\def\nrmd{{\rm d}}
\def\nrme{{\rm e}}
\def\nrmf{{\rm f}}
\def\nrmg{{\rm g}}
\def\nrmh{{\rm h}}
\def\nrmi{{\rm i}}
\def\nrmj{{\rm j}}
\def\nrmk{{\rm k}}
\def\nrml{{\rm l}}
\def\nrmm{{\rm m}}
\def\nrmn{{\rm n}}
\def\nrmo{{\rm o}}
\def\nrmp{{\rm p}}
\def\nrmq{{\rm q}}
\def\nrmr{{\rm r}}
\def\nrms{{\rm s}}
\def\nrmt{{\rm t}}
\def\nrmu{{\rm u}}
\def\nrmv{{\rm v}}
\def\nrmw{{\rm w}}
\def\nrmx{{\rm x}}
\def\nrmy{{\rm y}}
\def\nrmz{{\rm z}}

\def\nbydef{:=}
\def\nborel{\ncalB(\nbbR)}
\def\nboreld{\ncalB(\nbbR^d)}
\def\sinc{{\rm sinc}}

\newtheorem{lemma}{Lemma}
\newtheorem{thm}{Theorem}
\newtheorem{definition}{Definition}
\newtheorem{ndef}{Definition}
\newtheorem{nrem}{Remark}
\newtheorem{theorem}{Theorem}
\newtheorem{prop}{Proposition}
\newtheorem{cor}{Corollary}
\newtheorem{example}{Example}
\newtheorem{remark}{Remark}
\newtheorem{assumption}{Assumption}
	

\newcommand{\ceil}[1]{\lceil #1\rceil}
\def\argmin{\operatorname{arg~min}}
\def\argmax{\operatorname{arg~max}}
\def\figref#1{Fig.\,\ref{#1}}%
\def\E{\mathbb{E}}
\def\EE{\mathbb{E}^{!o}}
\def\P{\mathbb{P}}
\def\pc{\mathtt{P_c}}
\def\rc{\mathtt{R_c}}   
\def\p{p}

\def\V{\operatorname{Var}}
\def\erfc{\operatorname{erfc}}
\def\erf{\operatorname{erf}}
\def\opt{\mathrm{opt}}
\def\R{\mathbb{R}}
\def\Z{\mathbb{Z}}

\def\LL{\mathcal{L}^{!o}}
\def\var{\operatorname{var}}
\def\supp{\operatorname{supp}}

\def\N{\sigma^2}
\def\T{\beta}							
\def\sinr{\mathtt{SINR}}			
\def\snr{\mathtt{SNR}}
\def\sir{\mathtt{SIR}}
\def\ase{\mathtt{ASE}}
\def\se{\mathtt{SE}}

\def\calN{\mathcal{N}}
\def\FE{\mathcal{F}}
\def\calA{\mathcal{A}}
\def\calK{\mathcal{K}}
\def\calT{\mathcal{T}}
\def\calB{\mathcal{B}}
\def\calE{\mathcal{E}}
\def\calP{\mathcal{P}}
\def\calL{\mathcal{L}}


\def\l{\ell}
\newcommand{\fad}[2]{\ensuremath{\mathtt{h}_{#1}[#2]}}
\newcommand{\h}[1]{\ensuremath{\mathtt{h}_{#1}}}

\newcommand{\err}[1]{\ensuremath{\operatorname{Err}(\eta,#1)}}
\newcommand{\FD}[1]{\ensuremath{|\mathcal{F}_{#1}|}}



\def\Bx{{\mathcal{B}}^x}
\def\Bxx{{\mathcal{B}}^{x_0}}
\def\jx{y}
\def\m{(\bar{n}-1)}
\def\mm{\bar{n}-1}
\def\Nx{{\mathcal{N}}^x}
\def\Nxo{{\mathcal{N}}^{x_0}}
\def\wj{w_{jx_0}}
\def\uij{u_{jx}}
 \def\yj{y}
 \def\yjx{y}
 \def\zjx{z_x}
 \def \tx {y_0}
 \def \htx {h_0}

\def\rx{z_{1}}
\def\ry{z_{2}}

\def\Rx{Z_{1}}
\def\Ry{Z_{2}}

\def \hyxx {h_{y_{x_0}}}
\def \hyx {h_{y_x}}

\def\nbb1{\mathbbm{1}}
\def\xi{\textbf{x}_i}
\def\xj{\textbf{x}_j}
\def\xk{\textbf{x}_k}
\def\xx{\textbf{x}_0}
\def\yk{\textbf{y}_k}
\def\yj{\textbf{y}_j}
\def\yy{\textbf{y}_0}
\def\oe{\textbf{o}_e}
\def\zl{\textbf{z}_l}
\def\wik{\textbf{w}_{i,k}}
\def\ie{{\em i.e. }}
\def\eg{{\em e.g. }}
\def\iid{{\em i.i.d. }}
\def\avg{\rm avg}

\def\rmnuma{\rm\uppercase\expandafter{\romannumeral1}}
\def\rmnumb{\rm\uppercase\expandafter{\romannumeral2}}
\def\rmnumc{\rm\uppercase\expandafter{\romannumeral3}}
\def\rmnumd{\rm\uppercase\expandafter{\romannumeral4}}
\def\rmnume{\rm\uppercase\expandafter{\romannumeral5}}
\def\rmnumf{\rm\uppercase\expandafter{\romannumeral6}}

\pagenumbering{gobble}
\graphicspath{{./Figures/}}
\title{HAPS-enabled Downlink Coverage Enhancement\\ in Islands and Maritime Areas}
\author{
 Hao Lin,~\IEEEmembership{Graduate Student Member,~IEEE},  Mustafa A. Kishk,~\IEEEmembership{Member,~IEEE}\\ and Mohamed-Slim Alouini,~\IEEEmembership{Fellow,~IEEE}
\thanks{Hao Lin is with the Electrical and Computer Engineering Program, Computer, Electrical and Mathematical Sciences and Engineering Division (CEMSE), King Abdullah University of Science and Technology (KAUST), Thuwal 23955-6900, Saudi Arabia (e-mail: hao.lin.std@gmail.com).\\
\indent Mustafa A. Kishk is with the Department of Electronic Engineering,
Maynooth University, Maynooth, W23 F2H6 Ireland (e-mail:
mustafa.kishk@mu.ie).\\
\indent Mohamed-Slim Alouini is with the CEMSE Division, King Abdullah
University of Science and Technology (KAUST), Thuwal 23955-6900,
Saudi Arabia (e-mail: slim.alouini@kaust.edu.sa).}
}
\maketitle
\vspace{-2cm}
\begin{abstract}
Non-terrestrial networks (NTNs) are poised to play a critical role in next-generation mobile communications, offering enhanced flexibility, improved line-of-sight (LoS) conditions, and overcoming the limitations of terrestrial networks (TNs). Among NTN platforms, high altitude platform stations (HAPSs) have emerged as a promising solution to provide Internet connectivity to underserved regions, including rural areas, islands, and maritime zones, where traditional infrastructure deployment is costly and challenging to deploy. In this paper, we investigate the feasibility of large-scale HAPS deployment to connect island and maritime users, considering real-world shadowing effects on part of HAPSs caused by the presence of island building clusters. We first analyze the coverage performance of onshore (island) and offshore (remote sea) users, in which the channels between HAPSs and the user follow the shadowed Rician distributions and Rician distributions, respectively. Next, we introduce an evaluation method for nearshore users in a hybrid channel environment with HAPSs, and propose approximations that can reduce computational complexity. Based on the simulation results, we discuss how the distance from the island boundary (i.e. the relative remoteness of maritime users) affects coverage performance under different HAPS densities. We also emphasize the importance of choosing a balanced HAPS density or an advanced HAPS deployment scheme.
\end{abstract}

\begin{IEEEkeywords}

Stochastic geometry, high altitude platform stations (HAPSs), non-terrestrial networks (NTNs), coverage probability, maritime communications.
\end{IEEEkeywords}

\section{Introduction} \label{sec:Intro}
In the vision of next-generation mobile communications, the rapid adoption of advanced technologies such as remote education, smart healthcare, intelligent transportation, digital twins, and extended reality, has driven unprecedented demand for global seamless connectivity \cite{10628026,10070393,9369324,9349624}. Especially, with the development of marine resource utilization, ocean monitoring, and offshore facilities operation, maritime communications are becoming increasingly necessary \cite{fang2025study,10416859}. However, traditional terrestrial networks (TNs) face huge limitations in meeting these exploding demands due to high deployment costs, geographic challenges, and limited financial supports \cite{9042251}. One potential solution is to build wireless mesh networks among vessels, buoys or sensors \cite{10840318}, but this approach still faces issues such as interference management and capacity limitation. Long-Term Evolution (LTE) technology, supported by multiple-input multiple-output (MIMO) and carrier aggregation, can realize reliable direct ship-to-shore communications \cite{9939173,8694989}. Since the height of the onshore base stations is limited, it is still difficult to support stable coverage for wider ocean areas. Therefore, non-terrestrial networks (NTNs), including low-altitude platforms (LAPs), high-altitude platform stations (HAPSs), low-earth orbit (LEO) satellites, and geostationary orbit (GEO) satellites, have become a viable solution to extend connectivity to these areas \cite{10742580,10355086,10396843}. 

Different types of NTN platforms play their respective roles in future communication architectures due to their own characteristics, as shown in \cite{deng2025distributed}. For example, tethered unmanned aerial vehicles (UAVs) can help extend reliable communication to nearshore users \cite{10646299} and LEO satellites can be a solution to achieve Internet coverage for remote areas. HAPSs mainly operate at an altitude of 20-50 km, which can balance coverage range and propagation delay \cite{9380673}. Unlike LAPs, HAPSs can support larger communication payloads and advanced antenna arrays, enabling higher data rates and precise beamforming \cite{10355104}. 
Each HAPS has a wider coverage range of up to 140 km and an autonomous flight time of several months. Also, HAPSs have lower latency than LEO satellites because of their proximity to the earth's surface and provide more stable connections without orbital motion \cite{5995281}. Equipped with 5G and LTE access technologies, HAPSs can provide direct services to users without dedicated antennas. Although energy consumption remains a challenge, solutions such as solar panels, wind turbines, and tethering cables are being explored to ensure the sustainable operation of HAPSs \cite{10417095}.  Therefore, HAPSs are expected to play a pivotal role in bringing ubiquitous connectivity and supporting ultra-reliable low-latency communications (URLLC) to unconnected areas, such as islands and maritime areas \cite{koo2023simultaneous}. HAPSs can not only provide services to areas lacking TN infrastructure, but also fill the coverage gaps of existing networks in island areas, share the traffic demand from mobile users and massive machine-type communications, and provide backup optical feeder links. Integrated sensing and communications (ISAC) has been advanced to enhance the spectrum utilization and quality of services (QoS), particularly in intelligent transportation systems, low-altitude network management and massive IoT, where HAPSs can also offer a compelling solution for implementing ISAC systems for remote scenarios \cite{10417095,9380673,10926897}. Currently, Softbank Corp has planned to use HAPSs with a payload of up to 70 kg, to provide services in areas that are difficult to cover by the existing mobile networks \cite{Softbank}. To optimize the deployment of large-scale HAPS network, a performance analysis framework of HAPS networks in a complex electromagnetic environment is urgently needed.

Stochastic geometry has been widely used to study the coverage capabilities of terrestrial large-scale wireless networks (LSWN), without losing accuracy and tractability \cite{haenggi2012stochastic,7733098}. It is also important to characterize the performance of future large-scale non-terrestrial networks \cite{9378781}. System-level metrics including coverage probability, latency, capacity, and energy efficiency can be investigated using tools from stochastic geometry \cite{10634042}. Focusing on the HAPS-based solutions in maritime applications, we establish a unified mathematical framework to model the channels between HAPSs and onshore, nearshore, and offshore users. More details on related work are shown in Sec. \ref{sec:relatedwork} and the contributions of this paper are summarized in Sec. \ref{sec:contribution}.

\subsection{Related Work}\label{sec:relatedwork}

In this paper, we investigate the potential and performance of HAPS-based solutions for island and maritime areas. Therefore, we divide the related work into: (i) HAPS-based solutions for the unconnected and (ii) stochastic geometry for HAPS-based solutions.

\textit{HAPS-based solutions for the unconnected}: Connecting the unconnected is an important goal of future communication systems, and HAPSs play an important role in providing services to rural and remote areas that lack broadband connectivity and infrastructure \cite{arum2020review}. In \cite{giambene2022lora}, the authors investigated HAPS-based environmental monitoring applications for remote and unconnected regions, where ground Internet of Things (IoT) devices transmit information to HAPSs via long-range radio (LoRa) technique. They examined the impact of IoT device density, transmit power, and HAPS altitude on throughput and conditional success probability. Using numerical iterative methods, the authors in \cite{10379023} demonstrated the potential of heterogeneous networks to achieve fair access services in both urban and rural areas. They considered a system setup where a cloud-enabled HAPS can connect with high-altitude balloons and terrestrial base stations to serve both aerial and ground users. By jointly optimizing user association and beamforming, the authors in \cite{10304301} analyzed the performance of hybrid satellite-HAPS-terrestrial networks and discussed the prospects for connecting unconnected regions. They also explored the application of machine learning in such vertical heterogeneous networks (vHetNets) \cite{10239349}. Due to the large payload capacity and superior LoS conditions, HAPSs can be equipped with reconfigurable intelligent surfaces (RISs) and act as relay nodes, to further connect the unconnected base stations \cite{10184499}.\\
\indent \textit{Stochastic geometry for HAPS-based solutions}: Stochastic geometry has been widely applied to the performance analysis of large-scale wireless networks, including terrestrial, non-terrestrial, and integrated networks. By analyzing system-level metrics such as coverage probability, latency, and channel capacity, operators can optimize NTNs' performance and reduce capital expenditure (CAPEX) and operational expenditure (OPEX) \cite{10634042}.  For example, in \cite{10050345}, the authors proposed an efficient stochastic geometry framework to evaluate the quality of service for users inside and outside disaster-affected areas. This framework utilized low-altitude platforms (LAPs) and HAPSs to address disasters of varying scales, demonstrating the effectiveness of vertical heterogeneous networks in emergency scenarios.  HAPSs can not only serve as cellular or non-cellular base stations for mobile users, but also act as nodes for backhaul and computational networks \cite{10355104}. For instance, in \cite{9520123}, the authors investigated the performance of hybrid satellite-aerial-terrestrial networks, where HAPS serves as an aerial node with file-caching capabilities. Using stochastic geometry, they derived the outage probability and the hit probability of the considered vHetNet architecture. In \cite{10506977}, the authors explored a heterogeneous network based on unmanned aerial vehicles (UAVs), HAPSs, and low earth orbit (LEO) satellites, analyzing the contributions of each layer to overall performance. Under constraints of practical economic costs and signal-to-noise ratio (SNR), they proposed optimization schemes for total connection probability and discussed resource allocation strategies under different constraints. Furthermore, the future communication architecture should cover users and devices on the sea. In \cite{10040542}, the authors studied the coverage performance of space-air-ground-sea integrated networks (SAGSIN), where onshore base stations, tethered balloons, HAPSs, and satellites provide services to surface stations at sea. \\
\indent Unlike the above literature, our paper aims to build a mathematical framework for downlink performance evaluation considering the relationship between users, a shadowing zone and a homogeneous HAPS network, verify the feasibility of HAPS-based solution and provide guidance for future customized HAPS deployment to realize user fairness. More details about the contributions are provided in the next subsection.

\subsection{Contributions}\label{sec:contribution}
The contributions of this paper can be summarized as follows:
\begin{itemize}
    \item We develop a more accurate evaluation approach for HAPSs-enabled solutions on islands and maritime areas, considering the shadowing effect of the island building cluster on the channels between HAPSs and users. We propose to divide the HAPS deployment area into a shadowed Rician region and a Rician region according to the geometric relationship between user, island building cluster and HAPS network.
    \item We derive the exact expression of the coverage probability for onshore, nearshore, and offshore users using stochastic geometry. Especially for the nearshore users with a hybrid channel environment, we propose two approximations for coverage probability, whose computation complexity is less than the exact expression.
    \item We show that onshore users operate optimally with a high HAPS density, while offshore and nearshore users perform best with a low HAPS density. A balanced HAPS density can help minimize the coverage performance gap between users at different locations, and moreover, a customized deployment of the HAPS network has potential in the future.
\end{itemize}

The rest of this paper is organized as follows: In Sec. \ref{sec:SysMod}, we introduce the system model. In Sec. \ref{sec:coverageanalysis} we introduce the expression of coverage probabilities for the island (onshore) and remote maritime (offshore) users. In Sec. \ref{sec:nearshore}, we derive the exact expressions for nearshore users. We also propose two approximations for the coverage performance of nearshore users. In Sec. \ref{sec:simulation}, we conduct the Monte Carlo simulations and verify the exact expressions and approximations. Finally, we conclude this paper in Sec. \ref{sec:conclusion}.

\section{System Model} \label{sec:SysMod}
\begin{figure*}[htbp]
\centering
\subfigure[Case 1: Onshore users with HAPS-user links under shadowed Rician channels.]{
\begin{minipage}[b]{0.32\linewidth}
\centering 
\includegraphics[width=1\textwidth]{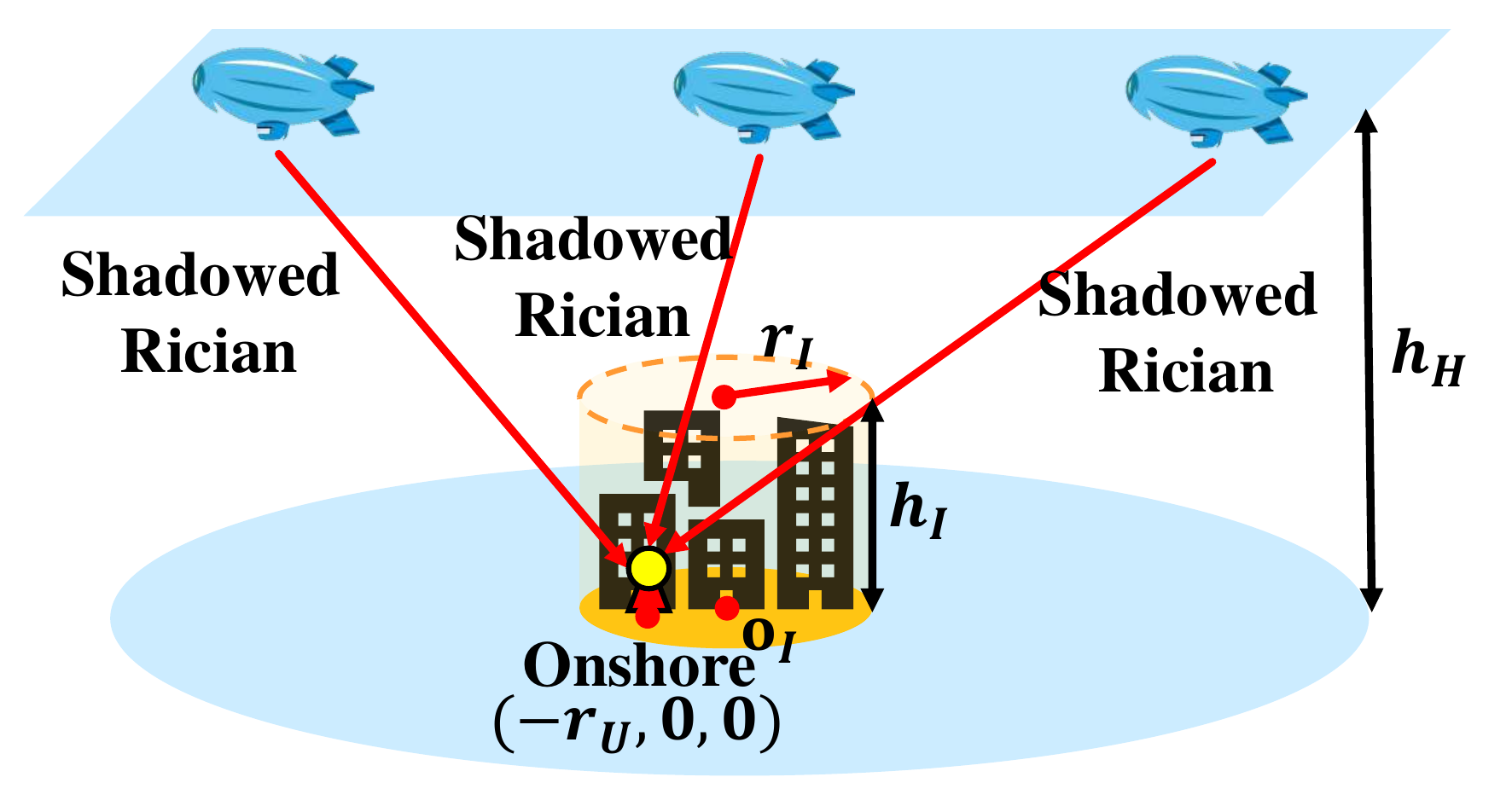}
\label{fig:Onshore}
\end{minipage}}
\hfill
\subfigure[Case 2: Nearshore users with HAPS-user links under shadowed Rician channels and Rician channels.]{
\begin{minipage}[b]{0.32\linewidth}
\centering 
\includegraphics[width=1\textwidth]{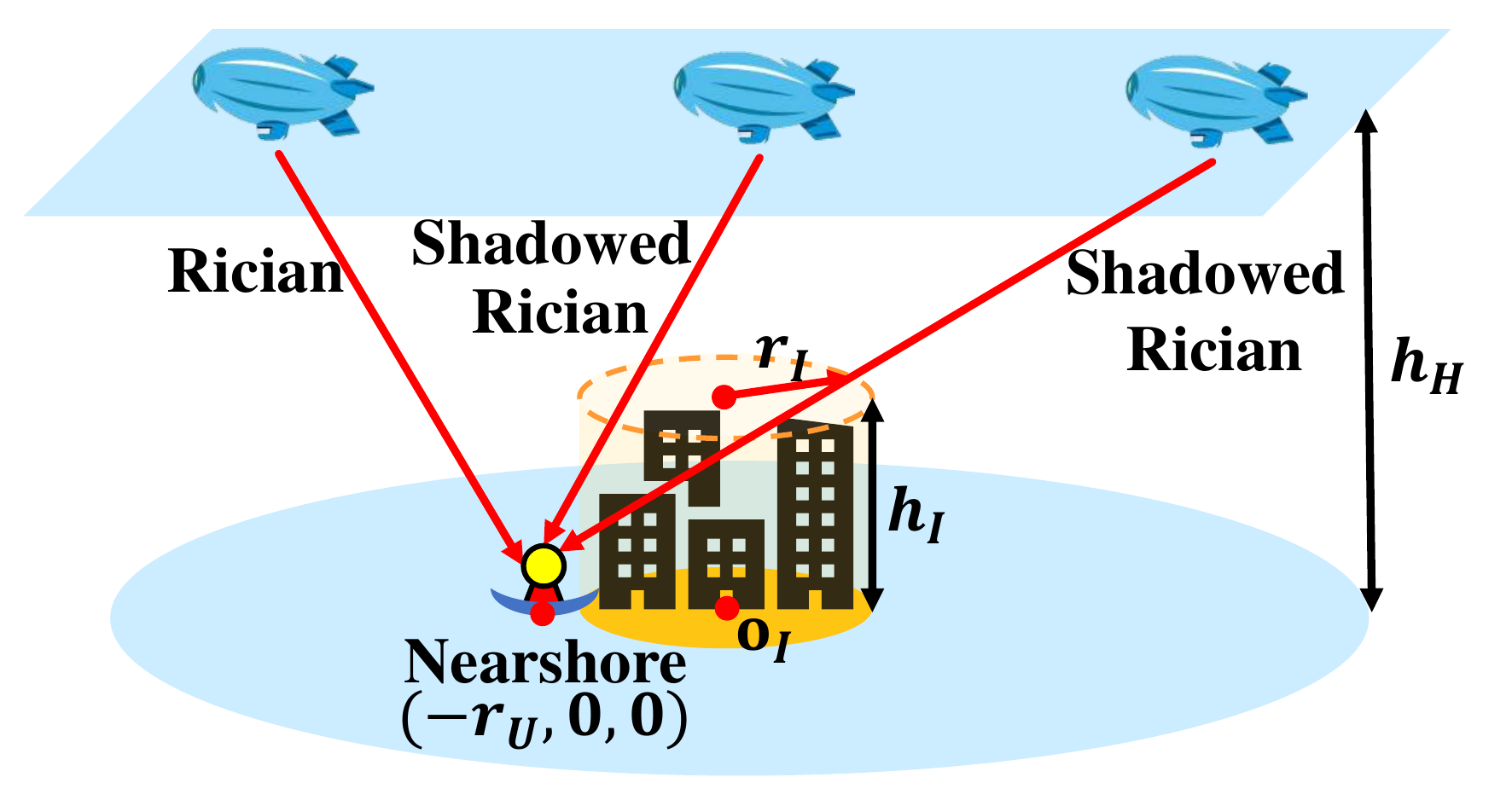}
\label{fig:Nearshore}
\end{minipage}}
\hfill
\subfigure[Case 3: Offshore users with HAPS-user links under Rician channels.]{
\begin{minipage}[b]{0.32\linewidth}
\centering 
\includegraphics[width=1\textwidth]{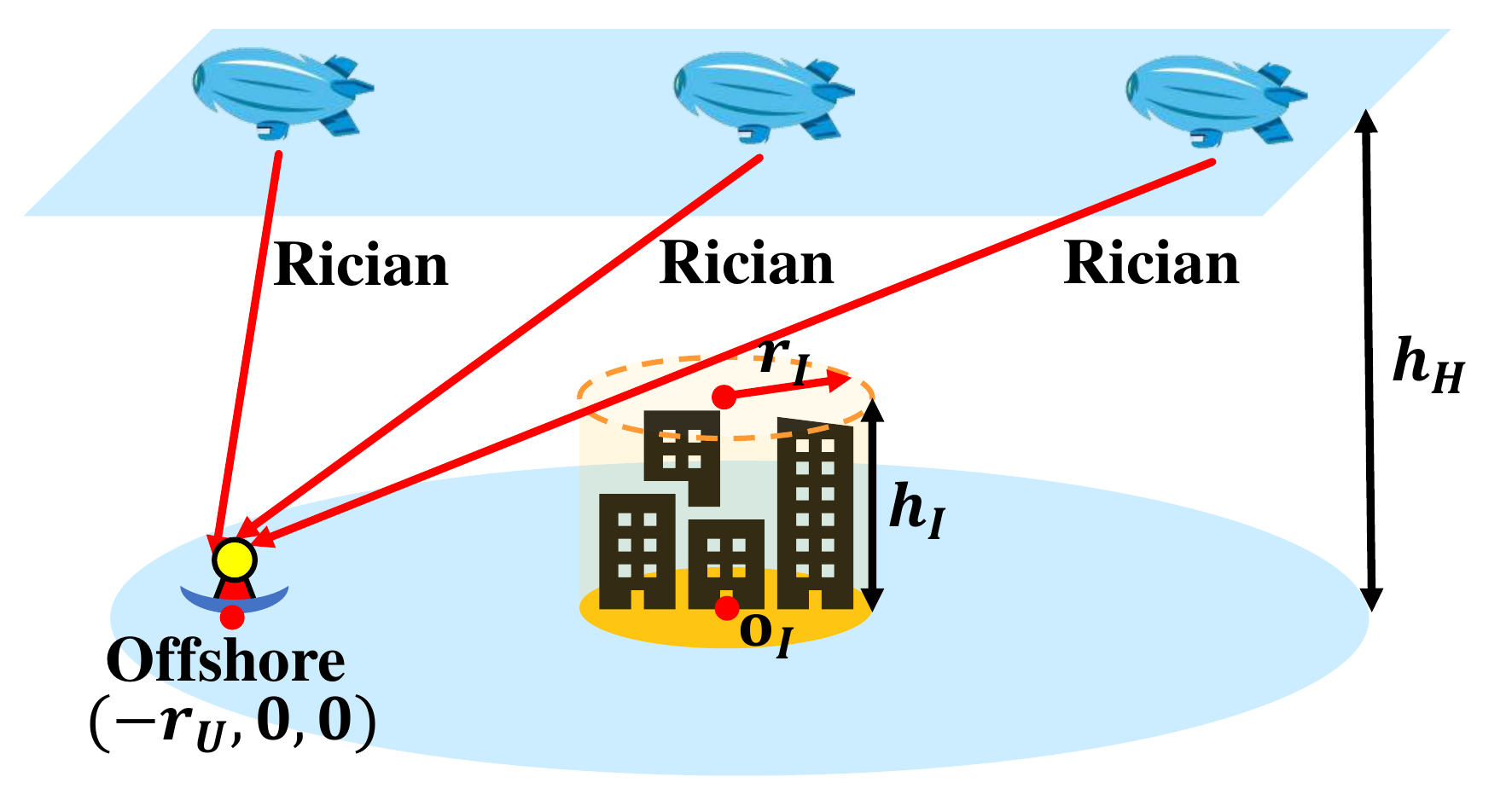}
\label{fig:Offshore}
\end{minipage}}
\caption{Illustration of channel environment for onshore, nearshore and offshore users, respectively.}
\label{fig:onnearoffshore}
\end{figure*}

In this paper, we use a circular area with radius $r_{I}$ to model a typical residential area (or forests) on an island, and the sea area is outside this circular area. We aim to investigate the downlink coverage performance of users at different geographic locations under the same large-scale HAPS deployment. Therefore, considering the randomness from the node failure and maintenance of the HAPS cluster, the locations of HAPSs follow a two-dimensional (2D) homogeneous Poisson point process (PPP) $\Phi=\{\textbf{x}_i\}$ with density $\lambda_{H}$ at altitude $h_{H}$. We assume that the center of island is $\textbf{o}_I=(0,0,0)$ and the location of a typical user is $\textbf{z}_U=(-r_U,0,0)$, where the distance between the typical user and the center of island is $r_{U}$. We also call $r_U-r_I$ the `relative distance to the island boundary', where a negative value corresponds to an onshore user and a positive value corresponds to a nearshore user or an offshore user.

Signal propagation is crucial for performance analysis of air-sea channels. A straightforward approach is to use the classic logarithmic path loss model and modify the parameters to account for the effects of obstacles and scatterers at sea. When the transmission antenna height is low (less than 1 km), the signal energy received by the user comes primarily from specular and diffuse reflections from the sea surface, rather than simply LoS path. However, when the transmission antenna height is higher, the majority of the signal energy received by the user comes from the LoS propagation path. This makes it particularly important to know whether the LoS signal is obstructed by random obstacles. Therefore, we model the cluster of buildings (or forests) on the island as a cylindrical area with the center of the bottom at $\textbf{o}_I$, radius $r_I$, and altitude $h_I$. When the LoS between a user and a HAPS crosses this cylindrical area, the power of the LoS component is randomly shadowed by the buildings, while the reflections on the buildings form the NLoS components of the signal. We assume that the shadowed LoS component follows a Gamma distribution and that the small-scale fading gain of the signal envelope is modeled using the shadowed Rician distribution. When the LoS between the user and a HAPS does not cross the cylindrical area, the small-scale fading gain of the signal envelope is modeled using the Rician distribution because clear LoS paths can be obtained, while reflections by the sea surface and evaporation ducting will form the NLoS component. In addition, as introduced in \cite{8528349,9939173}, the ducting effect leads to coverage extension when the electromagnetic waves are emitted in certain directions, the signals are concentrated in the intended direction, which is called beyond LoS (B-LoS) transmissions. Because HAPSs have broader footprints than the onshore BSs, we ignore the B-LoS effect but consider the multipath effect. Considering different locations of typical users, we discuss three cases:
\begin{itemize}
    \item \textit{Onshore case:} When $r_U$ is less than $r_I$, the typical user is on the island and almost all LoS paths between it and HAPSs are shadowed by the island building cluster. Therefore, the typical user is in a shadowed Rician channel environment where channel gains from all HAPSs to the typical user follow the shadowed Rician distribution.
    \item \textit{Nearshore case:} When $r_U$ is larger than $r_I$, the building cluster shadows the LoS paths between parts of the HAPSs and the typical user, and these HAPS operate in shadowed Rician fading channels. Since the building cluster does not shadow the LoS path between remaining HAPSs and the user, these HAPSs operate in Rician fading channels. Therefore, such a typical nearshore user operates in a hybrid channel environment.
    \item \textit{Offshore case:} When $r_U$ is much larger than $r_I$, the typical user is far from the island and can obtain clear LoS paths to almost all HAPSs. Therefore, the typical user is in a Rician channel environment, where the channel gains between all HAPSs and the typical user follow the Rician distribution.
\end{itemize}

It is worth noting that, when a typical user moves from the island center to the far sea area, its channel environment keeps changing. Specifically, as the user crosses the island boundary, over half of the HAPSs operate in the Rician channels because the building cluster no longer shadows LoS paths from the HAPSs on the sea side. When $r_U$ is larger than $r_I$ and continues to increase, more and more HAPSs can obtain clear LoS paths and operate in Rician channels. Therefore, the channel environment of the typical user gradually changes from a hybrid channel environment to a Rician channel environment. Furthermore, an increase in either the island's size or the height of buildings results in a greater number of HAPSs experiencing shadowing effects. For ease of reading, we summarize the main notation and parameters in Table \ref{tab:TableOfNotations}. We use $f(z)\big|_{b}^{a}$ to represent $f(a)-f(b)$.
\begin{table*}[htbp]
\caption{Table of Notations}
\centering
\begin{center}
\resizebox{\textwidth}{!}{
\renewcommand{\arraystretch}{1}
    \begin{tabular}{ {c} | {l} }
    \hline
        \hline
    \textbf{Notation} & \textbf{Description} \\ \hline
    $\Phi=\{\textbf{x}_i\}$; $\lambda_{H}$; $h_H$ & The set of locations of HAPSs; the density of HAPSs following a 2D PPP; the altitude of HAPSs  \\ \hline
    $\textbf{o}_I$; $r_I$; $h_I$ & The center of island; the radius of island; the altitude of buildings on the island \\ \hline
    $\textbf{z}_U$; $r_U$; $d_i$ & The location of the typical user; the distance between the typical user and island center; the distance between the typical user at $\textbf{z}_U$ and HAPS at $\textbf{x}_i$. \\ \hline
    $p_t$, $g_t$, $g_r$, $\lambda$ & Transmit power; transmitting gain; receiving gain; wave length of the carrier frequency \\ \hline
    $p_{I,0}$; $W_{I,i}$; $\alpha_{I}$ & The received power at a unit distance without channel gains; channel gain; path-loss exponent (for shadowed Rician channels);\\ \hline
    $b_I$; $m_I$; $\Omega_I$ & The standard deviation of NLoS components; the shape parameter; the average gain of the shadowed LoS component (for shadowed Rician channels) \\ \hline
    $\alpha_0$; $\beta_0$ & The parameters for approximation of shadowed Rician channel fading\\ \hline
    $P_{I,i}$& The signal power transmitted from HAPS at $\textbf{x}_i$ and received at the typical user via a shadowed Rician channel\\ \hline
    $p_{O,0}$; $W_{O,i}$; $\alpha_{O}$ & The received power at a unit distance without channel gains; channel gain; path-loss exponent (for Rician channels) \\ \hline
    $b_O$; $\Omega_O$ & The standard deviation of NLoS components; the average gain of the LoS component (for Rician channels) \\ \hline
    $P_{O,i}$& The signal power transmitted from HAPS at $\textbf{x}_i$ and received at the typical user via a Rician channel\\ \hline
    $s_I$; $s_O$& The atmospheric attenuation in shadowed Rician channels and Rician channels\\ \hline
    $\tau$; $N_0$; $P_{\rm cov}$& The decoding SINR threshold; the noise in the receiver circuit; coverage probability\\ \hline
     \hline
    \end{tabular}
}
\end{center}
\label{tab:TableOfNotations}
\end{table*}

\section{Downlink Analysis for Onshore and Offshore} \label{sec:coverageanalysis}
In this section, we analyze the downlink coverage performance for a typical user onshore or offshore, respectively. We specify the channel models between HAPSs and the user, and derive the expression of coverage probability for onshore users and offshore users, respectively.
\subsection{Onshore Users: Shadowed Rician Environment}
For users on the island, we have $r_U\leq r_I$. As shown in Fig. \ref{fig:Onshore}, the signals from HAPSs are not only affected by the atmosphere and reflections, but also shadowed by building cluster. Therefore, we assume that the channel gains of small-scale fading between HAPSs and the onshore user follow the shadowed Rician distributions \cite{10553646}. The power of signal transmitted from HAPS $i$ and received by the typical onshore user can be modeled as:
\begin{equation}
    P_{I,i}(d_i)=p_{t}(\frac{\lambda}{4\pi d_{I,i}})^{\alpha_I}g_tg_r s_I W_{I,i}=p_{I,0}W_{I,i}d_{I,i}^{-\alpha_I},
\label{PIidi}
\end{equation}
where
    $p_{I,0}=p_{t}(\frac{\lambda}{4\pi })^{\alpha_I}g_tg_r s_I$
is the received power at a unit distance regardless of the channel gains. $p_t$ is the transmitted power, $\lambda$ is the wave length, $g_t$, $g_r$ and $s_I$ represent the transmit antenna gain, receive antenna gain and the atmospheric attenuation in shadowed Rician channels, respectively. $\alpha_I$ is the path loss exponent for shadowed Rician channels and $d_{I,i}$ is the distance between the typical onshore user and HAPS located at $\textbf{x}_i$.
The shadowed Rician channel gain is represented using $W_{I,i}$, and the PDF of $W_{I,i}$ is:
\begin{equation}
\begin{array}{r@{}l}

    f_{W_{I,i}}(x)&= \frac{1}{2b_{I}}\big(\frac{2b_{I}m_I}{2b_{I}m_I+\Omega_{I}}\big)^{m_I}\exp \big(-\frac{1}{2b_{I}}x \big)\\
    & \times\;{}_{1}F_{1} \big(m_I;1;\frac{\Omega_{I}}{2b_{I}[2b_{I}m_{I}+\Omega_{I}]}x \big),
\end{array}
\label{fWIix}
\end{equation}
where $b_I$ is the standard deviation of NLoS component, $\Omega_{I}$ is the average of direct LoS component which follows the Gamma distribution, and $m_I$ is the shape parameter of the Gamma distribution. The CDF of $W_{I,i}$ can be calculated using:
\begin{equation}
\begin{array}{r@{}l}
    F_{W_{I,i}}(x)&= \big(\frac{2b_{I}m_{I}}{2b_{I}m_{I}+\Omega_{I}}\big)^{m_{I}}\\
    & \times\sum_{n=0}^{\infty}\frac{(m_{I})_n}{(1)_{n}n!} \big(\frac{\Omega_{I}}{2b_{I}m_{I}+\Omega_{I}} \big)^{n}\gamma\big(n+1,\frac{x}{2b_{I}}\big),
\end{array}
\label{FWIix}
\end{equation}
and the CCDF of $W_{I,i}$ is:
\begin{equation}
\begin{array}{r@{}l}
    \overline{F}&_{W_{I,i}}(x)\overset{\triangle}{=}1-F_{W_{I,i}}(x)= \big(\frac{2b_{I}m_{I}}{2b_{I}m_{I}+\Omega_{I}}\big)^{m_{I}}\\
    & \times\sum_{n=0}^{\infty}\frac{(m_{I})_n}{(1)_{n}n!} \big(\frac{\Omega_{I}}{2b_{I}m_{I}+\Omega_{I}} \big)^{n}\Gamma\big(n+1,\frac{x}{2b_{I}}\big),
\end{array}
\label{CCDFWIix}
\end{equation}
where $\gamma(s,x)=\int_0^{x}t^{s-1}e^{-t}\dd t$ and $\Gamma(s,x)=\int_x^{\infty}t^{s-1}e^{-t}\dd t$ represent the lower incomplete Gamma function and upper incomplete Gamma function with shape parameter $s$. The expectation of $W_{I,i}$ is $\mathbb{E}[W_{I,i}]=2b_I+\Omega_I$. Since the locations of HAPSs follow a 2D homogeneous PPP with density $\lambda_{H}$ and all HAPSs follow the same signal propagation model, the user selects the closest HAPS that has the strongest average received power. Therefore, we define the variable $D$ as the distance between the user and the closest HAPS and introduce the distribution of $D$ in Lemma \ref{lem:PDFofD}.
\begin{lemma}
    The PDF of the distance $d$ between the typical user and its nearest HAPS can be expressed as:
\begin{equation}
\begin{array}{r@{}}
    f_{D}(d)=2\lambda_{H}\pi\exp(\lambda_{H}\pi h_{H}^2) d\exp(-\lambda_{H}\pi d^2),
\end{array}
\label{PDFofD}
\end{equation}
for $h_H<d<\infty$.
\label{lem:PDFofD}
\end{lemma}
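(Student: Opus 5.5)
The plan is to derive the distribution of $D$ through its CCDF, using the void probability of the homogeneous PPP $\Phi$, and then differentiate. The HAPSs lie on the plane at altitude $h_H$ and the typical user is at $\textbf{z}_U$ on the ground. So the three-dimensional distance from the user to a HAPS whose horizontal distance (to the projection of $\textbf{z}_U$) is $r$ equals $\sqrt{r^2+h_H^2}$. In particular $D\geq h_H$ almost surely, which explains the support $h_H<d<\infty$ in \eqref{PDFofD}. The location of the user ($r_U$) plays no role, because $\Phi$ is stationary in the plane.

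First I would fix $d>h_H$ and observe that the event $\{D>d\}$ coincides with the event that no point of $\Phi$ lies in the horizontal disk centered at the projection of $\textbf{z}_U$ with radius $\sqrt{d^2-h_H^2}$. By the void probability of a 2D homogeneous PPP with density $\lambda_H$,
\begin{equation*}
\mathbb{P}(D>d)=\exp\big(-\lambda_H\pi(d^2-h_H^2)\big)=\exp(\lambda_H\pi h_H^2)\exp(-\lambda_H\pi d^2),
\end{equation*}
and $\mathbb{P}(D>d)=1$ for $d\leq h_H$. Next, I would compute $F_D(d)=1-\mathbb{P}(D>d)$ and differentiate with respect to $d$ on $(h_H,\infty)$. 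This gives $f_D(d)=2\lambda_H\pi d\exp(\lambda_H\pi h_H^2)\exp(-\lambda_H\pi d^2)$, which is \eqref{PDFofD}.

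As a sanity check, I would integrate this density over $(h_H,\infty)$. The result is $\exp(\lambda_H\pi h_H^2)\exp(-\lambda_H\pi h_H^2)=1$, confirming that the CCDF is continuous at $h_H$ and that no atom sits at $d=h_H$.

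No step is genuinely hard. The only point that needs care is the geometric translation: the ball of radius $d$ in $\mathbb{R}^3$ around the user intersects the HAPS plane in a disk of radius $\sqrt{d^2-h_H^2}$. One also has to note that, under the common path-loss model, the nearest HAPS is the one with the strongest average received power. Neither issue is substantive.
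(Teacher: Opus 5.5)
Your proposal is correct and takes essentially the same route as the paper. Both proofs reduce to the nearest-neighbor horizontal distance $\zeta$ of the 2D PPP in the HAPS plane, where $\mathbb{P}(\zeta>r)=\exp(-\lambda_H\pi r^2)$, and map it to $D$ via $D^2=\zeta^2+h_H^2$; you transform the event $\{D>d\}$ at the CCDF level and differentiate, whereas the paper writes the density of $\zeta$ and substitutes $\zeta\,\mathrm{d}\zeta=d\,\mathrm{d}d$ inside the normalization integral, so your version makes the change of variables slightly more explicit.
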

\begin{IEEEproof}
    See Appendix~\ref{app:PDFofD}.
\end{IEEEproof}

Therefore, the closest HAPS acts as the serving HAPS to the user, and other HAPSs act as interfering ones. Based on this, we introduce the coverage probability for typical onshore users in Theorem \ref{theo:onshore}.
\begin{theorem}
    When $r_U\leq r_I$, the channel gains between the typical user and HAPSs are modeled using shadowed Rician distributions, where the coverage probability of onshore users is:
\begin{equation}
\begin{array}{r@{}l}
    P_{\rm cov}(r_{U},\tau)&\overset{\triangle}{=}\P\{{\rm SINR}_{I}(r_U)>\tau\}\\
    &\approx \int_{h_H}^{\infty} \overline{F}_{W_{I,1}}\big(g_I(d)\big)f_{D}(d){\rm d}d,
\end{array}
\label{PcovOnshore}
\end{equation}
with
    $g_{I}(d)\approx \frac{\tau N_{0}}{p_{I,0}}d^{\alpha_{I}}+\frac{2\pi\lambda_{H}\tau (2b_{I}+\Omega_{I})d^{\alpha_{I}}}{\alpha_{I}-2}z^{2-\alpha_{I}} \big|_{\infty}^{d}$,
and $\overline{F}_{W_{I,1}}(x)$ is given in (\ref{CCDFWIix}).

\label{theo:onshore}
\end{theorem}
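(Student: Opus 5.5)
We condition on the serving distance $D=d$, whose density is given by Lemma~\ref{lem:PDFofD}, and write the SINR of the typical onshore user as
\begin{equation*}
{\rm SINR}_I(r_U)=\frac{p_{I,0}W_{I,1}d^{-\alpha_I}}{N_0+I_{\rm agg}},\qquad I_{\rm agg}=\sum_{\textbf{x}_i\in\Phi\setminus\{\textbf{x}_1\}}p_{I,0}W_{I,i}d_{I,i}^{-\alpha_I}.
\end{equation*}
The event $\{{\rm SINR}_I>\tau\}$ is then equivalent to $\{W_{I,1}>\tau d^{\alpha_I}(N_0+I_{\rm agg})/p_{I,0}\}$. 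Conditioned on $D=d$, the interferers form a PPP of density $\lambda_H$ on the part of the plane where their 3D distance to the user exceeds $d$. This follows from the standard nearest-neighbour argument for PPPs, the same one used to derive (\ref{PDFofD}). Moreover, $W_{I,1}$ is independent of $I_{\rm agg}$.

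Because the CCDF (\ref{CCDFWIix}) of $W_{I,1}$ is an infinite series of upper incomplete Gamma functions, averaging it exactly over $I_{\rm agg}$ would require derivatives of the Laplace functional of all orders. To avoid this, we follow the approximation signalled by "$\approx$" in the statement and replace the random interference by its conditional mean. This mean-interference approximation is justified for dense HAPS networks by concentration of the aggregate interference. It gives
\begin{equation*}
P_{\rm cov}\approx\int_{h_H}^{\infty}\overline{F}_{W_{I,1}}\Big(\tfrac{\tau d^{\alpha_I}}{p_{I,0}}\big(N_0+\E[I_{\rm agg}\mid D=d]\big)\Big)f_D(d)\,{\rm d}d,
\end{equation*}
and it remains to compute the conditional mean.

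By Campbell's theorem, using $\E[W_{I,i}]=2b_I+\Omega_I$, and working in the 3D distance $z=\sqrt{r^2+h_H^2}$ (so $r\,{\rm d}r=z\,{\rm d}z$), we obtain
\begin{equation*}
\E[I_{\rm agg}\mid D=d]=2\pi\lambda_H p_{I,0}(2b_I+\Omega_I)\int_d^\infty z^{1-\alpha_I}\,{\rm d}z=\frac{2\pi\lambda_H p_{I,0}(2b_I+\Omega_I)}{\alpha_I-2}\,z^{2-\alpha_I}\Big|_{\infty}^{d},
\end{equation*}
which requires $\alpha_I>2$. Multiplying by $\tau d^{\alpha_I}/p_{I,0}$ and adding the noise term $\tau N_0 d^{\alpha_I}/p_{I,0}$ yields exactly $g_I(d)$. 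Deconditioning over $f_D$ then gives (\ref{PcovOnshore}).

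The main obstacle is justifying the approximation step rather than the algebra. An exact route would apply the Laplace functional of the shadowed Rician interference to each incomplete-Gamma term, which is intractable. We therefore accept the mean-interference substitution, or equivalently Jensen applied to a locally linearized CCDF, as the source of the stated "$\approx$". We also note that the approximation can be checked against the Monte Carlo results in Sec.~\ref{sec:simulation}.
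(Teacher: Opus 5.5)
Your proposal is correct and follows essentially the same route as the paper: condition on the nearest-HAPS distance $D=d$ from Lemma~\ref{lem:PDFofD}, replace the aggregate interference by its conditional mean computed with Campbell's theorem and $\mathbb{E}[W_{I,k}]=2b_I+\Omega_I$, and then decondition over $f_D(d)$. The only difference is that you add heuristic motivation for the mean-interference step (concentration, a linearized CCDF), whereas the paper simply adopts that approximation without further justification.
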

\begin{IEEEproof}
    See Appendix~\ref{app:onshore}.
\end{IEEEproof}
Theorem \ref{theo:onshore} describes the coverage probability of onshore users when the deployment range of HAPSs is infinite. Next, we adopt similar approaches to introduce the coverage probability of offshore users.

\subsection{Offshore Users: Rician Environment}
For the users in remote maritime areas, we have $r_U\gg r_I$. As shown in Fig. \ref{fig:Offshore}, the building cluster on the island does not shadow the LoS paths between the HAPSs and the user. The reflections on the sea surface and evaporation ducting lead to multiple NLoS components. Therefore, we assume that the channel gains of small-scale fading between HAPSs and offshore users follow a Rician distribution.
The power of signal transmitted from HAPS $i$ to the typical offshore user can be modeled as:
\begin{equation}
    P_{O,i}(d_i)=p_{t}(\frac{\lambda}{4\pi d_{O,i}})^{\alpha_O}g_tg_r s_O W_{O,i}=p_{O,0}W_{O,i}d_{O,i}^{-\alpha_O},
\label{POidi}
\end{equation}
where
    $p_{O,0}=p_{t}(\frac{\lambda}{4\pi })^{\alpha_O}g_tg_r s_O$
is the received power at a unit distance regardless of the channel gains. $s_O$ is the atmosphere attenuation in Rician channels, $\alpha_{O}$ is the path loss exponent for Rician channels, and $d_{O,i}$ is the distance between the typical offshore user and HAPS located at $\textbf{x}_i$.
The Rician channel gain is represented using $W_{O,i}$ and the PDF of $W_{O,i}$ is:
\begin{equation}
    f_{W_{O,i}}(x)=\frac{1}{2b_{O}}\exp \big(-\frac{x+\Omega_{O}}{2b_{O}} \big)I_{0} \big(\frac{\sqrt{\Omega_{O}x}}{b_{O}} \big),
\label{fWOix}
\end{equation}
where $b_O$ is the standard deviation of NLoS component, and $\Omega_{O}$ is the strength of direct LoS component. The CDF of $W_{O,i}$ is:
\begin{equation}
\begin{array}{r@{}l}
    F_{W_{O,i}}(x)& =1-Q_{1} \big(\sqrt{\frac{\Omega_{O}}{b_{O}}},\sqrt{\frac{x}{b_{O}}} \big),
\end{array}
\label{FWOix}
\end{equation}
and the CCDF of $W_{O,i}$ is
\begin{equation}
\begin{array}{r@{}l}
    \overline{F}_{W_{O,i}}(x)& =1-F_{W_{O,i}}(x)=Q_{1} \big(\sqrt{\frac{\Omega_{O}}{b_{O}}},\sqrt{\frac{x}{b_{O}}} \big),
\end{array}
\label{CCDFWOix}
\end{equation}
where
    $Q_1(a,b)=\int_{b}^{\infty}t\exp(-\frac{t^2+a^2}{2})I_{0}(at){\rm d}t$
is the Marcum Q-function of order $\nu=1$ and $I_0$ is the modified Bessel function of first kind of order $\nu-1=0$. The expectation of $W_{O,i}$ is $\mathbb{E}[W_{O,i}]=2b_O+\Omega_O$. Because all HAPSs operate in the Rician channels, the closest HAPS can provide the strongest average power, similar to the onshore case.  Therefore, the closest HAPS acts as the serving HAPS to the user and other HAPSs act as interfering ones. Based on this, we introduce the coverage probability for typical offshore users in Theorem \ref{theo:offshore}.

\begin{theorem}
When $r_U \gg r_I$, the channel gains between the typical user and HAPSs can be modeled using Rician distributions, where the coverage probability is:
\begin{equation}
\begin{array}{r@{}l}
    P_{\rm cov}(r_U,\tau)&\overset{\triangle}{=}\P\{{\rm SINR}_{O}(r_U)>\tau\}\\
    &\approx \int_{h_H}^{\infty} \overline{F}_{W_{O,i}}\big(g_{O}(d)\big)f_{D}(d){\rm d}d,
\end{array}
\label{PcovOffshore}
\end{equation}
with
    $g_{O}(d)\approx \frac{\tau N_{0}}{p_{O,0}}d^{\alpha_{O}}+\frac{2\pi\lambda_{H}\tau (2b_{O}+\Omega_{O})d^{\alpha_{O}}}{\alpha_{O}-2}z^{2-\alpha_{O}} \big|_{\infty}^{d}$,
and $\overline{F}_{W_{O,1}}(x)$ is given in (\ref{CCDFWOix}).
\label{theo:offshore}
\end{theorem}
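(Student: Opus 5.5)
The argument mirrors the onshore case (Theorem~\ref{theo:onshore}), with the shadowed Rician gain replaced by the Rician gain $W_{O,i}$. The approach is to condition on the serving distance, replace the aggregate interference by its conditional mean, and then read off the coverage probability from the CCDF of $W_{O,1}$.

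First, by the offshore assumption all HAPSs see i.i.d.\ Rician gains with common mean $\mathbb{E}[W_{O,i}]=2b_O+\Omega_O$ and the same path-loss law (\ref{POidi}). Hence the nearest HAPS maximizes the average received power and serves the user, and its distance $D$ has the PDF in Lemma~\ref{lem:PDFofD}. Conditioning on $D=d$, the SINR is
\begin{equation}
{\rm SINR}_O=\frac{p_{O,0}W_{O,1}d^{-\alpha_O}}{N_0+I_O},\qquad I_O=\sum_{\textbf{x}_i\in\Phi\setminus\{\textbf{x}_1\}}p_{O,0}W_{O,i}d_{O,i}^{-\alpha_O}.
\end{equation}
The event ${\rm SINR}_O>\tau$ is then equivalent to $W_{O,1}>\frac{\tau d^{\alpha_O}}{p_{O,0}}(N_0+I_O)$.

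The key (approximate) step is to replace $I_O$ by its conditional mean given $D=d$. This sidesteps the Laplace functional, which for the Marcum-$Q$ CCDF does not yield a tractable closed form. By Slivnyak's theorem, conditioned on $D=d$ the interferers form a PPP of density $\lambda_H$ outside the disc whose 3D distance exceeds $d$. Using Campbell's theorem with the change of variables $z=\sqrt{r^2+h_H^2}$, so that $r\,{\rm d}r=z\,{\rm d}z$, gives
\begin{equation}
\mathbb{E}[I_O\mid D=d]=2\pi\lambda_H p_{O,0}(2b_O+\Omega_O)\int_d^\infty z^{1-\alpha_O}{\rm d}z=\frac{2\pi\lambda_H p_{O,0}(2b_O+\Omega_O)}{\alpha_O-2}z^{2-\alpha_O}\Big|_\infty^d ,
\end{equation}
which requires $\alpha_O>2$. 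Multiplying by $\tau d^{\alpha_O}/p_{O,0}$ and adding the noise term yields exactly $g_O(d)$. The conditional coverage is therefore approximately $\overline{F}_{W_{O,1}}(g_O(d))=Q_1\big(\sqrt{\Omega_O/b_O},\sqrt{g_O(d)/b_O}\big)$ by (\ref{CCDFWOix}). De-conditioning with $f_D(d)$ over $h_H<d<\infty$ gives (\ref{PcovOffshore}).

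The main obstacle is justifying the mean-interference substitution, which is the source of the ``$\approx$''. One point is that it is not a bound in a fixed direction, since $\overline{F}_{W_{O,1}}$ evaluated at a random argument is neither convex nor concave in general. The other point concerns the validity regime: it is accurate when $\lambda_H$ is large, because $I_O$ concentrates around its mean, or when noise dominates. I would state these caveats, note that the simulations in Sec.~\ref{sec:simulation} validate the approximation, and otherwise follow the template of Appendix~\ref{app:onshore}.
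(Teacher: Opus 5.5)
Your proposal is correct and follows essentially the same route as the paper's proof in Appendix~\ref{app:offshore}. Both condition on the serving distance $D=d$ from Lemma~\ref{lem:PDFofD}, replace the aggregate interference by its mean via Campbell's theorem over the region beyond $d$ (yielding $p_{O,0}(2b_O+\Omega_O)\frac{2\pi\lambda_H}{\alpha_O-2}z^{2-\alpha_O}\big|_\infty^d$), evaluate the Rician CCDF $Q_1\big(\sqrt{\Omega_O/b_O},\sqrt{g_O(d)/b_O}\big)$, and de-condition with $f_D(d)$; your remarks on $\alpha_O>2$ and on the mean-interference substitution not being a one-sided bound are sound additions that the paper does not state.
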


\begin{IEEEproof}
    See Appendix~\ref{app:offshore}.
\end{IEEEproof}

However, for nearshore users, the channel environment is more complex than in the onshore and offshore cases. Therefore, in the next section, we introduce the coverage analysis for nearshore users.

\section{Downlink Analysis for Nearshore}\label{sec:nearshore}
Different from the onshore and offshore cases, nearshore users receive the signals from HAPSs with different LoS conditions. In this section, we propose to divide the deployment area of HAPSs into a shadowed Rician region and a Rician region to evaluate the coverage probability and the approximations.
\subsection{Nearshore Users: Hybrid Channel Environment}
For the nearshore users, we have $r_U>r_I$ but $r_U$ is not much larger than $r_I$. As shown in Fig. \ref{fig:Nearshore}, when the LoS path between one HAPS and the typical user is shadowed by the cylindrical building cluster, we assume that the channel gain follows the shadowed Rician distribution in (\ref{fWIix}) and (\ref{FWIix}), and the signal propagation model is the same as in (\ref{PIidi}). Otherwise, the channel gain follows the Rician distribution in (\ref{fWOix}) and (\ref{FWOix}), and the signal propagation model is the same as in (\ref{POidi}). The comparison between the channel environments of onshore, nearshore and offshore user is shown in Fig. \ref{fig:onnearoffshore}. 

We show the 3D illustration and the 2D illustration for different channel regions of HAPSs in Fig. \ref{fig:3DNearshore} and Fig. \ref{fig:Accurate}, considering a typical nearshore user located at $\textbf{x}_U=(-r_U,0,0)$. We project the top of the island at $(-r_U,0,h_I)$ from the typical user to the HAPS plane at altitude $h_H$, which is also a circular area centered at $\textbf{o}_I'=(r_U(\frac{h_H-h_I}{h_I}),0,h_H)$ with radius $r_U\frac{h_H}{h_I}$:
    $\mathcal{A}= \big\{\textbf{y}_l:\|\textbf{y}_l-\textbf{o}_I'\|\leq r_U\frac{h_H}{h_I} \big\}$.
We consider $\textbf{z}_U'=(-r_U,0,h_H)$ on the HAPS plane which is directly above the typical user $\textbf{z}_U$. From the perspective of $\textbf{z}_U'$, the area $\mathcal{A}$ and the areas blocked by $\mathcal{A}$ form the shadowed Rician region, while other areas form the Rician region. 

\begin{figure}[ht]
    \centering
    \includegraphics[width=0.85\linewidth]{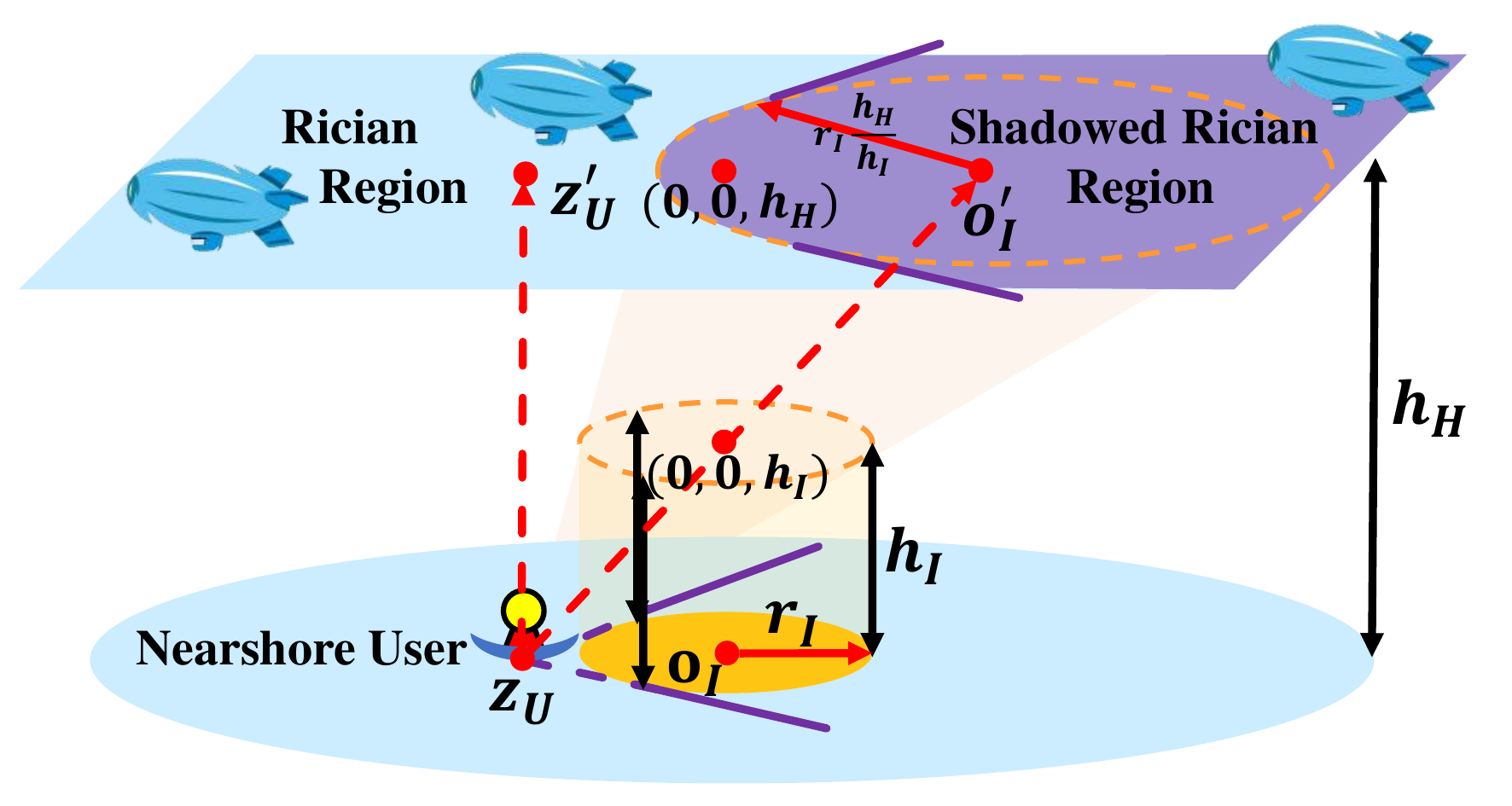}
    \caption{3D Illustration of hybrid channel environment for typical nearshore user.}
    \label{fig:3DNearshore}
\end{figure}

\begin{figure}[ht]
    \centering
    \includegraphics[width=0.85\linewidth]{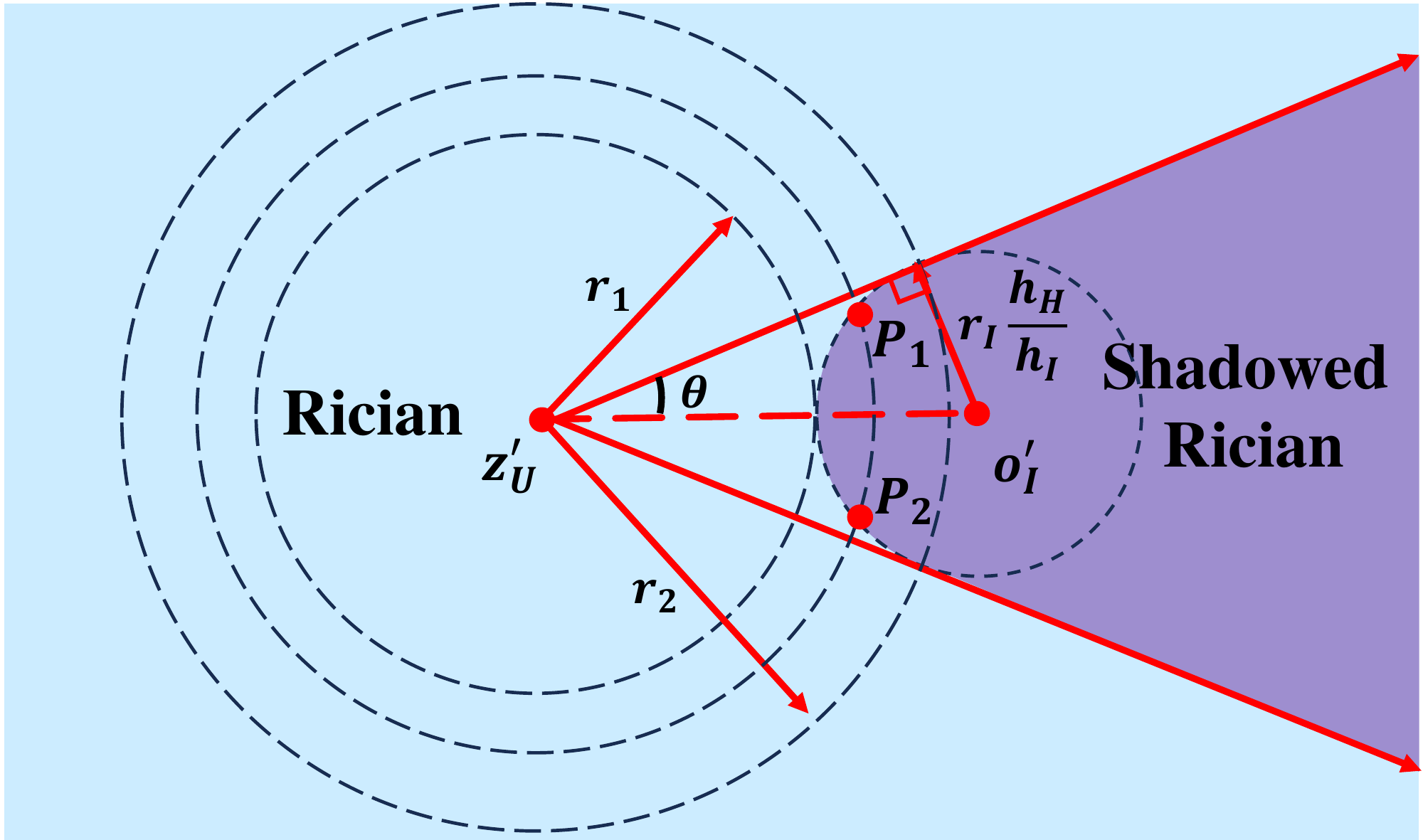}
    \caption{2D Illustration of hybrid channel environment for typical nearshore user.}
    \label{fig:Accurate}
\end{figure}

To calculate the coverage probability, we assume that the locations of the HAPSs operating in the Rician channels follow a PPP with density $\lambda_{H}$ and the locations of the HAPSs operating in the shadowed Rician channels follow another PPP with density $\lambda_{H}$. Because the closest HAPS does not always provide the highest signal power in the hybrid environment, we realize the selection strategy in two steps: (i) select the closest HAPS in the Rician channel region and the closest HAPS in the shadowed Rician channel region, and (ii) compare their average power and select the stronger one. We assume that the closest HAPS with a Rician channel is located at $\textbf{x}_{1}^{O}$, whose distance to the typical user is $d_{O,1}$. Similarly, the closest HAPS with a shadowed Rician channel is located at $\textbf{x}_{1}^{I}$, where the distance to the typical user is $d_{I,1}$. Therefore, we construct below two transmitter (TX) selection cases:
    $\textbf{H}_{O}\overset{\triangle}{=}\{{\rm Select\;the\;HAPS\;located\;at}\;\textbf{x}_{1}^{O}\}$,
and
    $\textbf{H}_{I}\overset{\triangle}{=}\{{\rm Select\;the\;HAPS\;located\;at}\;\textbf{x}_{1}^{I}\}$.
Because the TX selection depends on the average power of these two closest HAPSs, we formulate the TX selection as: 
\begin{equation}
    p_{O,0}\mathbb{E}[W_{O,1}] d_{O,1}^{-\alpha_O}\underset{\textbf{H}_I}{\overset{\textbf{H}_O}{\gtrless}} p_{I,0}\mathbb{E}[W_{I,1}] d_{I,1}^{-\alpha_I}.
\label{hypothesis}
\end{equation}

We define two functions:
\begin{equation}
\begin{array}{r@{}l}
\mathcal{T}_{I}(d_O)
& = \big(\frac{p_{I,0}(\Omega_I+2b_I)}{p_{O,0}(\Omega_O+2b_O)} \big)^{\frac{1}{\alpha_I}} d_{O}^{\frac{\alpha_O}{\alpha_I}},
\end{array}
\end{equation}
and 
\begin{equation}
\begin{array}{r@{}l}
\mathcal{T}_{O}(d_I)
& = \big(\frac{p_{O,0}(\Omega_O+2b_O)}{p_{I,0}(\Omega_I+2b_I)} \big)^{\frac{1}{\alpha_O}} d_{I}^{\frac{\alpha_I}{\alpha_O}},
\end{array}
\end{equation}
so the two TX selection cases can be rewritten using:
\begin{equation}
     d_{O,1}\underset{\textbf{H}_O}{\overset{\textbf{H}_I}{\gtrless}} \mathcal{T}_{O}(d_{I,1})
{\;\rm or\;}
     d_{I,1}\underset{\textbf{H}_I}{\overset{\textbf{H}_O}{\gtrless}} \mathcal{T}_{I}(d_{O,1}),
\end{equation}
and the coverage probability can be calculated using:
\begin{equation}
    P_{\rm cov}(r_U,\tau)=P_{\rm cov}(r_U,\tau,\textbf{H}_O)+P_{\rm cov}(r_U,\tau,\textbf{H}_I).
\label{PcovSum}
\end{equation}

Especially, if $p_{I,0}\mathbb{E}[W_{I,1}]=p_{O,0}\mathbb{E}[W_{O,1}]$ and $\alpha_I=\alpha_O$, the selection strategy is reduced to the closest TX selection strategy. When $\mathbf{H}_O$ is always satisfied, the typical user is in the Rician channel environment, which is the offshore case.

\subsection{Coverage Probability for Nearshore}
To obtain the exact expression of the coverage probability for nearshore users, we first derive the distribution of the distance $d_{O,1}$ between the typical user $\textbf{z}_{U}$ and its closest HAPS in the Rician region located at $\textbf{x}_{1}^{O}$. We analyze the characteristics of the Rician region:\\
\textbf{Rician Region:} As shown in Fig. \ref{fig:Accurate}, from the perspective of $\textbf{z}_U'$, the Rician region can be divided into: (i) a sector area centered at $\textbf{z}_U'$, with infinite radius and central angle $2\pi-2\theta$ where $\theta=\arcsin\frac{r_I}{r_U}$, and (ii) a sector area centered at $\textbf{z}_U'$, with radius $r_1=\frac{h_H}{h_I}(r_U-r_I)$ and central angle $2\theta$, except for the regions in $\mathcal{A}$. Therefore, as the search radius increases from $r_1$, the range of directional angle with LoS decreases, until the search radius is greater than $r_2=\frac{h_H}{h_I}\sqrt{r_U^2-r_I^2}$. From the perspective of the typical user located at $\textbf{z}_{U}$, the critical values of $d_{O,1}$ are $d_{th,1}=\sqrt{r_1^2+h_H^2}$ and $d_{th,2}=\sqrt{r_2^2+h_H^2}$.

Based on this, we can obtain the CDF and PDF of $d_{O,1}$.

\begin{lemma}
    Assume that $d_{O,1}$ is the distance between the typical nearshore user located at $\textbf{z}_U$ and its closest HAPS in the Rician region located at $\textbf{x}_1^{O}$. The CDF of $d_{O,1}$ is shown in (\ref{CDFofdI}) on the top of the next page, and the PDF of $d_{O,1}$ is
\begin{equation}
\begin{array}{l@{}l}
&f(d_{O,1})=2\lambda_{H}d_{O,1}(1-F(d_{O,1}))\\
&\times (\pi-\phi(d_{O,1})\mathbbm{1}(d_{th,1}<d_{O,1}<d_{th,2})-\theta\mathbbm{1}(d_{O,1}>d_{th,2})),
\end{array}
\end{equation}
where
\begin{equation}
\begin{array}{r@{}l}
    \phi&(d)= \arccos \big(\frac{h_I^2 d^2-h_I^2 h_H^2+r_U^2 h_H^2-r_I^2 h_H^2}{2\sqrt{d^2-h_H^2} r_U h_H h_I} \big),
\end{array}
\label{phi}
\end{equation}
and 
\begin{equation}
\begin{array}{r@{}l}
    \eta&(d)= \arccos \big(\frac{r_U^2h_H^2+r_I^2h_H^2-d^2h_I^2+h_H^2h_I^2}{2r_I r_Uh_H^2} \big).
\end{array}
\label{eta}
\end{equation}
\label{lem:PDFofdO1}
\end{lemma}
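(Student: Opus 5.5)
We follow the standard void-probability argument for a PPP restricted to a region. The CDF of $d_{O,1}$ is $F(d)=1-\exp\big(-\lambda_H |\mathcal{R}_O\cap\mathcal{B}(d)|\big)$. Here $\mathcal{R}_O$ is the Rician region on the HAPS plane, and $\mathcal{B}(d)$ is the disk on that plane centred at $\textbf{z}_U'$ with radius $\rho=\sqrt{d^2-h_H^2}$, i.e. the set of HAPS locations within distance $d$ of $\textbf{z}_U$. The task is therefore to compute the area $S(\rho)=|\mathcal{R}_O\cap\mathcal{B}(\rho)|$ in polar coordinates $(\rho,\psi)$ around $\textbf{z}_U'$. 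Once $F$ is in the form $1-e^{-\lambda_H S}$, the PDF follows by the chain rule: $f(d)=\lambda_H S'(\rho)\frac{\dd\rho}{\dd d}(1-F(d))$ with $\frac{\dd\rho}{\dd d}=d/\rho$.

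To get $S(\rho)$, we write $S(\rho)=\int_0^{\rho} L(r)\,r\,\dd r$, where $L(r)$ is the angular measure of the LoS directions on the circle of radius $r$ about $\textbf{z}_U'$.

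\emph{Case $r\le r_1$.} The circle does not reach the shadow region, which starts at the near edge of $\mathcal{A}$ at distance $\frac{h_H}{h_I}(r_U-r_I)$. Hence $L=2\pi$.

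\emph{Case $r\ge r_2$.} The circle lies beyond the tangent points of $\mathcal{A}$, so the whole cone of half-angle $\theta=\arcsin(r_I/r_U)$ facing the island is shadowed, both inside $\mathcal{A}$ and behind it. Hence $L=2\pi-2\theta$.

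\emph{Case $r_1<r<r_2$.} The shadowed arc is the part of the circle inside $\mathcal{A}$. The circle of radius $\frac{h_H}{h_I}r_I$ about $\textbf{o}_I'$ sits at centre distance $\frac{h_H}{h_I}r_U$ from $\textbf{z}_U'$. The law of cosines gives the half-angle $\phi$ at $\textbf{z}_U'$:
\[
\cos\phi=\frac{r^2+\frac{h_H^2}{h_I^2}(r_U^2-r_I^2)}{2r\frac{h_H}{h_I}r_U}.
\]
Substituting $r^2=d^2-h_H^2$ and clearing $h_I^2$ reproduces (\ref{phi}). Thus $L=2\pi-2\phi(d)$.

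Plugging these into $S'(\rho)=\rho L(\rho)$ and converting gives
\[
f(d)=\lambda_H d\,L\,(1-F)=2\lambda_H d\,(1-F)\big(\pi-\phi\mathbbm{1}(\cdot)-\theta\mathbbm{1}(\cdot)\big),
\]
which is the claimed PDF.

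For the CDF (\ref{CDFofdI}) we integrate the piecewise $L$. The pieces $2\pi$ and $2\pi-2\theta$ integrate to $\pi\rho^2$ and $\pi\rho^2-\theta(\rho^2-r_2^2)$ plus constants. The middle range is where the main obstacle lies, namely $\int \phi(r)\,r\,\dd r$. Rather than integrating the arccos directly, I would compute the intersection area geometrically: the area of $\mathcal{B}(\rho)\cap\mathcal{A}$ is a lens between two circles, equal to the sum of two circular segments. The segment of $\mathcal{A}$ has half-angle at $\textbf{o}_I'$ given by the law of cosines from the other vertex. This produces exactly $\eta(d)$ in (\ref{eta}), which explains why $\eta$ appears in the lemma although it is absent from the PDF. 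The lens area is then
\[
\rho^2\big(\phi-\tfrac12\sin2\phi\big)+\big(\tfrac{h_H}{h_I}r_I\big)^2\big(\eta-\tfrac12\sin2\eta\big),
\]
so $F(d)=1-\exp\big(-\lambda_H(\pi\rho^2-\text{lens})\big)$ on $(d_{th,1},d_{th,2})$. For $d>d_{th,2}$ the removed area is the lens at $\rho=r_2$ plus the cone-sector part beyond it. Continuity at $d_{th,1}$ and $d_{th,2}$ (where $\phi\to0$ and $\phi\to\theta$ respectively) serves as a check.

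Finally, we confirm consistency by differentiating the lens formula. The standard identity for intersecting circles says the derivative of the lens area with respect to $\rho$ is $2\rho\phi$, which recovers the PDF's middle term.
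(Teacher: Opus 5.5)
Your proposal is correct and follows essentially the same route as the paper: the void probability of the Rician part of the disk of radius $\sqrt{d^2-h_H^2}$ about $\textbf{z}_U'$, the PDF as $\lambda_H$ times the LoS angular range times $d$ times the CCDF, and the same law-of-cosines angles $\phi$ (at $\textbf{z}_U'$) and $\eta$ (at $\textbf{o}_I'$). The only cosmetic difference is that you subtract the lens $\mathcal{B}(\rho)\cap\mathcal{A}$ from the full disk, while the paper adds the sector of angle $2\pi-2\phi$ to the kite $\textbf{z}_U'P_1\textbf{o}_I'P_2$ minus a sector of $\mathcal{A}$; using $\rho\sin\phi=\frac{h_H}{h_I}r_I\sin\eta$, the triangle terms $\frac12\rho^2\sin2\phi+\frac12(\frac{h_H}{h_I}r_I)^2\sin2\eta$ in your lens formula sum to the kite area $\sin\eta\, r_Ir_U\frac{h_H^2}{h_I^2}$, which gives exactly (\ref{CDFofdI}) (and the $\cot\theta$ form at $\rho=r_2$).
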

\begin{IEEEproof}
    See Appendix \ref{app:PDFofdO1}.
\end{IEEEproof}

\begin{figure*}[t]
\begin{equation}
F(d_{O,1})=\left\{
\begin{array}{l@{}l}
1-\exp(-\lambda_{H} \pi d_{O,1}^2+\lambda_{H}\pi h_H^2)&,h_H\leq d_{O,1}\leq d_{th,1}\\
1-\exp(-\lambda_{H}(\pi-\phi(d_{O,1}))(d_{O,1}^2-h_H^2)-\lambda_{H}\sin\eta(d_{O,1})r_Ir_U\frac{h_H^2}{h_I^2}+\lambda_{H}\eta(d_{O,1})r_I^2\frac{h_H^2}{h_I^2})&,d_{th,1}<d_{O,1}<d_{th,2}\\
1-\exp(-\lambda_{H}(\pi-\theta)(d_{O,1}^2-h_H^2)-\lambda_{H}\cot\theta r_I^2\frac{h_H^2}{h_I^2}+\lambda_{H}(\frac{\pi}{2}-\theta)r_I^2\frac{h_H^2}{h_I^2})&,d_{O,1}\geq d_{th,2}
\end{array}
\right.
\label{CDFofdI}
\end{equation}
\begin{equation}
F(d_{I,1})=\left\{
\begin{array}{l@{}l}
1-\exp(-\lambda_{H}\phi(d_{O,1})(d_{I,1}^2-h_H^2)+\lambda_{H}\sin\eta(d_{I,1})r_Ir_U\frac{h_H^2}{h_I^2}-\lambda_{H}\eta(d_{I,1})r_I^2\frac{h_H^2}{h_I^2})&,d_{th,1}\leq d_{I,1}<d_{th,2}\\
1-\exp(-\lambda_{H}\theta(d_{I,1}^2-h_H^2)+\lambda_{H}\cot\theta r_I^2\frac{h_H^2}{h_I^2}-\lambda_{H}(\frac{\pi}{2}-\theta)r_I^2\frac{h_H^2}{h_I^2})&,d_{I,1}\geq d_{th,2}
\end{array}
\right.
\label{CDFofdO}
\end{equation}
\end{figure*}
Next, we analyze the characteristics of the shadowed Rician region:\\
\textbf{Shadowed Rician Region:} As shown in Fig. \ref{fig:Accurate}, from the perspective of $\textbf{z}_U'$, the shadowed Rician region can be divided into: (i) the intersection between $\mathcal{A}$ and the sector area centered at $\textbf{z}_U'$, with radius $r_2=\frac{h_H}{h_I}\sqrt{r_U^2-r_I^2}$ and central angle $2\theta$, and (ii) an annular sector area centered at $\textbf{z}_U'$, with inner radius $r_2$, infinite outer radius, and central angle $2\theta$. From the perspective of the typical user located at $\textbf{z}_U$, the range of $d_{I,1}$ is from $d_{th,1}=\sqrt{r_1^2+h_H^2}$ to infinity, where the critical value is $d_{th,2}=\sqrt{r_2^2+h_H^2}$.

Based on this, we can obtain the CDF and PDF of $d_{I,1}$.

\begin{lemma}
    Assume that $d_{I,1}$ is the distance between the typical nearshore user located at $\textbf{z}_U$ and its closest HAPS in the shadowed Rician region located at $\textbf{x}_{1}^{I}$. The CDF of $d_{I,1}$ is shown in (\ref{CDFofdO}) on the top of the next page, and the PDF of $d_{I,1}$ is:
\begin{equation}
\begin{array}{l@{}l}
&f(d_{I,1})=2\lambda_{H}d_{I,1}(1-F(d_{I,1}))\\
&\times (\phi(d_{I,1})\mathbbm{1}(d_{th,1}<d_{I,1}<d_{th,2})+\theta\mathbbm{1}(d_{I,1}>d_{th,2})),
\end{array}
\end{equation}
where $\phi(d_{I,1})$ and $\eta(d_{I,1})$ have been shown in (\ref{phi}) and (\ref{eta}).
\label{lem:PDFofdI1}
\end{lemma}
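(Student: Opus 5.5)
The approach is the void-probability argument used for Lemma \ref{lem:PDFofD} and Lemma \ref{lem:PDFofdO1}, applied to the HAPSs in the shadowed Rician region. Write $\rho=\sqrt{d^2-h_H^2}$ for the horizontal radius on the HAPS plane, measured around $\textbf{z}_U'$, that corresponds to a 3D distance $d$ from $\textbf{z}_U$. Let $\mathcal{S}$ denote the shadowed Rician region. Because the HAPSs form a homogeneous PPP of density $\lambda_H$, we have
\begin{equation*}
1-F(d_{I,1})=\P\{\Phi\cap\mathcal{S}\cap\mathcal{B}(\textbf{z}_U',\rho)=\emptyset\}=\exp\big(-\lambda_H|\mathcal{S}\cap\mathcal{B}(\textbf{z}_U',\rho)|\big),
\end{equation*}
where $\mathcal{B}(\textbf{z}_U',\rho)$ is the disc of radius $\rho$ centred at $\textbf{z}_U'$. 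So the whole task is to compute the area $|\mathcal{S}\cap\mathcal{B}(\textbf{z}_U',\rho)|$ as a function of $\rho$. The key observation is that the Rician and shadowed regions partition the plane. Hence this area equals $\pi\rho^2$ minus the Rician area already computed in the proof of Lemma \ref{lem:PDFofdO1}. That immediately gives the exponents in (\ref{CDFofdO}), which are exactly $\pi\rho^2$ minus the exponents in (\ref{CDFofdI}).

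I would nevertheless verify the area directly, case by case.

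\textbf{Case $\rho<r_1$.} The disc does not reach $\mathcal{A}$, whose nearest point to $\textbf{z}_U'$ lies at distance $r_1=\frac{h_H}{h_I}(r_U-r_I)$. So $F=0$, which is why the support starts at $d_{th,1}$.

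\textbf{Case $r_1\le\rho<r_2$.} The area is that of the lens $\mathcal{B}(\textbf{z}_U',\rho)\cap\mathcal{A}$. The scaled island disc $\mathcal{A}$ has centre at distance $R=\frac{h_H}{h_I}r_U$ from $\textbf{z}_U'$ and radius $r=\frac{h_H}{h_I}r_I$.
\begin{itemize}
\item By the law of cosines, the half-angle subtended at $\textbf{z}_U'$ by the intersection points is $\phi$ in (\ref{phi}), and the half-angle at $\textbf{o}_I'$ is $\eta$ in (\ref{eta}). One checks this by substituting $\rho^2=d^2-h_H^2$ and multiplying through by $h_I^2/h_H^2$.
\item The standard lens formula gives
\begin{equation*}
|\text{lens}|=\phi\rho^2+\eta r^2-Rr\sin\eta ,
\end{equation*}
since $\rho\sin\phi=r\sin\eta$. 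This reproduces the first line of (\ref{CDFofdO}), with signs matching once the lemma's $\phi(d_{O,1})$ is read as $\phi(d_{I,1})$.
\end{itemize}

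\textbf{Case $\rho\ge r_2$.} The region is the full sector of half-angle $\theta=\arcsin(r_I/r_U)$ bounded by the two tangent lines, minus the kite between $\textbf{z}_U'$ and the tangent points. The kite area is $Rr\cos\theta=r^2\cot\theta$, up to the tangent-point details. The far arc of $\mathcal{A}$ is included, contributing $(\frac{\pi}{2}-\theta)r^2$, and the shadow behind $\mathcal{A}$ contributes the rest of the sector. Collecting terms gives
\begin{equation*}
\theta\rho^2-r^2\cot\theta+\big(\tfrac{\pi}{2}-\theta\big)r^2 .
\end{equation*}
I would check continuity at $\rho=r_2$, where $\phi=\theta$ and $\eta=\frac{\pi}{2}-\theta$.

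For the PDF, I differentiate $F=1-\exp(-\lambda_H A(\rho(d)))$ with respect to $d$, which gives $f=\lambda_H A'(\rho)\frac{d\rho}{dd}(1-F)$. Two facts then yield the stated form $2\lambda_H d(1-F)(\phi\mathbbm{1}+\theta\mathbbm{1})$:
\begin{itemize}
\item $A'(\rho)$ is the arc length of the circle of radius $\rho$ lying inside $\mathcal{S}$. This equals $2\phi\rho$ in the middle case, because the derivative terms from $\eta$ and $\phi$ cancel by the envelope property of the lens area. It equals $2\theta\rho$ in the last case.
\item $\rho\,\frac{d\rho}{dd}=d$.
\end{itemize}

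The main obstacle is the geometric bookkeeping in the lens case. One must confirm that, for $\rho$ below $r_2$, the shadowed part inside the disc is exactly the lens, with no shadow region behind $\mathcal{A}$ yet inside the disc. This holds because points behind $\mathcal{A}$ are farther than the tangent distance $r_2$. I would also track the signs of the $\sin\eta$ and $\eta$ terms carefully.
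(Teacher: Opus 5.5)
Your proposal is correct and follows essentially the same route as the paper. Both use the void probability of the HAPS process restricted to the shadowed region, with the area computed as the lens $\phi\rho^2+\eta r^2-Rr\sin\eta$ (the paper's $S_3-S_2$) for $r_1\le\rho<r_2$, and as $\theta\rho^2$ minus the kite plus a circular sector of $\mathcal{A}$ for $\rho\ge r_2$, followed by differentiation; you justify that differentiation more carefully than the paper does, through the arc-length (envelope) argument.

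One wording fix: the $(\frac{\pi}{2}-\theta)r^2$ term you add back is the sector of $\mathcal{A}$ at $\textbf{o}_I'$ with central angle $\pi-2\theta$ bounded by the \emph{near} arc between the tangent points, i.e. the part of $\mathcal{A}$ lying inside the kite. It is not a ``far arc''.
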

\begin{IEEEproof}
    See Appendix \ref{app:PDFofdI1}.
\end{IEEEproof}

Because HAPSs are separated into two independent PPPs, $d_{O,1}$ and $d_{I,1}$ are independent, and the joint PDF can be written as the multiplication of PDFs of $d_{O,1}$ and $d_{I,1}$:
\begin{equation}
    f(d_{O,1},d_{I,1})=f(d_{O,1})f(d_{I,1}),
\end{equation}
and we can derive the coverage probability:
\begin{theorem}
    The coverage probability of nearshore users can be formulated using
\begin{equation}
    P_{\rm cov}(r_U,\tau)=P_{\rm cov}(r_U,\tau,\textbf{H}_O)+P_{\rm cov}(r_U,\tau,\textbf{H}_I).
\end{equation}

The coverage probability of nearshore users with $\textbf{H}_O$ is:
\begin{equation}
\begin{array}{r@{}l}
    P&_{\rm cov}(r_U,\tau,\textbf{H}_O)= \int_{h_H}^{\infty} \int_{\max\{d_{th,1},\mathcal{T}_{I}(d_{O,1})\}}^{\infty}\\
    &\overline{F}_{W_{O,1}}\big(g_{O}^{h}(d_{O,1},d_{I,1})\big) f(d_{O,1})f(d_{I,1})\;\dd d_{I,1}\dd d_{O,1},
\end{array}
\end{equation}
with
    $g_{O}^{h}(d_{O,1},d_{I,1})= \frac{\tau}{p_{O,0}}d_{O,1}^{\alpha_{O}} \big[N_{0}+I(d_{O,1},d_{I,1}) \big]$.

The coverage probability of nearshore users with $\textbf{H}_I$ is:
\begin{equation}
\begin{array}{r@{}l}
    P&_{\rm cov}(r_U,\tau,\textbf{H}_I)= \int_{d_{th,1}}^{\infty}\int_{\max\{h_H,\mathcal{T}_{O}(d_{I,1})\}}^{\infty}\\ &\overline{F}_{W_{I,1}}\big(g_I^{h}(d_{O,1},d_{I,1})\big)f(d_{O,1})f(d_{I,1})\;\dd d_{O,1}\dd d_{I,1},
\end{array}
\end{equation}
with 
    $g_{I}^{h}(d_{O,1},d_{I,1})= \frac{\tau}{p_{I,0}}d_{I,1}^{\alpha_{I}} \big[N_{0}+I(d_{O,1},d_{I,1}) \big]$.

The interference is approximated using:
\begin{equation}
\begin{array}{r@{}l}
I&(d_{O,1},d_{I,1})\approx   p_{O,0}(2b_{O}+\Omega_{O})\frac{(2\pi-2\theta)\lambda_{H}}{\alpha_O-2}z^{2-\alpha_O} \big|_{\infty}^{d_{O,1}}\\
& +p_{O,0}(2b_{O}+\Omega_{O})\frac{\lambda_{H}}{\alpha_O-2}\int_{-\theta}^{\theta}z^{2-\alpha_O} \big|_{\mathcal{T}_d(t)}^{\min\{d_{O,1},\mathcal{T}_d(t)\}} \dd t\\
& +p_{I,0}(2b_{I}+\Omega_{I})\frac{\lambda_{H}}{\alpha_I-2}\int_{-\theta}^{\theta}z^{2-\alpha_I} \big|_{\infty}^{\max\{d_{I,1},\mathcal{T}_d(t)\}} \dd t,
\end{array}
\end{equation}
where
    $\mathcal{T}_d(t)=\frac{h_H}{h_I}(r_U\cos{t}-\sqrt{r_I^2-r_U^2\sin^2{t}})$.
\label{theo:accurate_for_SA}
\end{theorem}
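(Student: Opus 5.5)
The plan is to condition on the pair of nearest-HAPS distances $(d_{O,1},d_{I,1})$ and split the coverage event according to the selection rule (\ref{hypothesis}). Since the Rician and shadowed Rician HAPSs form independent PPPs on disjoint regions, Lemma~\ref{lem:PDFofdO1} and Lemma~\ref{lem:PDFofdI1} give the joint density $f(d_{O,1})f(d_{I,1})$. By (\ref{PcovSum}),
\[
P_{\rm cov}(r_U,\tau)=\E\big[\mathbbm{1}(\textbf{H}_O)\P\{{\rm SINR}>\tau\mid d_{O,1},d_{I,1}\}\big]+\E\big[\mathbbm{1}(\textbf{H}_I)\P\{{\rm SINR}>\tau\mid d_{O,1},d_{I,1}\}\big].
\]
Under $\textbf{H}_O$ the event is $d_{I,1}>\mathcal{T}_I(d_{O,1})$. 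Because the support of $d_{I,1}$ starts at $d_{th,1}$ and that of $d_{O,1}$ at $h_H$, the integration limits are $d_{O,1}\in(h_H,\infty)$ and $d_{I,1}>\max\{d_{th,1},\mathcal{T}_I(d_{O,1})\}$. For $\textbf{H}_I$ the roles swap, giving $d_{I,1}\in(d_{th,1},\infty)$ and $d_{O,1}>\max\{h_H,\mathcal{T}_O(d_{I,1})\}$.

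Conditioned on $\textbf{H}_O$, the SINR condition reads $p_{O,0}W_{O,1}d_{O,1}^{-\alpha_O}>\tau(N_0+I)$, that is, $W_{O,1}>\frac{\tau}{p_{O,0}}d_{O,1}^{\alpha_O}(N_0+I)$. As in the proofs of Theorems~\ref{theo:onshore} and \ref{theo:offshore}, the random aggregate interference $I$ is replaced by its conditional mean, and the conditional coverage becomes $\overline{F}_{W_{O,1}}(g_O^h)$ by (\ref{CCDFWOix}). The $\textbf{H}_I$ branch is handled the same way, using (\ref{CCDFWIix}) and $g_I^h$. This mean-interference replacement is the step that makes the result an approximation.

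The main work is computing $\E[I\mid d_{O,1},d_{I,1}]$ with Campbell's theorem, separately over each region. I would use polar coordinates centred at $\textbf{z}_U'$ with angle $t$, written in terms of the 3D distance $z$, so that the area element becomes $z\,\dd z\,\dd t$ and radial integrals of $z^{1-\alpha}$ produce $\frac{z^{2-\alpha}}{\alpha-2}$. The region splits by direction.

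\emph{Directions outside $[-\theta,\theta]$} (angular measure $2\pi-2\theta$) are fully Rician. The interferers there are the Rician HAPSs beyond $d_{O,1}$, which gives the first term.

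\emph{Directions $t\in[-\theta,\theta]$} are Rician from the user up to the entry point of the projected disc $\mathcal{A}$. By the law of cosines, the near intersection of the ray with the circle $\partial\mathcal{A}$ is at planar radius $\frac{h_H}{h_I}(r_U\cos t-\sqrt{r_I^2-r_U^2\sin^2 t})$, converted to the distance variable as $\mathcal{T}_d(t)$. Along such a ray, Rician interferers occupy distances $(d_{O,1},\mathcal{T}_d(t))$ when $d_{O,1}<\mathcal{T}_d(t)$, which gives the second term with the $\min$. Shadowed Rician interferers occupy distances beyond $\max\{d_{I,1},\mathcal{T}_d(t)\}$, which gives the third term. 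Each contribution is weighted by $p_{\cdot,0}(2b_\cdot+\Omega_\cdot)$, since $\E[W]=2b+\Omega$.

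I expect the main obstacle to be this geometric bookkeeping. The distance variable has to be converted consistently between planar and 3D coordinates, and the occluded region must be confirmed to be exactly "beyond the near boundary of $\mathcal{A}$" within the cone $|t|\le\theta$; this holds because $\mathcal{A}$ together with its shadow behind it fills the cone beyond the entry point. The exclusion zones (closer than $d_{O,1}$ for the Rician PPP, closer than $d_{I,1}$ for the shadowed PPP) must also be applied correctly in both branches. Once the three Campbell integrals are in place, I would substitute them into $g_O^h$ and $g_I^h$ and integrate against $f(d_{O,1})f(d_{I,1})$ over the two selection regions, which yields the stated expressions.
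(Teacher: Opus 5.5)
Your plan follows the paper's proof in Appendix~\ref{app:accurate_for_SA} step for step. You split by the rule (\ref{hypothesis}), use independence of the two PPPs for $f(d_{O,1})f(d_{I,1})$, and replace the interference by a Campbell mean to obtain $\overline{F}_{W_{O,1}}(g_O^h)$ and $\overline{F}_{W_{I,1}}(g_I^h)$. You then integrate over $d_{I,1}>\max\{d_{th,1},\mathcal{T}_I(d_{O,1})\}$, respectively $d_{O,1}>\max\{h_H,\mathcal{T}_O(d_{I,1})\}$. The paper never derives the three-term interference; it just writes it down. Your direction-by-direction decomposition is therefore more explicit than the paper's argument: outside $|t|\le\theta$ fully Rician, and inside, Rician up to the entry point of $\mathcal{A}$ and shadowed beyond it. That geometric claim is correct.

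The conversion you flagged as the main obstacle, however, does not produce $\mathcal{T}_d(t)$. As defined, $\mathcal{T}_d(t)$ is the \emph{planar} radius from $\textbf{z}_U'$ to the near boundary of $\mathcal{A}$; check that $\mathcal{T}_d(0)=r_1$ and $\mathcal{T}_d(\pm\theta)=r_2$. By contrast, $z$, $d_{O,1}$ and $d_{I,1}$ are user-to-HAPS distances. Done consistently, the limit in the second and third terms is $\sqrt{\mathcal{T}_d(t)^2+h_H^2}$. This equals $d_{th,1}$ at $t=0$ and $d_{th,2}$ at $t=\pm\theta$, in agreement with Lemmas~\ref{lem:PDFofdO1} and \ref{lem:PDFofdI1}. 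The stated formula, and the paper's proof (which has the same mismatch), is recovered only by neglecting $h_H$ against the planar radius.

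Likewise, your three Campbell integrals give the mean power of the HAPSs strictly beyond $d_{O,1}$ and $d_{I,1}$. So the non-selected nearest HAPS (at $d_{I,1}$ under $\textbf{H}_O$, at $d_{O,1}$ under $\textbf{H}_I$) is omitted, although it interferes. The paper omits it too, but this means the expression is not the conditional mean of the aggregate interference that you describe. That mean would carry an extra $p_{I,0}(2b_I+\Omega_I)d_{I,1}^{-\alpha_I}$ in the $\textbf{H}_O$ branch and an extra $p_{O,0}(2b_O+\Omega_O)d_{O,1}^{-\alpha_O}$ in the $\textbf{H}_I$ branch. You should present both simplifications as part of the approximation, not as consequences of Campbell's theorem.
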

\begin{IEEEproof}
    See Appendix~\ref{app:accurate_for_SA}.
\end{IEEEproof}
However, the computational resource requirement for this exact expression of the coverage probability is high due to the triple integral of the angle $t$, $d_{O,1}$ and $d_{I,1}$. Therefore, we propose an approximation method by changing the geometric boundary of the Rician region and the shadowed Rician region.

\subsection{Overestimating the Shadowed Rician region}
To simplify the calculation of coverage probability for nearshore users, we propose an approximation by overestimating the shadowed Rician region.\\
\textbf{Approximation Method 1:} As shown in Fig. \ref{fig:Approx1}, we overestimate the area of the shadowed Rician region. From the perspective of $\textbf{z}_U'$, the shadowed Rician region is an annular sector area centered at $\textbf{z}_U'$, with inner radius $r_1=\frac{h_H}{h_I}(r_U-r_I)$, infinite outer radius and central angle $2\theta$. Therefore, the Rician region can be divided into: (i) a sector area centered at $\textbf{z}_U'$, with infinite radius and central angle $2\pi-2\theta$, and (ii) a sector area centered at $\textbf{z}_U'$, with radius $r_1$ and central angle $2\theta$. From the perspective of $\textbf{z}_U$, the critical value of the distance to HAPSs is $d_{th,1}=\sqrt{r_1^2+h_H^2}$.

Based on this, we can separate the set of HAPSs into two independent PPPs. The approximate CDFs and PDFs of $d_{O,1}$ and $d_{I,1}$ are formulated in Lemma \ref{lem:CDFPDFofdO1dI1}.
\begin{lemma}
    Based on the approximation method 1, the CDF of the distance $d_{O,1}$ between the typical nearshore user and its closest HAPS in the Rician region is:
\begin{equation}
\begin{array}{r@{}l}
    F&(d_{O,1})\\
    &=1-\exp(-\lambda_H (\pi-\theta)d_{O,1}^2-\lambda_{H}\theta\min(d_{O,1},d_{th,1})^2)\\
    &\times \exp(\lambda_H \pi h_H^2),
\end{array}
\end{equation}
and the PDF of $d_{O,1}$ is:
\begin{equation}
\begin{array}{r@{}l}
f&(d_{O,1})=2\lambda_H d_{O,1} \big((\pi-\theta)+\theta\mathbbm{1}(d_{O,1}<d_{th,1}) \big)\\
&\times\big(1-F(d_{O,1})\big),
\end{array}
\end{equation}
for $h_H<d_{O,1}<\infty$.

The CDF of the distance $d_{I,1}$ between the typical nearshore user and its closest HAPS in the shadowed Rician region is:
\begin{equation}
\begin{array}{r@{}l}
F(d_{I,1})=1-\exp(-\lambda_{H}\theta(d_{I,1}^2-d_{th,1}^2)),
\end{array}
\end{equation}
and the PDF of $d_{I,1}$ is
\begin{equation}
\begin{array}{r@{}l}
f(d_{I,1})&=2\lambda_{H}\theta d_{I,1}(1-F(d_{I,1})),
\end{array}
\end{equation}
for $d_{th,1}<d_{I,1}<\infty$.
\label{lem:CDFPDFofdO1dI1}
\end{lemma}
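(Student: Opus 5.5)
The plan is to use the void probability of a homogeneous PPP. HAPSs are modeled as a 2D PPP of density $\lambda_H$ on the plane at altitude $h_H$. A HAPS at horizontal offset $\rho$ from $\textbf{z}_U'$ lies at distance $d=\sqrt{\rho^2+h_H^2}$ from $\textbf{z}_U$. Hence the event $\{d_{O,1}>d\}$ coincides with the event that the Rician region contains no point of $\Phi$ within the planar disk of radius $\sqrt{d^2-h_H^2}$ centered at $\textbf{z}_U'$. This gives
\[
1-F(d_{O,1}) = \exp\big(-\lambda_H\,|\mathcal{R}_O\cap \mathcal{B}(\textbf{z}_U',\sqrt{d_{O,1}^2-h_H^2})|\big),
\]
where $\mathcal{R}_O$ is the Rician region under Approximation Method 1. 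The analogous identity, with the shadowed Rician region $\mathcal{R}_I$, holds for $d_{I,1}$. The whole proof therefore reduces to computing two planar areas. The PDFs then follow by differentiating in $d$.

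For $d_{O,1}$, Approximation Method 1 makes $\mathcal{R}_O$ the union of two pieces, both centered at $\textbf{z}_U'$: a sector of central angle $2\pi-2\theta$ with infinite radius, and a sector of central angle $2\theta$ truncated at radius $r_1$. With $\rho^2=d^2-h_H^2$, the intersection area is
\[
(\pi-\theta)\rho^2+\theta\min(\rho,r_1)^2 .
\]
Since $r_1^2=d_{th,1}^2-h_H^2$, this equals
\[
(\pi-\theta)(d^2-h_H^2)+\theta\big(\min(d,d_{th,1})^2-h_H^2\big).
\]
Collecting the $h_H^2$ terms gives $-\pi h_H^2$, which yields exactly the stated product form with the factor $\exp(\lambda_H\pi h_H^2)$.

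Next, differentiate the exponent in $d$. Its derivative is $2d(\pi-\theta)+2d\theta\mathbbm{1}(d<d_{th,1})$. Taking $f=(1-F)\cdot\frac{\dd}{\dd d}[\lambda_H\cdot\text{area}]$ then gives the stated PDF on $h_H<d_{O,1}<\infty$. The kink at $d_{th,1}$ is only a measure-zero point, so it is harmless.

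For $d_{I,1}$, the region $\mathcal{R}_I$ is the annular sector of angle $2\theta$ with inner radius $r_1$ and infinite outer radius. Its intersection with the disk has area $\theta(\rho^2-r_1^2)_+$, which equals $\theta(d^2-d_{th,1}^2)$ for $d\ge d_{th,1}$. This gives $F(d_{I,1})=1-\exp(-\lambda_H\theta(d_{I,1}^2-d_{th,1}^2))$, and differentiation gives $f(d_{I,1})=2\lambda_H\theta d_{I,1}(1-F(d_{I,1}))$ on $d_{th,1}<d_{I,1}<\infty$.

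The independence of the two processes follows from the disjointness of $\mathcal{R}_O$ and $\mathcal{R}_I$: restrictions of a PPP to disjoint sets are independent PPPs. This justifies the paper's treatment of two independent PPPs.

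The main obstacle, though a mild one, is the geometry: confirming that the overestimated shadowed region and the residual Rician region really tile the plane as described. The sector half-angle is $\theta=\arcsin(r_I/r_U)$. The near boundary of the projected disk $\mathcal{A}$ along the axis lies at distance $r_1=\frac{h_H}{h_I}(r_U-r_I)$ from $\textbf{z}_U'$; this is the closest point of $\mathcal{A}$ to $\textbf{z}_U'$. Consequently the truncated sector is entirely Rician, and the annular sector contains all of $\mathcal{A}$ together with its shadow. Once this is accepted as the definition of Approximation Method 1, everything else is the routine area computation above.
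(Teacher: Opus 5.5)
Your proposal is correct and follows essentially the same route as the paper's proof. You compute the area of the Approximation-1 Rician region within horizontal radius $\sqrt{d^2-h_H^2}$, namely $(\pi-\theta)d_{O,1}^2+\theta\min\{d_{O,1},d_{th,1}\}^2-\pi h_H^2$, and the area of the annular shadowed sector, $\theta(d_{I,1}^2-d_{th,1}^2)$; you then apply the PPP void probability and differentiate. Your two additional justifications are ones the paper simply takes for granted: independence obtained by restricting a single PPP to disjoint sets, and the check that the closest point of $\mathcal{A}$ lies at distance $r_1$, so the annular sector contains the true shadowed region.
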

\begin{IEEEproof}
    See Appendix~\ref{app:CDFPDFofdO1dI1}.
\end{IEEEproof}

Based on these, we formulate the approximation of coverage probability in Theorem \ref{theo:approx1_for_SA}.

\begin{theorem}
    By overestimating the shadowed Rician area, the coverage probability for nearshore users can be represented using the sum of the coverage probabilities with $\textbf{H}_O$ and $\textbf{H}_I$, that is:
\begin{equation}
    P_{\rm cov}(r_U,\tau)=P_{\rm cov}(r_U,\tau,\textbf{H}_O)+P_{\rm cov}(r_U,\tau,\textbf{H}_I),
\end{equation}
where $d_{O,1}>h_H$ and $d_{I,1}>d_{th,1}$.

The coverage probability of nearshore users with $\textbf{H}_O$ is
\begin{equation}
\begin{array}{r@{}l}
    P&_{\rm cov}(r_U,\tau,\textbf{H}_O)= \int_{h_H}^{\infty} \int_{\max\{d_{th,1},\mathcal{T}_{I}(d_{O,1})\}}^{\infty}\\ &\overline{F}_{W_{O,1}}\big(g_{O}^{h}(d_{O,1},d_{I,1})\big) f(d_{O,1})f(d_{I,1})\;\dd d_{I,1}\dd d_{O,1},
\end{array}
\end{equation}
with
    $g_{O}^{h}(d_{O,1},d_{I,1})= \frac{\tau}{p_{O,0}}d_{O,1}^{\alpha_{O}} \big[N_{0}+I(d_{O,1},d_{I,1}) \big]$.

The coverage probability of nearshore users with $\textbf{H}_I$ is
\begin{equation}
\begin{array}{r@{}l}
    P&_{\rm cov}(r_U,\tau,\textbf{H}_I)= \int_{d_{th,1}}^{\infty}\int_{\max\{h_H,\mathcal{T}_{O}(d_{I,1})\}}^{\infty}\\ &\overline{F}_{W_{I,1}}\big(g_I^{h}(d_{O,1},d_{I,1})\big) f(d_{O,1})f(d_{I,1})\;\dd d_{O,1}\dd d_{I,1},
\end{array}
\end{equation}
with
    $g_{I}^{h}(d_{O,1},d_{I,1})= \frac{\tau}{p_{I,0}}d_{I,1}^{\alpha_{I}} \big[N_{0}+I(d_{O,1},d_{I,1}) \big]$.

The interference is approximated using:
\begin{equation}
\begin{array}{r@{}l}
    I&(d_{O,1},d_{I,1})\approx p_{O,0}(2b_{O}+\Omega_{O})\frac{(2\pi-2\theta)\lambda_{H}}{\alpha_{O}-2}z^{2-\alpha_{O}} \big|_{\infty}^{d_{O,1}}\\
    & \;\;\;+p_{O,0}(2b_{O}+\Omega_{O})\frac{2\theta\lambda_{H}}{\alpha_{O}-2}z^{2-\alpha_O} \big|_{d_{th,1}}^{\min\{d_{O,1},d_{th,1}\}}\\
    & \;\;\;+p_{I,0}(2b_{I}+\Omega_{I})\frac{2\theta\lambda_{H}}{\alpha_{I}-2}z^{2-\alpha_{I}} \big|_{\infty}^{d_{I,1}}.
\end{array}
\end{equation}

\label{theo:approx1_for_SA}
\end{theorem}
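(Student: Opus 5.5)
We follow the same route as the exact result in Theorem \ref{theo:accurate_for_SA}, now with the simplified geometry of Approximation Method 1. The split $P_{\rm cov}=P_{\rm cov}(\cdot,\textbf{H}_O)+P_{\rm cov}(\cdot,\textbf{H}_I)$ holds because $\textbf{H}_O$ and $\textbf{H}_I$ partition the sample space up to a null set, via the selection rule (\ref{hypothesis}).

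\emph{Setup and the outer integrals.} Under Method 1, the HAPS plane is split as seen from $\textbf{z}_U'$:
\begin{itemize}
\item the shadowed Rician region is the annular sector of angle $2\theta$ with inner radius $r_1$;
\item the Rician region is its complement.
\end{itemize}
Since the two regions are disjoint, the restrictions of $\Phi$ to them are independent PPPs with density $\lambda_H$. Hence $d_{O,1}$ and $d_{I,1}$ are independent, with the PDFs given in Lemma \ref{lem:CDFPDFofdO1dI1}, and their joint density factorizes. On $\textbf{H}_O$ the serving HAPS is at $\textbf{x}_1^O$, and $\textbf{H}_O$ is equivalent to $d_{I,1}>\mathcal{T}_I(d_{O,1})$. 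Combining this with the support $d_{I,1}>d_{th,1}$ gives the lower limit $\max\{d_{th,1},\mathcal{T}_I(d_{O,1})\}$, while $d_{O,1}$ ranges over $(h_H,\infty)$. Symmetrically, $\textbf{H}_I$ is equivalent to $d_{O,1}>\mathcal{T}_O(d_{I,1})$, which gives the inner limit $\max\{h_H,\mathcal{T}_O(d_{I,1})\}$.

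\emph{Conditional coverage.} Condition on $(d_{O,1},d_{I,1})$ and on $\textbf{H}_O$. The event ${\rm SINR}>\tau$ reads
$$W_{O,1}>\frac{\tau d_{O,1}^{\alpha_O}}{p_{O,0}}\big(N_0+I\big),$$
where $I$ is the aggregate interference. As in Theorems \ref{theo:onshore} and \ref{theo:offshore}, we replace $I$ by its conditional mean. The conditional coverage is then $\overline{F}_{W_{O,1}}(g_O^h)$, which we integrate against the joint PDF. The $\textbf{H}_I$ case is identical, with $\overline{F}_{W_{I,1}}$ and $g_I^h$.

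\emph{Mean interference.} We compute $I(d_{O,1},d_{I,1})$ with Campbell's theorem, using $\mathbb{E}[W]=2b+\Omega$ and polar coordinates centred at $\textbf{z}_U'$. We use the same convention as in Theorem \ref{theo:onshore}: with $z$ the user–HAPS distance, the area element is approximated by $z\,\dd z\,\dd t$ and the lower limit is the nearest-HAPS distance. The Rician interferers contribute two pieces:
\begin{itemize}
\item the sector of angle $2\pi-2\theta$ beyond $d_{O,1}$, giving $\frac{(2\pi-2\theta)\lambda_H}{\alpha_O-2}z^{2-\alpha_O}\big|_\infty^{d_{O,1}}$;
\item the truncated sector of angle $2\theta$ between $\min\{d_{O,1},d_{th,1}\}$ and $d_{th,1}$, which vanishes when $d_{O,1}>d_{th,1}$.
\end{itemize}
The shadowed Rician interferers, in the angle $2\theta$ beyond $d_{I,1}$ (with $d_{I,1}\ge d_{th,1}$ automatically), contribute the third term.

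\emph{Main obstacle.} The delicate point is justifying the interference step. Replacing $I$ by its mean is an approximation, which is why the expression carries $\approx$. We must also check that excluding the serving node from the other PPP is harmless: conditioning on the nearest point of an independent PPP only removes the ball inside it, so the stated limits are correct. The remaining steps are direct substitutions of Lemma \ref{lem:CDFPDFofdO1dI1}.
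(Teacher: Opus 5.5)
Your proposal follows the paper's proof essentially step for step: the split via the selection rule (\ref{hypothesis}), the independence of the two PPPs under Method~1 with the densities of Lemma~\ref{lem:CDFPDFofdO1dI1}, the integration limits from $\mathcal{T}_I$, $\mathcal{T}_O$ and the supports, the conditional coverage as a CCDF of the serving gain, and the three-piece Campbell computation of the mean interference. One correction concerns your last paragraph. Under $\textbf{H}_O$, the nearest shadowed HAPS at distance $d_{I,1}$ is itself an interferer, and under $\textbf{H}_I$ the same holds for the nearest Rician HAPS at $d_{O,1}$. Integrating from $d_{I,1}$ therefore drops its mean contribution $p_{I,0}(2b_I+\Omega_I)d_{I,1}^{-\alpha_I}$, so the exclusion is not harmless. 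The paper makes the same omission, writing strict inequalities $d_{O,j}>d_{O,1}$ and $d_{I,k}>d_{I,1}$ in both sums. This omission is part of the $\approx$ and is not justified by the ball-removal argument. Incidentally, $z\,{\rm d}z\,{\rm d}t$ is the exact area element, since $\zeta\,{\rm d}\zeta=z\,{\rm d}z$.
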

\begin{IEEEproof}
    See Appendix~\ref{app:approx1_for_SA}.
\end{IEEEproof}

Next, we introduce another approximation by overestimating the Rician region rather than the shadowed Rician region.

\begin{figure}[ht]
    \centering
    \includegraphics[width=0.85\linewidth]{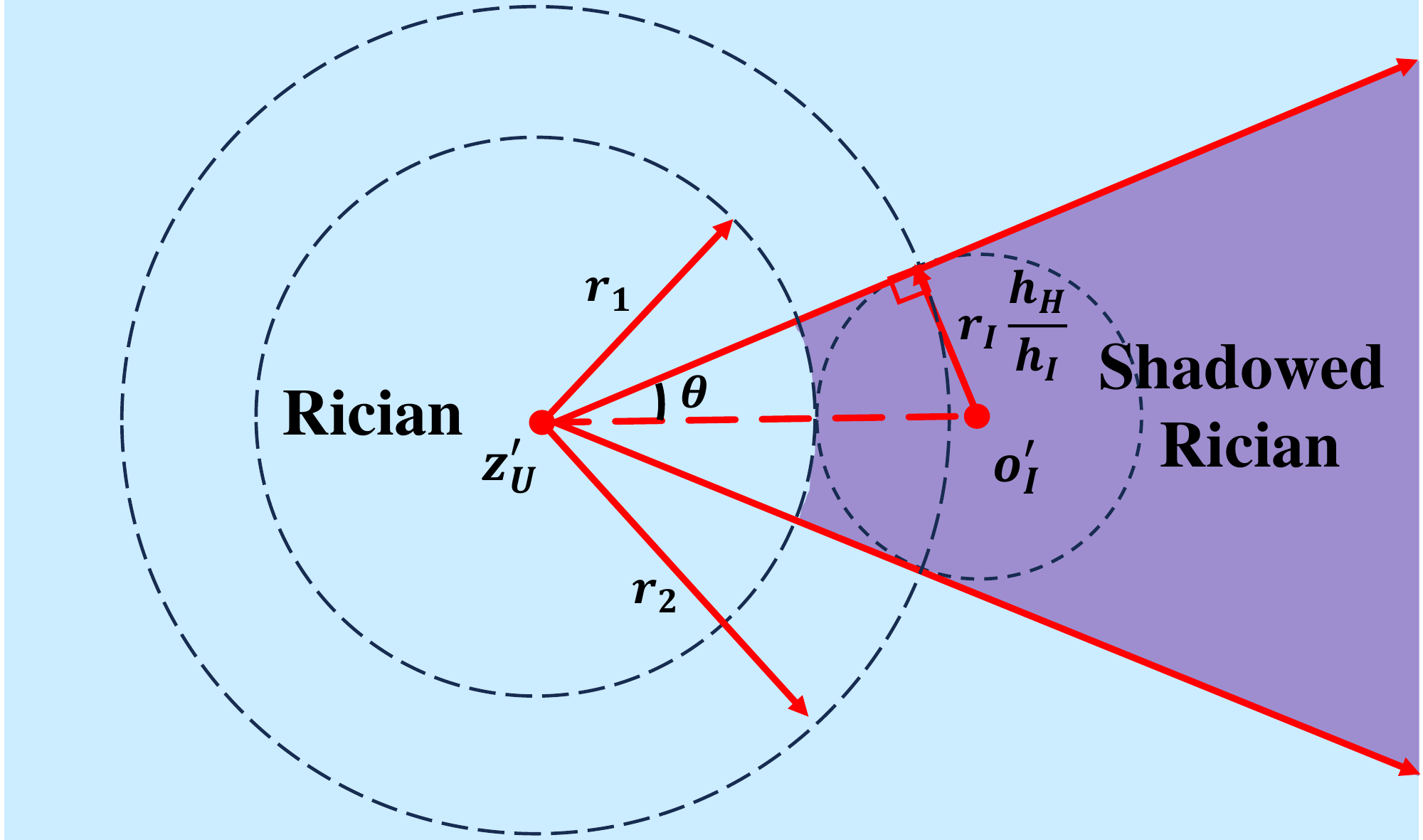}
    \caption{Approximate method 1: overestimating the shadowed Rician region.}
    \label{fig:Approx1}
\end{figure}

\subsection{Overestimating the Rician region}
We have proposed an approximation by overestimating the area of the shadowed Rician region. Next, we introduce another approximation by overestimating the area of the Rician region.\\
\textbf{Approximation Method 2:} As shown in Fig. \ref{fig:Approx2}, we overestimate the area of the Rician region. From the perspective of $\textbf{z}_U'$, the shadowed Rician region is an annular sector area centered at $\textbf{z}_U'$, with inner radius $r_2=\frac{h_H}{h_I}\sqrt{r_U^2-r_I^2}$, infinite outer radius and central angle $2\theta$. Therefore, the Rician region can be divided into: (i) a sector area centered at $\textbf{z}_U'$, with infinite radius and central angle $2\pi-2\theta$, and (ii) a sector area centered at $\textbf{z}_U'$, with radius $r_2$ and central angle $2\theta$. From the perspective of $\textbf{z}_U$, the critical value of the distance to HAPSs is $d_{th,2}=\sqrt{r_2^2+h_H^2}$.

\begin{figure}[ht]
    \centering
    \includegraphics[width=0.85\linewidth]{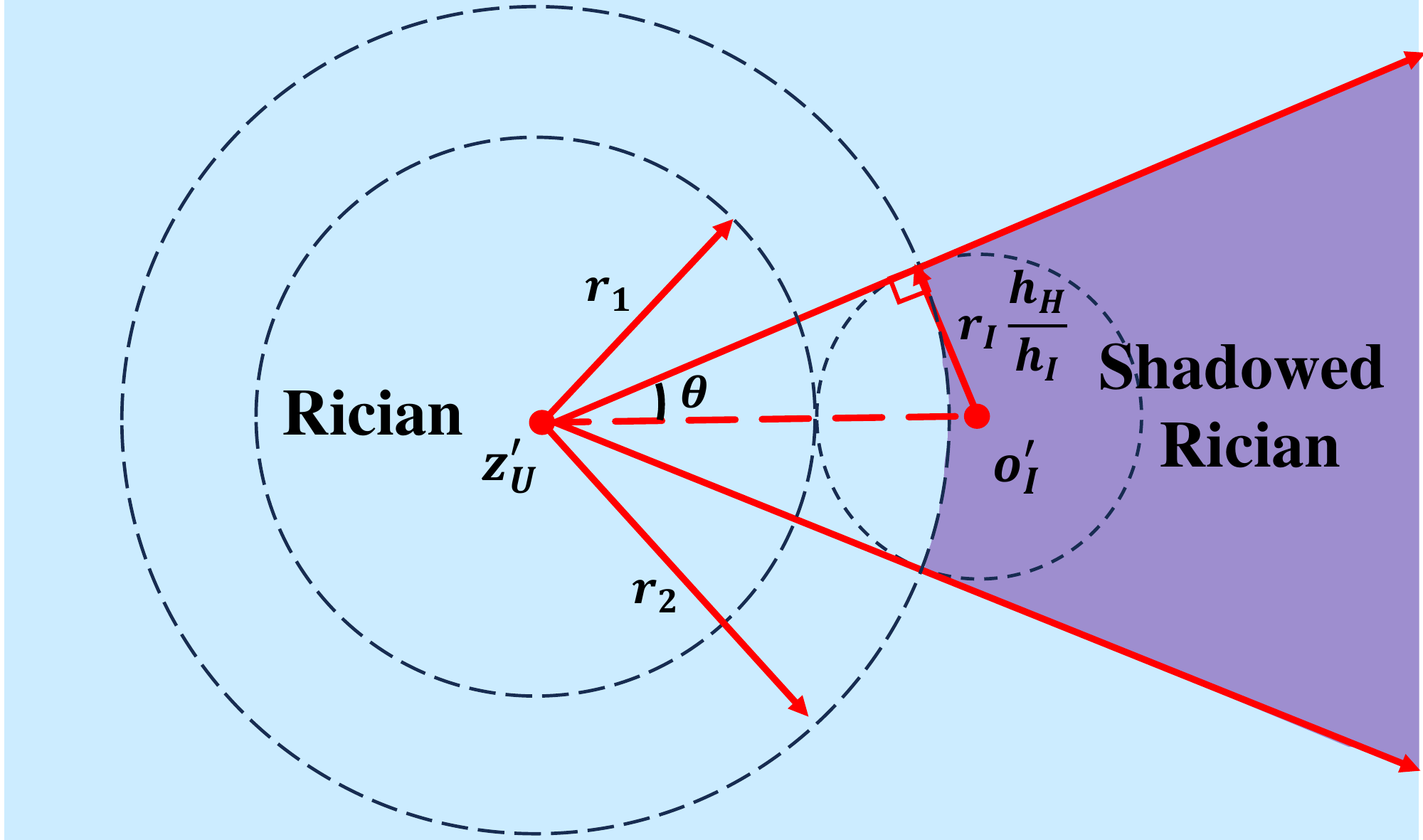}
    \caption{Approximate method 2: overestimating the Rician region.}
    \label{fig:Approx2}
\end{figure}
In this case, we also approximate the CDFs and PDFs of $d_{O,1}$ and $d_{I,1}$ as formulated in Lemma \ref{lem:CDFPDFofdO1dI1_2}.
\begin{lemma}
    Based on the approximation method 2, the CDF of the distance $d_{O,1}$ between the typical nearshore user and its closest HAPS in the Rician region is:
\begin{equation}
\begin{array}{r@{}l}
    F&(d_{O,1})\\
    &=1-\exp(-\lambda_H (\pi-\theta)d_{O,1}^2-\lambda_{H}\theta\min(d_{O,1},d_{th,2})^2)\\
    &\times \exp(\lambda_H \pi h_H^2),
\end{array}
\end{equation}
and the PDF of $d_{O,1}$ is:
\begin{equation}
\begin{array}{r@{}l}
f&(d_{O,1})=2\lambda_H d_{O,1} \big((\pi-\theta)+\theta\mathbbm{1}(d_{O,1}<d_{th,2}) \big)\\
&\times\big(1-F(d_{O,1})\big)\\
&=2\lambda_H d_{O,1} \big((\pi-\theta)+\theta \mathbbm{1}(d_{O,1}<d_{th,2}) \big)\\
&\times\exp(-\lambda_H (\pi-\theta)d_{O,1}^2-\lambda_{H}\theta\min(d_{O,1},d_{th,2})^2)\\
&\times\exp(\lambda_{H}\pi h_H^2),
\end{array}
\end{equation}
for $h_H<d_{O,1}<\infty$.

The CDF of the distance $d_{I,1}$ between the typical nearshore user and its closest HAPS in the shadowed Rician region is:
\begin{equation}
\begin{array}{r@{}l}
F(d_{I,1})=1-\exp(-\lambda_{H}\theta(d_{I,1}^2-d_{th,2}^2)),
\end{array}
\end{equation}
and the PDF of $d_{I,1}$ is
\begin{equation}
\begin{array}{r@{}l}
f(d_{I,1})&=2\lambda_{H}\theta d_{I,1}(1-F(d_{I,1}))\\
&=2\lambda_{H}\theta d_{I,1}\exp(-\lambda_{H}\theta(d_{I,1}^2-d_{th,2})),
\end{array}
\end{equation}
for $d_{th,2}<d_{I,1}<\infty$.
\label{lem:CDFPDFofdO1dI1_2}
\end{lemma}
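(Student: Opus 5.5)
The argument is the void-probability computation already used for Lemma \ref{lem:PDFofD} and Lemma \ref{lem:CDFPDFofdO1dI1}, applied to the region geometry of Approximation Method 2. Project the HAPS plane onto polar coordinates centered at $\textbf{z}_U'$. A HAPS at horizontal radius $\rho$ is at distance $d=\sqrt{\rho^2+h_H^2}$ from $\textbf{z}_U$, so a ball of radius $d$ around $\textbf{z}_U$ meets the HAPS plane in a disk of radius $\sqrt{d^2-h_H^2}$ around $\textbf{z}_U'$. By Approximation Method 2, the Rician region is the full sector of angle $2\pi-2\theta$ together with the sector of angle $2\theta$ truncated at radius $r_2$. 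The shadowed Rician region is the annular sector of angle $2\theta$ with inner radius $r_2$. Since the two regions are disjoint, restricting the PPP $\Phi$ to each gives two independent PPPs of density $\lambda_H$. Hence $F(d)=1-\exp(-\lambda_H|\cdot|)$, where $|\cdot|$ is the area of the relevant region inside the disk.

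\emph{Rician distance.} For $d_{O,1}\le d_{th,2}$, the disk of radius $\sqrt{d^2-h_H^2}$ lies entirely in the Rician region, so the area is $\pi(d^2-h_H^2)$. For $d_{O,1}>d_{th,2}$, the area is
\[
(\pi-\theta)(d^2-h_H^2)+\theta r_2^2=(\pi-\theta)(d^2-h_H^2)+\theta(d_{th,2}^2-h_H^2).
\]
Both cases combine into the stated expression
\[
-\lambda_H(\pi-\theta)d^2-\lambda_H\theta\min(d,d_{th,2})^2+\lambda_H\pi h_H^2
\]
in the exponent. Differentiating in $d$ gives the stated PDF. The indicator term arises because $\frac{d}{dd}\min(d,d_{th,2})^2=2d\,\mathbbm{1}(d<d_{th,2})$, and the CDF is continuous at $d_{th,2}$, so no atom appears.

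\emph{Shadowed distance.} For $d_{I,1}>d_{th,2}$, the area of the annular sector inside the disk is $\theta\big((d^2-h_H^2)-r_2^2\big)=\theta(d^2-d_{th,2}^2)$. This gives $F(d_{I,1})=1-\exp(-\lambda_H\theta(d_{I,1}^2-d_{th,2}^2))$, and differentiating yields $f(d_{I,1})=2\lambda_H\theta d_{I,1}(1-F)$. The support starts at $d_{th,2}$ because no shadowed HAPS lies closer.

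There is no real obstacle; the only care needed is the bookkeeping of the $h_H^2$ offsets so that the $\min$ form matches. Note also that the last displayed line of the statement should read $d_{th,2}^2$ in the exponent; this is a typo.
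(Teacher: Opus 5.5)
Your proof is correct and follows the paper's approach: the paper proves this lemma by repeating the void-probability computation of Lemma \ref{lem:CDFPDFofdO1dI1}, with the region areas $S_O'$ and $S_I'$ and $d_{th,1}$ replaced by $d_{th,2}$, which is exactly your area computation. You are also right that the last exponent in the statement should contain $d_{th,2}^2$ rather than $d_{th,2}$.
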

\begin{IEEEproof}
    Similar to Lemma \ref{lem:CDFPDFofdO1dI1}, but replace $d_{th,1}$ with $d_{th,2}$.
\end{IEEEproof}
Based on the approximate CDFs and PDFs of $d_{O,1}$ and $d_{I,1}$, the approximation of coverage probability is provided in Theorem \ref{theo:approx2_for_SA}.
\begin{theorem}
    By overestimating the Rician area, the coverage probability for nearshore users can be represented using the sum of the coverage probabilities with $\textbf{H}_O$ and $\textbf{H}_I$, that is:
\begin{equation}
    P_{\rm cov}(r_U,\tau)=P_{\rm cov}(r_U,\tau,\textbf{H}_O)+P_{\rm cov}(r_U,\tau,\textbf{H}_I),
\end{equation}
where $d_{O,1}>h_H$ and $d_{I,1}>d_{th,2}$.

The coverage probability of nearshore users with $\textbf{H}_O$ is
\begin{equation}
\begin{array}{r@{}l}
    P&_{\rm cov}(r_U,\tau,\textbf{H}_O)= \int_{h_H}^{\infty} \int_{\max\{d_{th,2},\mathcal{T}_{I}(d_{O,1})\}}^{\infty}\\ &\overline{F}_{W_{O,1}}\big(g_{O}^{h}(d_{O,1},d_{I,1})\big)f(d_{O,1})f(d_{I,1})\;\dd d_{I,1}\dd d_{O,1},
\end{array}
\end{equation}
with
    $g_{O}^{h}(d_{O,1},d_{I,1})= \frac{\tau}{p_{O,0}}d_{O,1}^{\alpha_{O}} \big[N_{0}+I(d_{O,1},d_{I,1}) \big]$.

The coverage probability of nearshore users with $\textbf{H}_I$ is

\begin{equation}
\begin{array}{r@{}l}
    P&_{\rm cov}(r_U,\tau,\textbf{H}_I)= \int_{d_{th,2}}^{\infty}\int_{\max\{h_H,\mathcal{T}_{O}(d_{I,1})\}}^{\infty}\\ &\overline{F}_{W_{I,1}}\big(g_I^{h}(d_{O,1},d_{I,1})\big) f(d_{O,1})f(d_{I,1})\;\dd d_{O,1}\dd d_{I,1},
\end{array}
\end{equation}
with
    $g_{I}^{h}(d_{O,1},d_{I,1})= \frac{\tau}{p_{I,0}}d_{I,1}^{\alpha_{I}} \big[N_{0}+I(d_{O,1},d_{I,1}) \big]$.

The interference is approximated using:
\begin{equation}
\begin{array}{r@{}l}
    I&(d_{O,1},d_{I,1})\approx p_{O,0}(2b_{O}+\Omega_{O})\frac{(2\pi-2\theta)\lambda_{H}}{\alpha_{O}-2}z^{2-\alpha_{O}} \big|_{\infty}^{d_{O,1}}\\
    & \;\;\;+p_{O,0}(2b_{O}+\Omega_{O})\frac{2\theta\lambda_{H}}{\alpha_{O}-2}z^{2-\alpha_O} \big|_{d_{th,2}}^{\min\{d_{O,1},d_{th,2}\}}\\
    & \;\;\;+p_{I,0}(2b_{I}+\Omega_{I})\frac{2\theta\lambda_{H}}{\alpha_{I}-2}z^{2-\alpha_{I}} \big|_{\infty}^{d_{I,1}}.
\end{array}
\end{equation}

\label{theo:approx2_for_SA}
\end{theorem}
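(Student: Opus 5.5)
The proof of Theorem \ref{theo:approx2_for_SA} follows Theorem \ref{theo:approx1_for_SA} step by step. We replace the threshold $d_{th,1}$ (radius $r_1$) by $d_{th,2}$ (radius $r_2$), and we use the distance distributions of Lemma \ref{lem:CDFPDFofdO1dI1_2} in place of those of Lemma \ref{lem:CDFPDFofdO1dI1}.

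Under approximation method 2, the HAPS plane splits into two disjoint regions. The Rician region is a full sector of angle $2\pi-2\theta$ together with a sector of radius $r_2$ and angle $2\theta$. The shadowed Rician region is the annular sector with inner radius $r_2$ and angle $2\theta$. By the restriction theorem, these give two independent PPPs of density $\lambda_H$. Hence $d_{O,1}$ and $d_{I,1}$ are independent with joint density $f(d_{O,1})f(d_{I,1})$.

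\emph{Decomposition.} First, the events $\textbf{H}_O$ and $\textbf{H}_I$ in (\ref{hypothesis}) partition the sample space up to a null set, which gives (\ref{PcovSum}). Using the monotone maps $\mathcal{T}_I$ and $\mathcal{T}_O$:
\begin{itemize}
\item $\textbf{H}_O$ is equivalent to $d_{I,1}>\mathcal{T}_I(d_{O,1})$. Since $d_{I,1}\ge d_{th,2}$, the inner integration domain becomes $d_{I,1}>\max\{d_{th,2},\mathcal{T}_I(d_{O,1})\}$, with $d_{O,1}>h_H$.
\item $\textbf{H}_I$ is symmetric. It gives $d_{O,1}>\max\{h_H,\mathcal{T}_O(d_{I,1})\}$ with $d_{I,1}>d_{th,2}$.
\end{itemize}

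\emph{Conditional coverage.} Condition on $(d_{O,1},d_{I,1})$ and on $\textbf{H}_O$. The SINR is
\[
\frac{p_{O,0}W_{O,1}d_{O,1}^{-\alpha_O}}{N_0+I},
\]
so coverage is the event $W_{O,1}>\frac{\tau}{p_{O,0}}d_{O,1}^{\alpha_O}(N_0+I)$.

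Following the approach already used for Theorems \ref{theo:onshore} and \ref{theo:offshore}, we replace the random interference by its conditional mean $I(d_{O,1},d_{I,1})$. This is the step carrying the "$\approx$", and it moves the expectation inside the CCDF. It gives the factor $\overline{F}_{W_{O,1}}(g_O^h)$, by (\ref{CCDFWOix}); the $\textbf{H}_I$ case gives $\overline{F}_{W_{I,1}}(g_I^h)$, by (\ref{CCDFWIix}). The fading gains are independent of the point process, so we then integrate against the joint density over the domains above.

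\emph{Interference term.} This is the main computational step. By Campbell's theorem, the mean interference from the points of each PPP beyond its nearest point is
\[
\mathbb{E}[W]\,p_0\,\lambda_H\int r^{-\alpha}\,\dd A
\]
over that PPP's region outside the nearest-point distance.

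I convert planar area to the user-to-HAPS distance $z$, using $\dd A=z\,\dd z\,\dd t$ for each angle $t$ (since $r\,\dd r=z\,\dd z$). The radial integral then becomes $\frac{1}{\alpha-2}z^{2-\alpha}$ evaluated between limits. Three pieces arise:
\begin{itemize}
\item the Rician sector of angle $2\pi-2\theta$, from $d_{O,1}$ to $\infty$;
\item the Rician sector of angle $2\theta$, from $\min\{d_{O,1},d_{th,2}\}$ to $d_{th,2}$, which vanishes when $d_{O,1}>d_{th,2}$;
\item the shadowed sector of angle $2\theta$, from $d_{I,1}$ to $\infty$.
\end{itemize}
With $\mathbb{E}[W_O]=2b_O+\Omega_O$ and $\mathbb{E}[W_I]=2b_I+\Omega_I$, these reproduce the stated expression.

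The delicate points are the following.
\begin{itemize}
\item The middle term has the correct sign and vanishes properly, which requires care with the $\min$.
\item The interferer-exclusion convention is consistent: only the serving node is excluded, while the other region's nearest point still interferes. This holds because each PPP's points lie beyond its own nearest distance, so the non-serving nearest node is included in its region's integral.
\item Treating the sectors as centered at the user in the distance variable is the same geometric simplification as in approximation method 1, and I would state it explicitly.
\end{itemize}
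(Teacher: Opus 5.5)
Your plan is essentially the paper's proof. The paper repeats the argument of Theorem~\ref{theo:approx1_for_SA} (Appendix~\ref{app:approx1_for_SA}) with $d_{th,1}$ replaced by $d_{th,2}$. That argument uses the $\textbf{H}_O/\textbf{H}_I$ split via $\mathcal{T}_I,\mathcal{T}_O$, the distance laws of Lemma~\ref{lem:CDFPDFofdO1dI1_2}, the mean-interference substitution inside the CCDF, and Campbell's theorem over the three sectors. Your radial computation with $r\,\mathrm{d}r=z\,\mathrm{d}z$ reproduces the stated $I(d_{O,1},d_{I,1})$ correctly, including the sign of the middle term and its vanishing for $d_{O,1}>d_{th,2}$.

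One of your ``delicate points'' is wrong, however. The non-serving nearest HAPS is \emph{not} captured by its region's Campbell integral. Conditioned on the nearest shadowed HAPS being at distance $d_{I,1}$, the remaining shadowed HAPSs form a PPP on the part of the shadowed region with $z>d_{I,1}$. The integral over that set gives only their mean contribution, and the single point at exactly $d_{I,1}$ carries no mass under the intensity measure. Hence the exact conditional mean interference is the stated expression plus a missing term:
\begin{itemize}
\item under $\textbf{H}_O$, the missing term is $p_{I,0}(2b_I+\Omega_I)d_{I,1}^{-\alpha_I}$;
\item under $\textbf{H}_I$, it is $p_{O,0}(2b_O+\Omega_O)d_{O,1}^{-\alpha_O}$.
\end{itemize}
The paper's formula also drops both nearest nodes, since it defines $I$ through the strict inequalities $d_{O,j}>d_{O,1}$ and $d_{I,k}>d_{I,1}$. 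So your computation does land on the stated expression. You should present this omission as part of the ``$\approx$'', not argue that the exclusion convention is exact.

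Your last caveat is unnecessary. The method-2 sectors are centred at $\textbf{z}_U'$, directly above the user, so rewriting them as ranges of the user-to-HAPS distance $z$ is exact and adds no approximation beyond the region replacement itself.
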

\begin{IEEEproof}
    Similar to Theorem \ref{theo:approx1_for_SA}, but replace $d_{th,1}$ using $d_{th,2}$.
\end{IEEEproof}

Different from the Approximation 1, Approximation 2 is closer to the offshore coverage performance because the Rician region is overestimated.

\section{Simulation results and discussion}\label{sec:simulation}
In this paper, we derive the exact expressions of downlink coverage probabilities for typical onshore, offshore and nearshore users, in Theorem \ref{theo:onshore}, \ref{theo:offshore} and \ref{theo:accurate_for_SA}, respectively, where the typical users select the HAPS with strongest average received power to access. We derive the distribution of the distance between the user and its nearest HAPS in Lemma \ref{lem:PDFofD} for the onshore and offshore users. We also derive the distributions of the distances between the user and its nearest HAPSs operating in Rician channel or shadowed-Rician channel in Lemma \ref{lem:PDFofdO1} and Lemma \ref{lem:PDFofdI1}, respectively, for the nearshore performance analysis. We also propose approximations for nearshore users in Theorem \ref{theo:approx1_for_SA} and \ref{theo:approx2_for_SA}, by overestimating the shadowed Rician region or overestimating the Rician region, respectively, where the distance distribution for onshore and offshore users are approximated in Lemma \ref{lem:CDFPDFofdO1dI1} and \ref{lem:CDFPDFofdO1dI1_2}. Inspired by the islands and maritime applications in the Caribbean, we assume that HAPSs are deployed in a $3000\;{\rm km}\times 3000\;{\rm km}$ area and conduct 20,000 Monte Carlo simulations. The altitude of the typical user is $0\;{\rm m}$ and the altitude of HAPSs is $h_H=20\;{\rm km}$. We assume that the radius of an island is $20\;{\rm km}$ and the altitude of island buildings is $h_I=20\;{\rm m}$. The transmit power is $p_t=45\;{\rm dBm}$, the carrier frequency is $f_c=2\;{\rm GHz}$ and the speed of light is $c=3.0\times 10^{8}\;{\rm m/s}$. The transmit gain is $g_t=41\;{\rm dB}$ and the receive gain is $g_r=10\;{\rm dB}$. The atmospheric attenuation parameters are $s_I=s_O=2\;{\rm dB}$ and the noise power is $N_0=1\times 10^{-20}\;{\rm W}$. We assume that $\alpha_I=4.0$, $m_I=10$, $\Omega_I=0.835$, and $b_I=0.126$ for the shadowed Rician channels, and $\alpha_O=2.3$, $\Omega_O=1.000$, and $b_O=0.200$ for the Rician channels \cite{10040542,10553646,10794418,10689625}.

\begin{figure*}[htbp]
\centering
\begin{minipage}{0.32\textwidth}
\centering 
\includegraphics[width=1\linewidth]{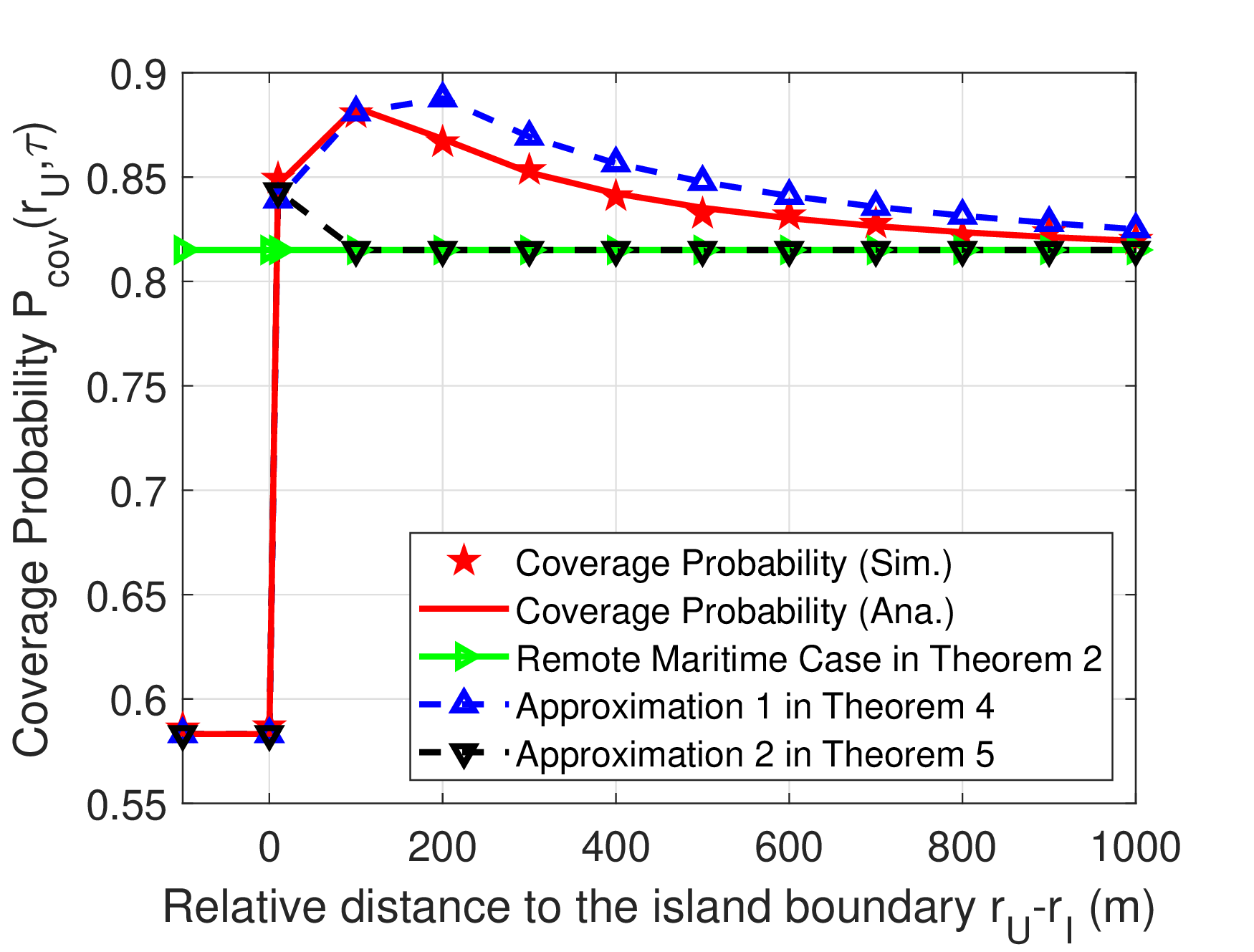}
\caption{The coverage probability considering different relative distance to the island boundary $r_U-r_I$, compared to the approximations and coverage performance of remote maritime users, when HAPS density is $3\;{\rm HAPSs}\;{\rm per}\;10^{5}\;{\rm km^2}$.}
\label{fig:Covpro_RU_lam_3e-11_Strongest}
\end{minipage}
\hfill
\begin{minipage}{0.32\textwidth}
\centering
\includegraphics[width=1\linewidth]{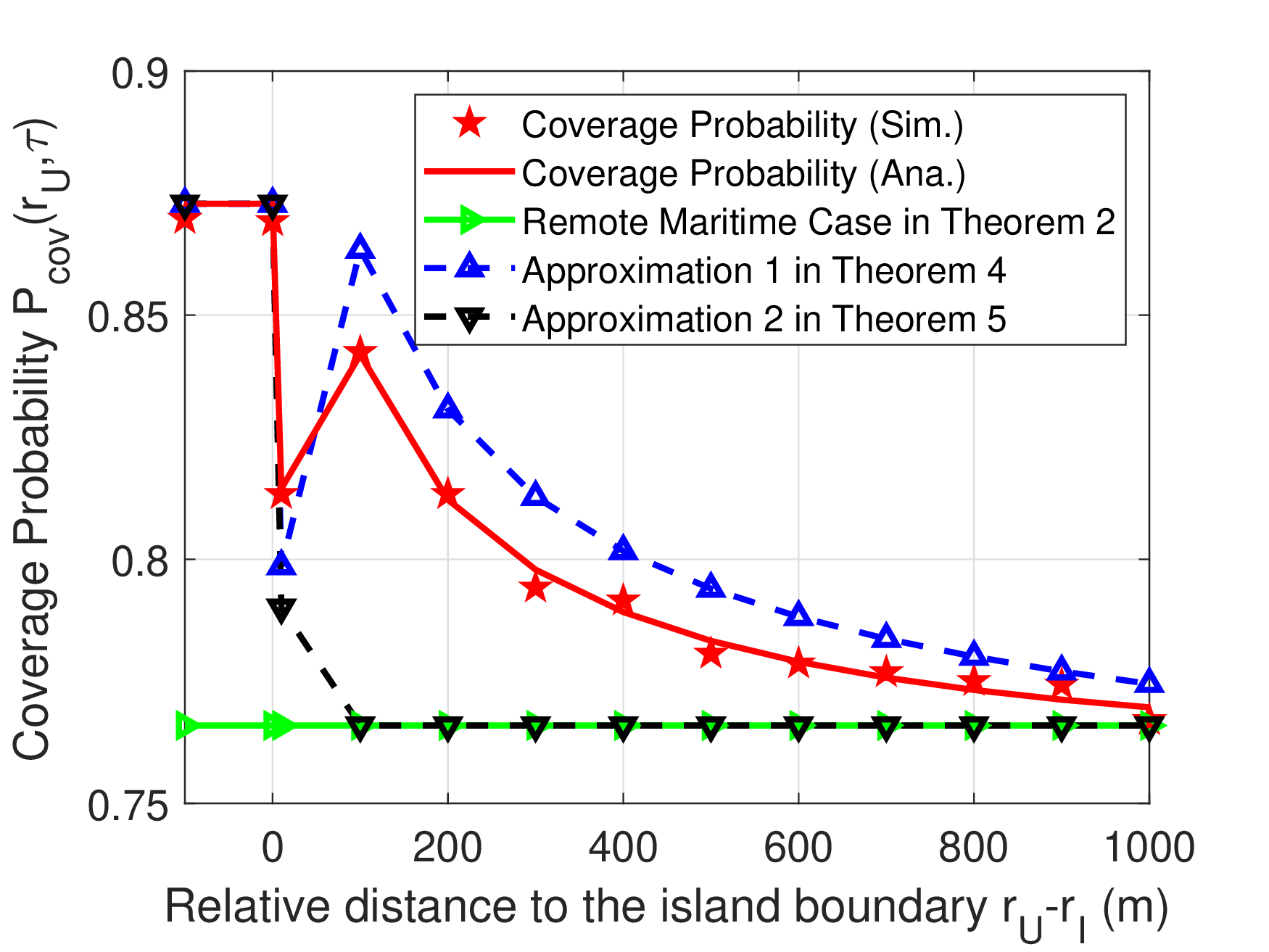}
\caption{The coverage probability considering different relative distance to the island boundary $r_U-r_I$, compared to the approximations and coverage performance of remote maritime users, when HAPS density is $1\;{\rm HAPSs}\;{\rm per}\;10^4\;{\rm km}^2$.}
\label{fig:Covpro_RU_lam_10e-11_Strongest}
\end{minipage}
\hfill
\begin{minipage}{0.32\textwidth}
\centering
\includegraphics[width=1\linewidth]{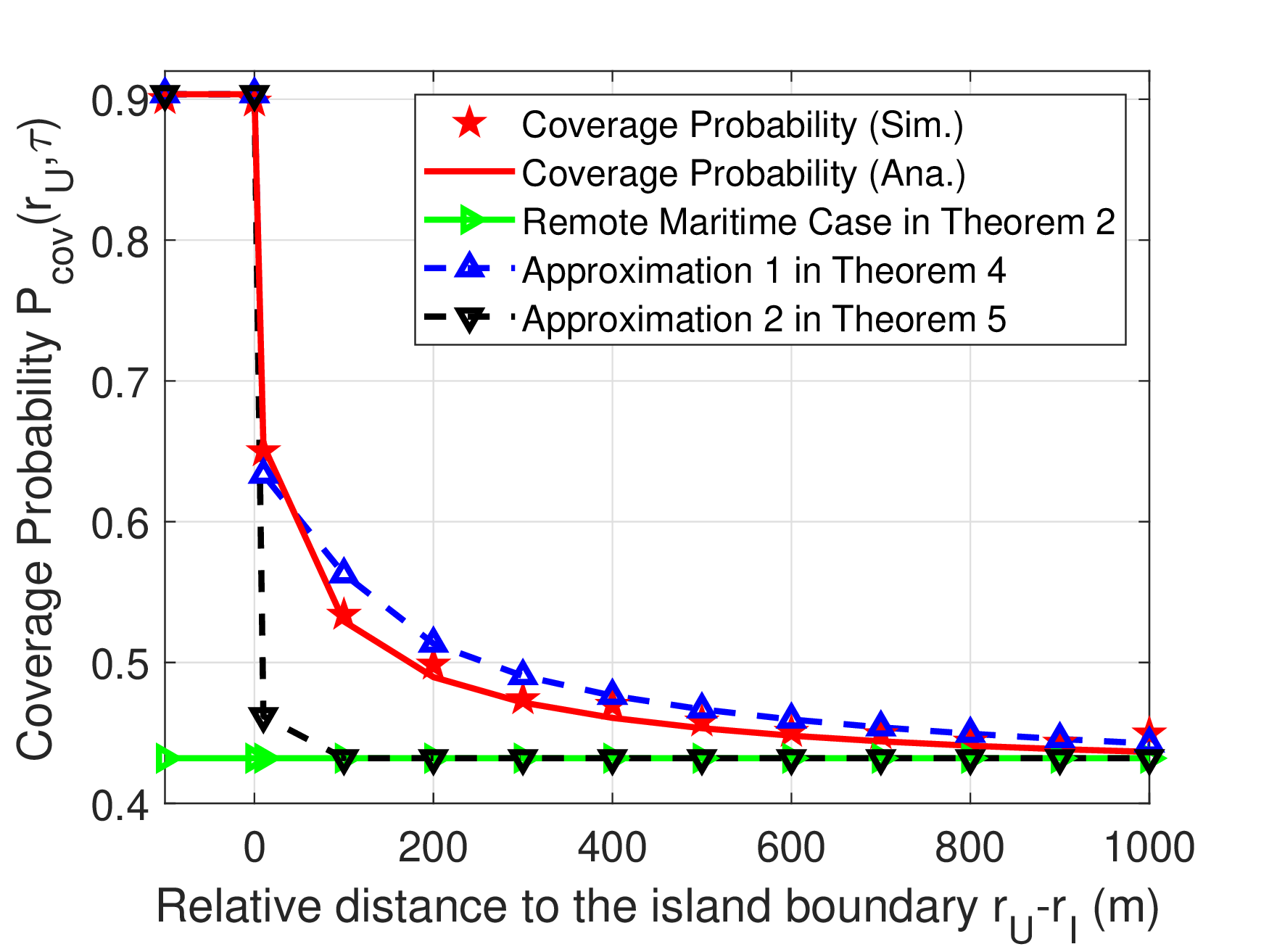}
\caption{The coverage probability considering different relative distance to the island boundary $r_U-r_I$, compared to the approximations and coverage performance of remote maritime users, when HAPS density is $1\;{\rm HAPSs}\;{\rm per}\;10^3\;{\rm km}^2$.}
\label{fig:Covpro_RU_lam_10e-10_Strongest}
\end{minipage}
\end{figure*}

In Fig. \ref{fig:Covpro_RU_lam_3e-11_Strongest}, we compare the simulated coverage probability with the approximations. We assume that $\tau=-10\;{\rm dB}$ and $\lambda_{H}=3\;{\rm HAPSs}\;{\rm per}\;10^{5}\;{\rm km^2}$. When $r_U<r_I$, the typical user is onshore and its coverage performance can be evaluated using (\ref{PcovOnshore}) in Theorem \ref{theo:onshore}, where $P_{\rm cov}(r_U,\tau)$ is 0.58. The coverage performance increases to 0.82 because more than half of HAPSs obtain the LoS condition to the user. The closest HAPS operating in the Rician channel can provide higher signal power and other HAPS operating in the Rician channel do not lead to high interference when the HAPS density is low. When $r_I<r_U<r_I+100\;{\rm m}$ and $r_U$ increases, the probability that the typical user associates to HAPSs via the Rician channel increases, so that the coverage probability increases to 0.86. When $r_U>r_I+100\;{\rm m}$ and $r_U$ continues to increase, more HAPSs have LoS conditions, which leads to increased interference and decreased coverage performance. When $r_U$ increases, the coverage probability $P_{\rm cov}(r_U,\tau)$ approaches 0.79, where almost all HAPSs have Rician channels between the typical user. The analytical curve matches the simulation results. In addition, the simulation result is closer to the Approximation 1 in Theorem \ref{theo:approx1_for_SA} than the Approximation 2 in Theorem \ref{theo:approx2_for_SA}, while both of them require lower computation time than the exact expression. Approximation 2 is close to the offshore case when $r_U>r_I$ because we overestimate the Rician region.

In Fig. \ref{fig:Covpro_RU_lam_10e-11_Strongest}, we show the coverage performance when the HAPS density is $\lambda_{H}=1\;{\rm HAPSs}\;{\rm per}\;10^{4}\;{\rm km^2}$. Unlike the low HAPS density case, onshore users have better coverage probability than offshore users. When the typical user crosses the boundary of the island, the coverage probability decreases from 0.87 to 0.81 because HAPSs operating in Rician channels leads to high interference. When $r_I<r_U<r_I+100\;{\rm m}$ and $r_U$ continues to increase, the probability that the typical user associates with the closest HAPS operating in the Rician channel increases so the coverage probability also increases. When $r_U>r_I+100\;{\rm m}$, the coverage probability decreases due to increased interference caused by the increasing density of HAPSs. As shown in Fig. \ref{fig:Covpro_RU_lam_10e-10_Strongest}, when the HAPS density is $\lambda_{H}=1\;{\rm HAPSs}\;{\rm per}\;10^{3}\;{\rm km^2}$, as the relative distance to the island boundary increases, the interference increases when more HAPSs have better LoS conditions, so that the coverage probability continues to decrease.

\begin{figure*}
\centering
    \begin{minipage}{0.32\textwidth}
    \centering
    \includegraphics[width=1\linewidth]{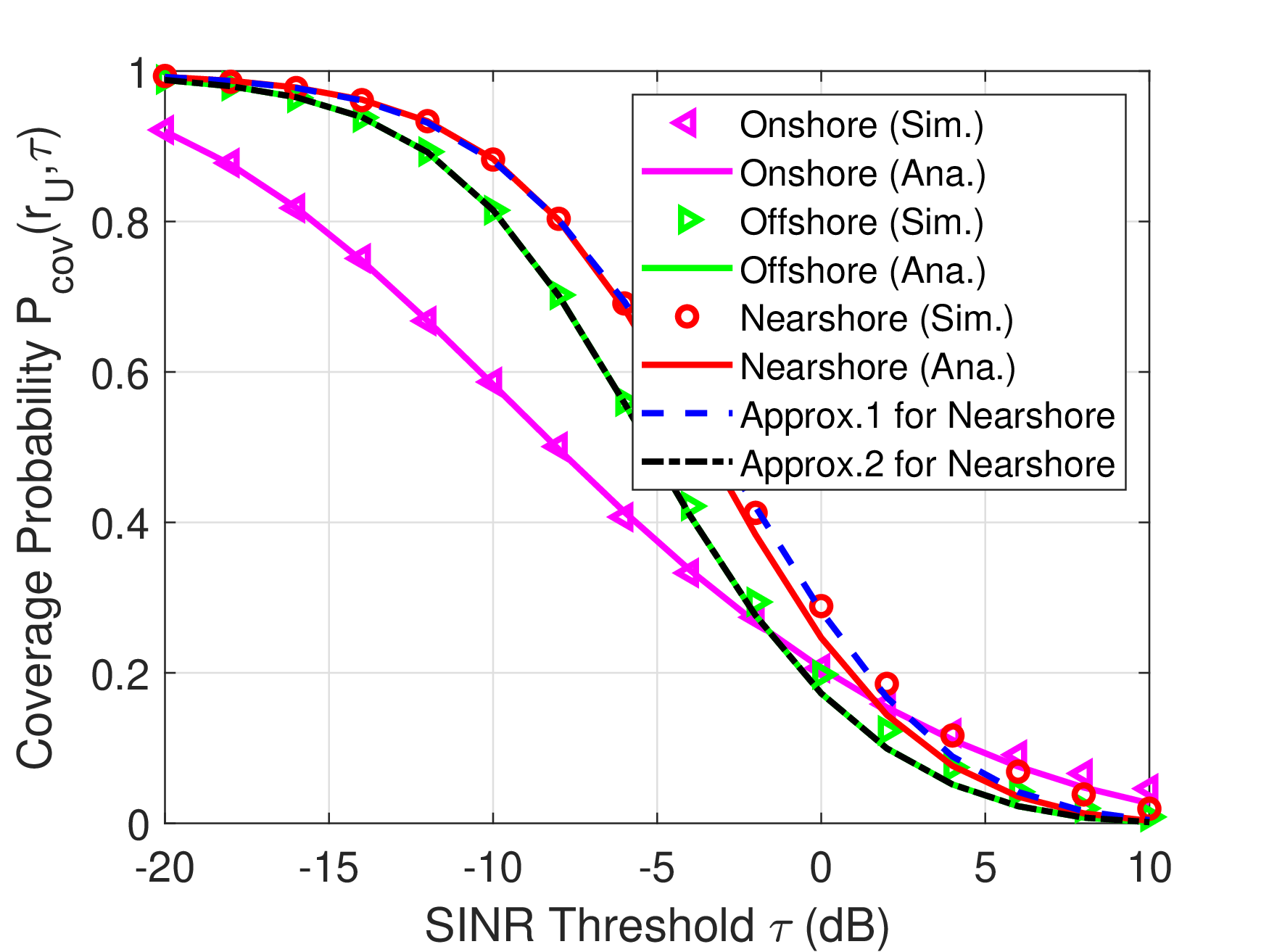}
    \caption{The CCDF of SINR threshold when $\lambda_{H}=3\;{\rm HAPSs}\;{\rm per}\;10^5\;{\rm km}^2$, considering the onshore case ($r_U=r_I-100\;{\rm m}$), nearshore case ($r_U=r_I+100\;{\rm m}$), and offshore case ($r_U=r_I+1000\;{\rm m}$).}
    \label{fig:Covpro_Beta_lam_3e-11_Strongest}
    \end{minipage}
    \hfill
    \begin{minipage}{0.32\textwidth}
    \centering
    \includegraphics[width=1\linewidth]{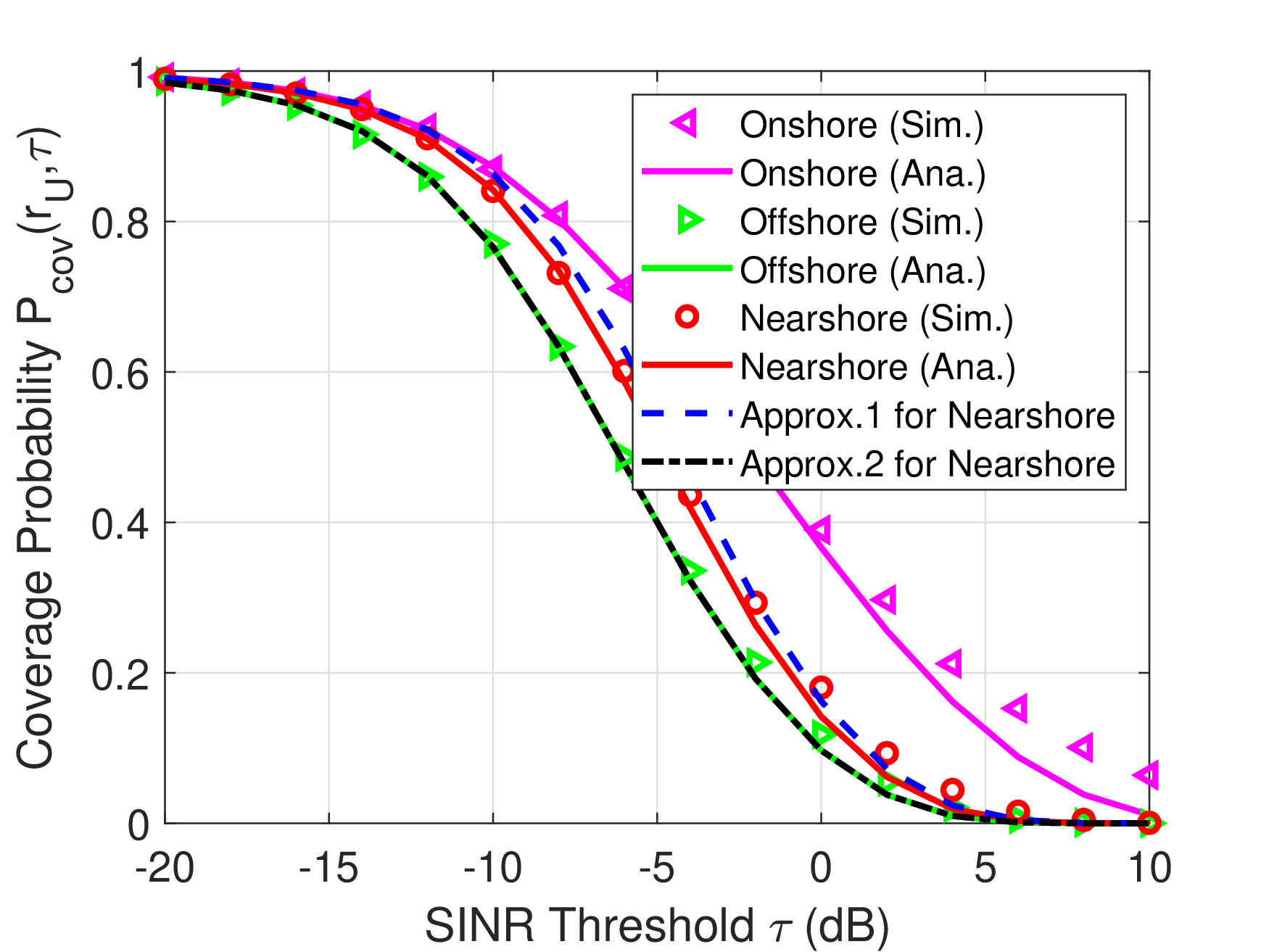}
    \caption{The CCDF of SINR threshold when $\lambda_{H}=1\;{\rm HAPSs}\;{\rm per}\;10^4\;{\rm km}^2$, considering the onshore case ($r_U=r_I-100\;{\rm m}$), nearshore case ($r_U=r_I+100\;{\rm m}$), and offshore case ($r_U=r_I+1000\;{\rm m}$).}
    \label{fig:Covpro_Beta_lam_10e-11_Strongest}
    \end{minipage}
    \hfill
    \begin{minipage}{0.32\textwidth}
    \centering
    \includegraphics[width=1\linewidth]{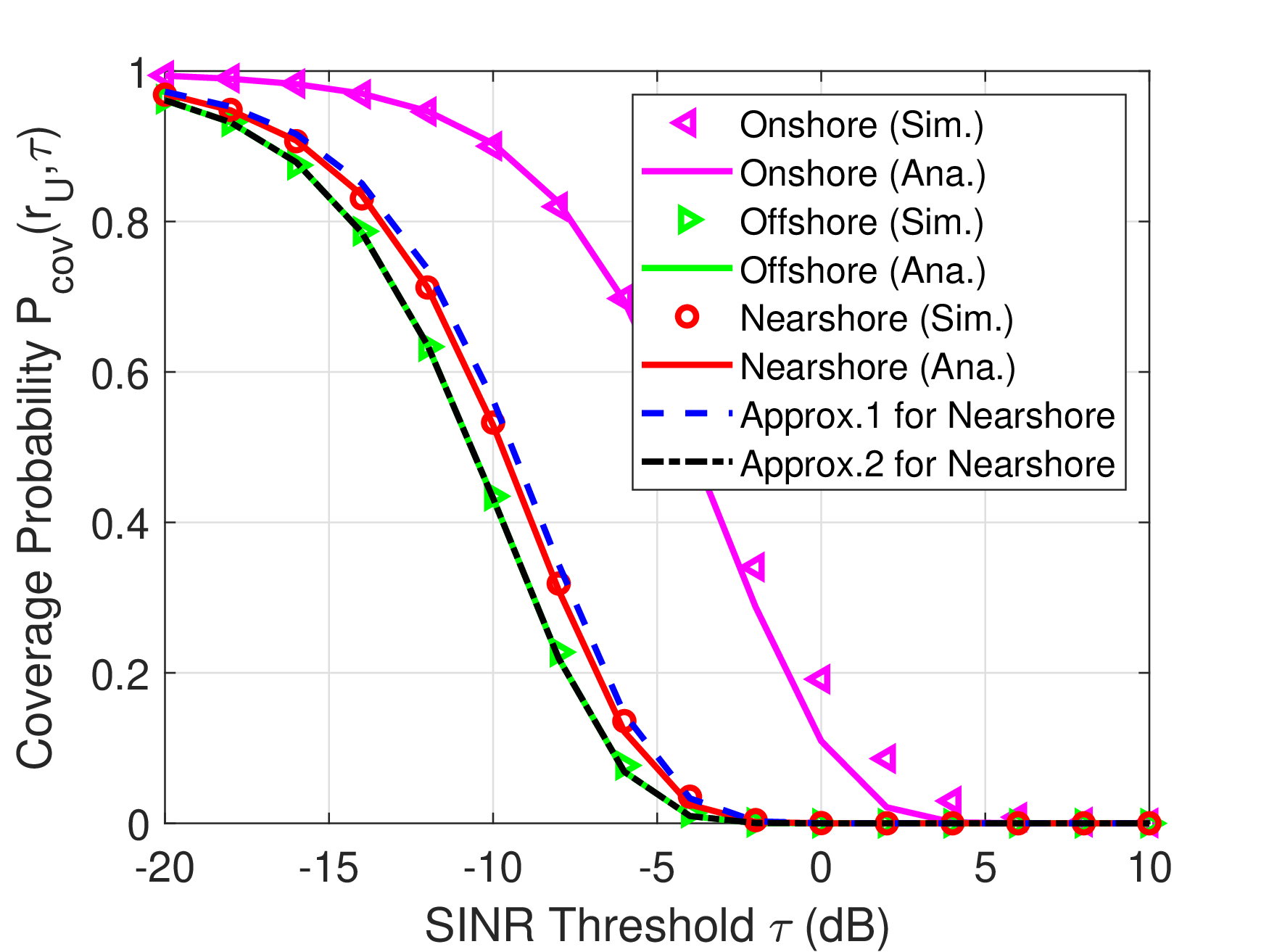}
    \caption{The CCDF of SINR threshold when $\lambda_{H}=1\;{\rm HAPSs}\;{\rm per}\;10^3\;{\rm km}^2$, considering the onshore case ($r_U=r_I-100\;{\rm m}$), nearshore case ($r_U=r_I+100\;{\rm m}$), and offshore case ($r_U=r_I+1000\;{\rm m}$).}
    \label{fig:Covpro_Beta_lam_10e-10_Strongest}
    \end{minipage}
\end{figure*}

In Fig. \ref{fig:Covpro_Beta_lam_3e-11_Strongest}, Fig. \ref{fig:Covpro_Beta_lam_10e-11_Strongest} and Fig. \ref{fig:Covpro_Beta_lam_10e-10_Strongest}, we show the relationship between coverage probability $P_{\rm cov}(r_U,\tau)$ and the SINR threshold $\tau$, considering different HAPS densities. When $\lambda_{H}=3\;{\rm HAPSs}\;{\rm per}\;10^{5}\;{\rm km^2}$ and $\tau<-2\;{\rm dB}$, offshore users and nearshore users with $r_U=r_I+100\;{\rm m}$ have better coverage performance than onshore users. However, when $\tau>2\;{\rm dB}$, onshore users have better coverage performance than offshore users and nearshore users. This is because the increase in $\tau$ makes the coverage probability more sensitive to the value of interference, but the lack of LoS conditions and the higher path loss exponent lead to lower interference than onshore and nearshore cases. When $\lambda_{H}=1\;{\rm HAPSs}\;{\rm per}\;10^{4}\;{\rm km^2}$, the onshore users have better coverage performance than offshore users and nearshore users. Moreover, when the HAPS density is $\lambda_{H}=1\;{\rm HAPSs}\;{\rm per}\;10^{3}\;{\rm km^2}$, onshore users have much better coverage performance than offshore users and nearshore users. For example, when $-10\;{\rm dB}<\tau<-5\;{\rm dB}$, onshore coverage probability is between $0.6$ and $0.9$, but the offshore and nearshore coverage probabilities are between $0.05$ and $0.5$.

\begin{figure}
    \centering
    \includegraphics[width=0.85\linewidth]{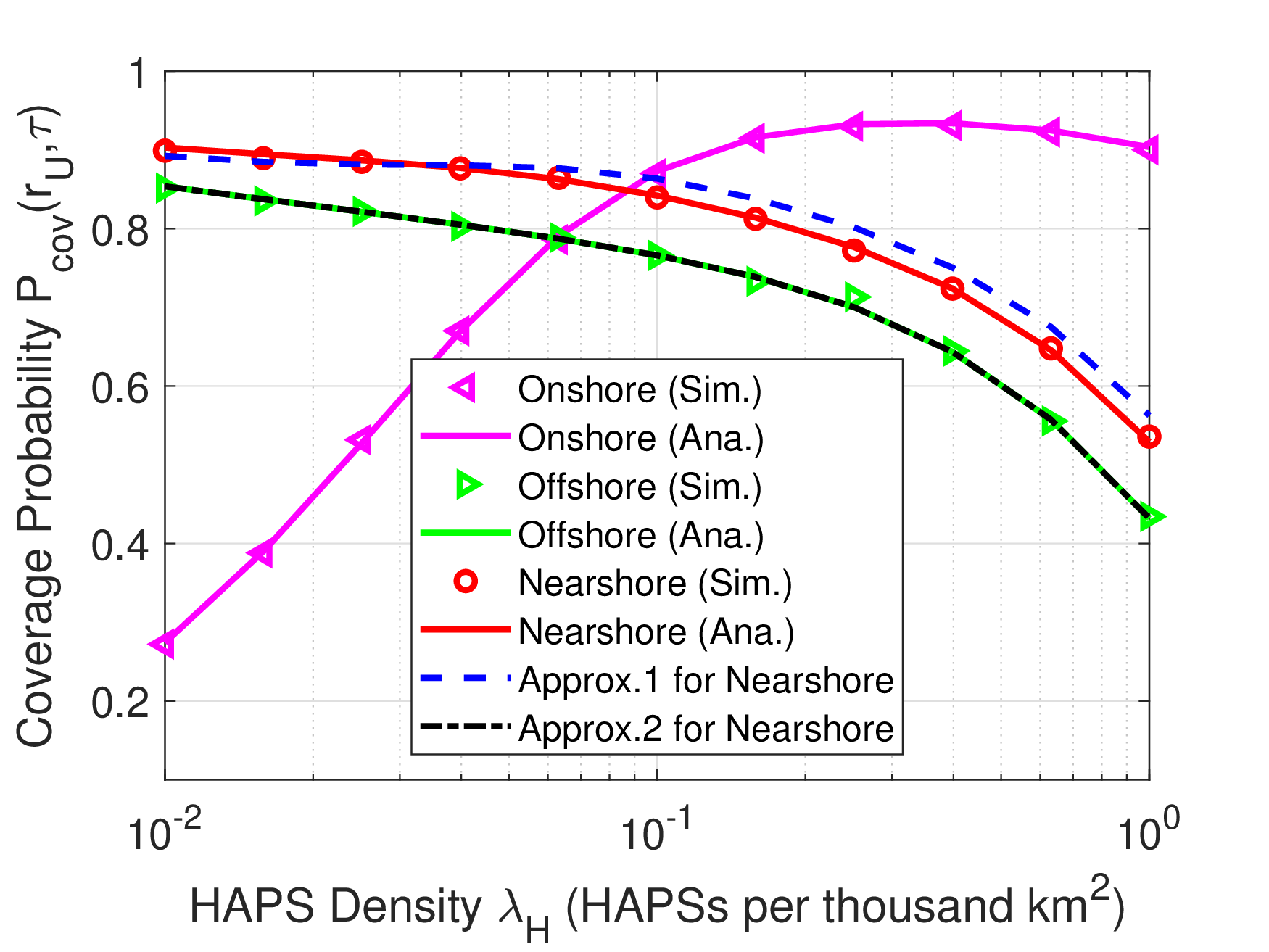}
    \caption{The relationship between coverage probability and HAPS density, considering the onshore case ($r_U=r_I-100\;{\rm m}$), nearshore case ($r_U=r_I+100\;{\rm m}$), and offshore case ($r_U=r_I+1000\;{\rm m}$).}
    \label{fig:Covpro_Dens_Strongest_log}
\end{figure}

 As shown in Fig. \ref{fig:Covpro_Dens_Strongest_log}, we show the relationship between HAPS density $\lambda_{H}$ and the coverage probability of onshore, nearshore, and offshore users. When $\lambda_{H}>8.70\;{\rm HAPSs}\;{\rm per}\;10^{5}\;{\rm km^2}$, offshore users have a better coverage performance than nearshore and offshore users. Onshore users can obtain optimal coverage performance when the HAPS density is around $\lambda_{H}=4\;{\rm HAPSs}\;{\rm per}\;10^{4}\;{\rm km^2}$, because a higher density can help the typical user connect to closer HAPSs. Unlike onshore users, offshore users have better coverage performance than onshore users when $\lambda_{H}<6.31\;{\rm HAPSs}\;{\rm per}\;10^{5}\;{\rm km^2}$. The coverage probability can achieve 0.9 when the HAPS density is close to $1\;{\rm HAPSs}\;{\rm per}\;10^5\;{\rm km}^2$, and the increase in the HAPS density leads to higher interference to the typical user. The increase in HAPS density also leads to a decrease in the coverage probability of nearshore users with $r_U=r_I+100\;{\rm m}$. 
 
As shown in the results in Fig. 6-12, Approximation 2 is close to the offshore performance because the Rician region is overestimated and the hybrid channel environment in a finite simulation area is almost reduced to a Rician environment like the offshore case. Approximation 1 is closer to the actual coverage curve because the channel conditions in fewer areas are misestimated. Two approximations can help evaluate the range of nearshore coverage performance with less waste of computational resource sensing information storage of performance analysis when the accuracy is not strict. They only require the angle range, altitude and minimum/maximum distance to the blockage clusters. Our proposed mathematical framework for coverage analysis can help provide reference to estimate coverage performance and build digital twins. As shown in Fig. \ref{fig:jamaica1}, we took Jamaica as an example. Without HAPS deployment, the existing terrestrial network of Flow already cover most of areas. However, the deep forest areas and deep sea areas still lack of coverage. When the HAPS density in the whole Caribbean Zone is $1\;{\rm HAPSs\;per\;10^4\;km^2}$, there is one or two HAPSs will be deployed upon the Jamaica. Our proposed methods in Theorem \ref{theo:onshore} and Theorem \ref{theo:offshore} can help evaluate the coverage performance, where the coverage probabilities of offshore and onshore users are higher than 0.75. Complex terrain make the nearshore coverage performance analysis more difficult than the ideal model. As shown in Fig. \ref{fig:jamaica2}, through radar or camera, the considered user can sense the obstruction of forest or mountains within a range of about 90 degrees. The main terrain affecting the LoS paths between the considered user and HAPSs is on the circular area with radius $r_I=35\;{\rm km}$ and altitude $h_I=700\;{\rm m}$. The distance between the user and this circular area is $r_U=r_I+15\;{\rm km}$. Adopting these parameters, the coverage probability of the considered nearshore user can be approximated using Theorem \ref{theo:approx1_for_SA} and Theorem \ref{theo:approx2_for_SA}.

\begin{figure}
    \centering
    \includegraphics[width=0.85\linewidth]{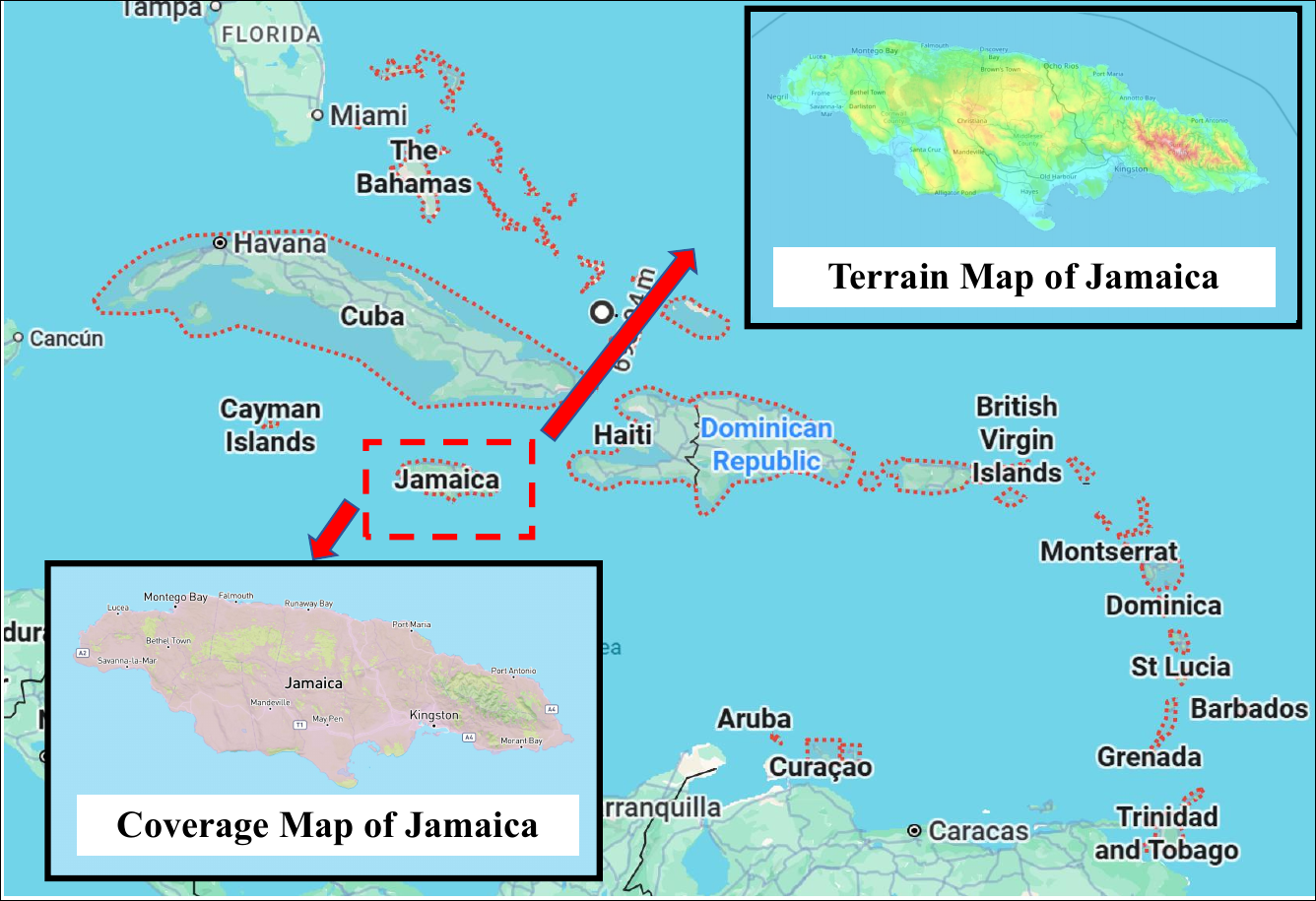}
    \caption{Coverage map (from GSMA) and terrain map of Jamaica.}
    \label{fig:jamaica1}
\end{figure}

\begin{figure}
    \centering
    \includegraphics[width=0.85\linewidth]{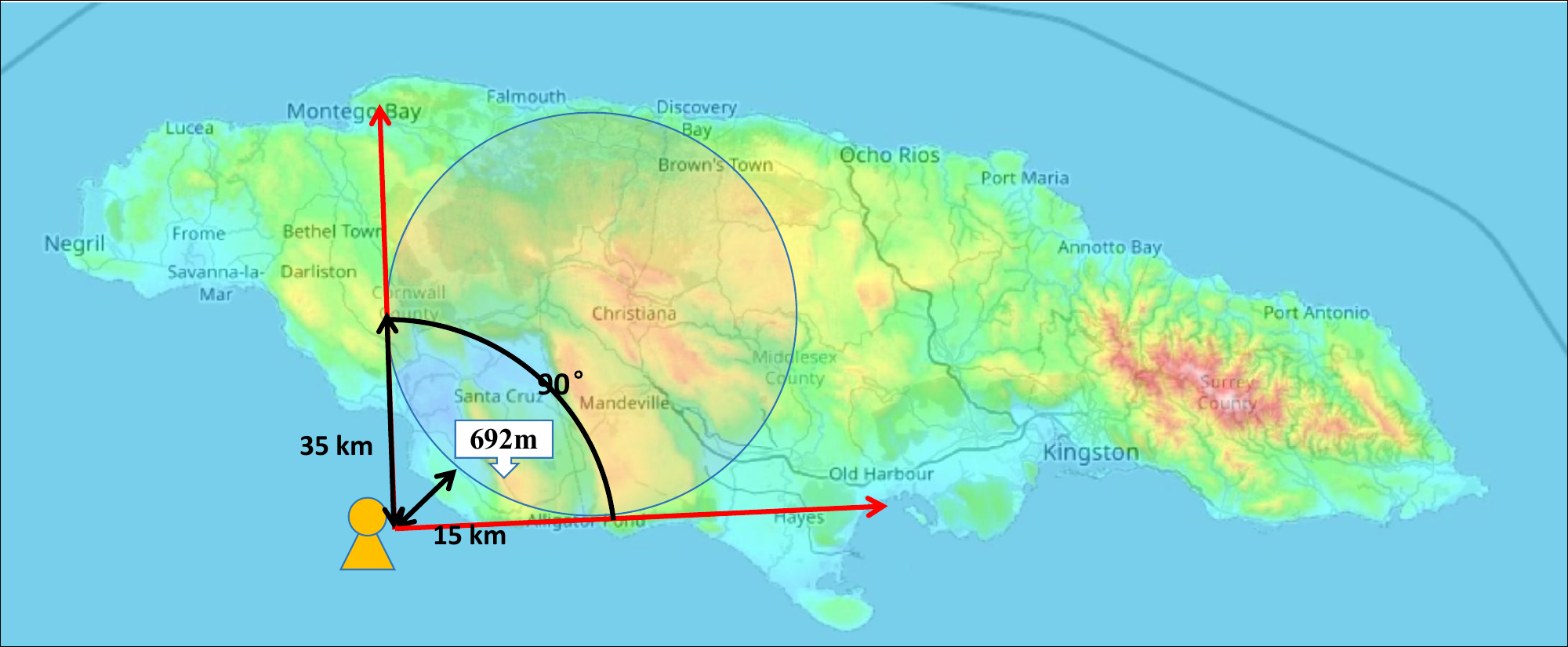}
    \caption{Approximation methods for nearshore performance analysis on Jamaica.}
    \label{fig:jamaica2}
\end{figure}

However, in this paper, some ideal assumptions of shadowing effect, reflections and ducting effect are adopted, to first build a mathematical framework evaluating the effect of user location, HAPSs' location and finite shadowing zones on the downlink performance. In the future, more complex electromagnetic environment analysis can be realized and visualized by digital twins with ray tracing technique and sensing networks. It is worth noting that, in this paper, we assume that the HAPSs interfere with each other without beamforming optimization, so that the coverage performance does not exceed 0.9. With the massive antenna array and the decoding technique, the overall performance of the HAPS network can be improved. The results shown in this paper can work as baseline or lower bound and the evaluation approaches can still work by further taking the beamforming gains into consideration. Our analysis and numerical results mainly show how to ensure user fairness by customizing the HAPS deployment. By comparing the onshore, nearshore and offshore performance, we propose to deploy less HAPSs in open areas without severe shadowing effect, to let each HAPS cover as large maritime areas as possible, but carefully select a higher HAPS density to cover onshore and nearshore users facing the shadowing effect. Based on this, the HAPS operators can design the customized HAPS deployment more easily, and realize user fairness. Furthermore, the uplink analysis and joint analysis are expected to be conducted in the future, to satisfy different applications with various user distribution and communication requirements. 

\section{Conclusion}\label{sec:conclusion}
In this paper, we introduced the evaluation methods of HAPS-based solutions for mobile users in islands or maritime regions. We evaluated the expressions of the coverage probabilities for onshore, nearshore, and offshore users. For nearshore users, we proposed two approximations, by overestimating the shadowed Rician region (Approximation 1) or overestimating the Rician region (Approximation 2), from which we can reduce the computation complexity. We verified that onshore users require a higher HAPS density, and offshore users require a lower HAPS density. Approximation 1 works better than Approximation 2, and both of them can help to evaluate the effect of any complex shape of islands on nearshore users. We also showed how the decoding SINR threshold and HAPS density affect the coverage performance of onshore, nearshore, and offshore users. The simulation results show that an inhomogeneous HAPS deployment is necessary in the future to realize higher and fair coverage performance for typical users at different geographical locations.

\appendices
\section{Proof of Lemma \ref{lem:PDFofD}}\label{app:PDFofD}
In this paper, we assume that HAPSs are uniformly deployed at altitude $h_H$, where the set of their locations follows a 2D PPP with density $\lambda_{H}$. We assume that $\textbf{z}_{U}=(-r_{U},0,0)$ is the location of the typical user and $\textbf{z}_{U}'=(-r_{U},0,h_H)$ is directly above $\textbf{z}_U$ at a height of $h_H$. The distance between $\textbf{z}_{U}'$ and its closest HAPS to the typical user is $\zeta$, and the distance between the typical user at
$\textbf{z}_{U}$ and the closest HAPS is $D$. The relationship between $D$ and $\zeta$ follows the Pythagorean theorem:
    $D^2=\zeta^2+h_H^2$.
The CDF of $\zeta$ can be expressed as:
\begin{equation}
    F_{\zeta}(\zeta)=1-\exp(-\lambda_{H}\pi\zeta^2).
\end{equation}
By taking the derivative of $F_{\zeta}(\zeta)$, the PDF of $\zeta$ is:
\begin{equation}
    f_{\zeta}(\zeta)=2\lambda_{H}\pi \zeta\exp(-\lambda_{H}\pi\zeta^2),
\end{equation}
which satisfies:
\begin{equation}
\begin{array}{r@{}l}
 \int_{0}^{\infty}&f_{\zeta}(\zeta)\dd\zeta =\int_{0}^{\infty}2\lambda_{H}\pi \zeta\exp(-\lambda_{H}\pi\zeta^2)\dd\zeta\\
& =\int_{h_H}^{\infty}2\lambda_{H}\pi \exp(\lambda_{H}\pi h_H^2)d\exp(-\lambda_{H}\pi d^2)\dd d\\
& =\int_{h_H}^{\infty}f_{D}(d)\dd d=1.
\end{array}
\end{equation}
Therefore, the PDF of $d$ is
\begin{equation}
    f_{D}(d)=2\lambda_{H}\pi\exp(\lambda_{H}\pi h_{H}^2) d\exp(-\lambda_{H}\pi d^2),
\end{equation}
for $h_H<d<\infty$.

\section{Proof of Theorem \ref{theo:onshore}}\label{app:onshore}
Notice that $r_U$ is the distance from the typical user to the center of the island, and $d$ is the distance between the user and its closest HAPS located at $\textbf{x}_1^{I}$. We use $W_{I,1}$ to represent the channel gain between them, which follows the shadowed Rician distribution shown in (\ref{fWIix}) and (\ref{FWIix}). Therefore, the coverage probability can be calculated using the conditional coverage probability and the PDF of $D$:
\begin{equation}
\begin{array}{r@{}l}
    P&_{\rm cov}(r_U,\tau)\overset{\triangle}{=}\P\{{\rm SINR}_{I}(r_U)>\tau\}\\
    & =\int_{h_H}^{\infty}\P\{{\rm SINR}_{I}(r_U)>\tau|d\}f_{D}(d){\rm d}d\\
    & =\int_{h_H}^{\infty} \P \big\{\frac{p_{I,0}W_{I,1}d^{-\alpha_I}}{N_{0}+I_{I}(d)}>\tau \big\}f_{D}(d){\rm d}d\\
    & =\int_{h_H}^{\infty} \overline{F}_{W_{I,1}}\big(g_{I}(d)\big)f_{D}(d){\rm d}d,
\end{array}
\end{equation}
where
\begin{equation}
\begin{array}{r@{}l}
    \overline{F}&_{W_{I,1}}(x)\overset{\triangle}{=} \big(\frac{2b_{I}m_{I}}{2b_{I}m_{I}+\Omega_{I}}\big)^{m_{I}}\\
    & \times\sum_{n=0}^{\infty}\frac{(m_{I})_n}{(1)_{n}n!} \big(\frac{\Omega_{I}}{2b_{I}m_{I}+\Omega_{I}} \big)^{n}\Gamma\big(n+1,\frac{x}{2b_{I}}\big).
\end{array}
\end{equation}
$\Gamma(s,x)$ is the upper incomplete Gamma function with shape parameter $s$ and $g_{I}(d)= \frac{\tau}{p_{I,0}}N_{0}d^{\alpha_{I}}+\frac{\tau}{p_{I,0}}I_{I}(d)d^{\alpha_{I}}$. The interference can be approximated using its expectation:
\begin{equation}
\begin{array}{r@{}l}
    I&_{I}(d)\approx \mathbb{E}(\sum\limits_{d_{I,k}>d} p_{I,0}W_{I,k}d_{I,k}^{-\alpha_I})\\
    & \overset{(a)}{=}p_{I,0}\mathbb{E}[W_{I,k}]\int_0^{2\pi} \int_{\sqrt{d^2-h_H^2}}^{\infty} \lambda_{H} (\zeta^2+h_H^2)^{-\frac{\alpha_I}{2}}\zeta{\rm d}\zeta {\rm d}\gamma\\
    & =p_{I,0}\mathbb{E}[W_{I,k}]\frac{2\pi\lambda_{H}}{\alpha_{I}-2}z^{2-\alpha_{I}} \big|_{\infty}^{d},
\end{array}
\end{equation}
where step (a) comes from Campbell theorem \cite{haenggi2012stochastic}.
Because the expectation of $W_{I,k}$ is:
    $\mathbb{E}[W_{I,k}]=2b_{I}+\Omega_{I}$,
so that 
    $g_{I}(d)\approx \frac{\tau N_{0}}{p_{I,0}}d^{\alpha_{I}}+\frac{2\pi\lambda_{H}\tau (2b_{I}+\Omega_{I})d^{\alpha_I}}{\alpha_{I}-2}z^{2-\alpha_I} \big|_{\infty}^{d}$.

\section{Proof of Theorem \ref{theo:offshore}}\label{app:offshore}
Notice that $r_U$ is the distance from the typical user to the center of the island, and $d$ is the distance between the user and its closest HAPS located at $x_{1}^{O}$. We use $W_{O,1}$ to represent the channel gain between them, which follows the Rician distribution in (\ref{fWOix}) and (\ref{FWOix}). Therefore, the coverage probability can be calculated using the conditional coverage probability and the PDF of $D$:
\begin{equation}
\begin{array}{r@{}l}
    P&_{\rm cov}(r_U,\tau)\overset{\triangle}{=}\P\{{\rm SINR}_{O}(r_U)>\tau\}\\
    & =\int_{h_H}^{\infty}\P\{{\rm SINR}_{O}(r_U)>\tau|d\}f_{D}(d){\rm d}d\\
    & =\int_{h_H}^{\infty} \P \big\{\frac{p_{O,0}W_{O,1}d^{-\alpha_O}}{N_{0}+I_{O}(d)}>\tau \big\}f_{D}(d){\rm d}d\\
    & =\int_{h_H}^{\infty} \overline{F}_{W_{O,1}}\big(g_{O}(d)\big)f_{D}(d){\rm d}d,
\end{array}
\end{equation}
where
$\overline{F} _{W_{O,1}}\big(g_{O}(d)\big)\overset{\triangle}{=}Q_{1} \big(\sqrt{\frac{\Omega_{O}}{b_O}},\sqrt{\frac{g_{O}(d)}{b_O}} \big)$.
$Q_1(a,b)$ is the Marcum Q-function of first kind of order $\nu=1$ and $g_{O}(d)= \frac{\tau}{p_{O,0}}N_{0}d^{\alpha_{O}}+\frac{\tau}{p_{O,0}}I_{O}(d)d^{\alpha_{O}}$. The interference can be approximated using its expectation:
\begin{equation}
\begin{array}{r@{}l}
    I&_{O}(d)\approx \mathbb{E}(\sum\limits_{d_{O,j}>d} p_{O,0}W_{O,j}d_{O,j}^{-\alpha_O})\\
    & \overset{(b)}{=}p_{O,0}\mathbb{E}[W_{O,j}]\int_0^{2\pi} \int_{\sqrt{d^2-h^2}}^{\infty} \lambda_{H} (\zeta^2+h_H^2)^{-\frac{\alpha_O}{2}}\zeta{\rm d}\zeta {\rm d}\gamma\\
    & =p_{O,0}\mathbb{E}[W_{O,j}]\frac{2\pi\lambda_{H}}{\alpha_{O}-2}z^{2-\alpha_{O}} \big|_{\infty}^{d},
\end{array}
\end{equation}
where step (b) comes from Campbell theorem. Because the expectation of $W_{O,j}$ is:
    $\mathbb{E}[W_{O,j}]=2b_{O}+\Omega_{O}$,
so that
    $g_{O}(d)\approx \frac{\tau N_{0}}{p_{O,0}}d^{\alpha_{O}}+\frac{2\pi\lambda_{H}\tau (2b_O+\Omega_O)d^{\alpha_O}}{\alpha_{O}-2}z^{2-\alpha_O} \big|_{\infty}^{d}$.

\section{Proof of Lemma \ref{lem:PDFofdO1}}\label{app:PDFofdO1}
To evaluate the coverage probability for nearshore users with a hybrid channel environment, we define that HAPSs in the Rician region follow a PPP with density $\lambda_H$. We define $d_{th,1}=\sqrt{r_1^2+h_H^2}$ where $r_1=\frac{h_H}{h_I}(r_U-r_I)$ and $d_{th,2}=\sqrt{r_2^2+h_H^2}$ where $r_2=\frac{h_H}{h_I}\sqrt{r_U^2-r_I^2}$. We discuss CDFs and PDFs of $d_{O,1}$ for the cases where $h_H<d_{O,1}<d_{th,1}$, $d_{th,1}<d_{O,1}<d_{th,2}$ and $d_{th,2}<d_{O,1}<\infty$, respectively. We assume that the area of the Rician region within a distance $d_{O,1}$ is $S_{O}(d_{O,1})$.

For the case where $h_H<d_{O,1}<d_{th,1}$, we have
\begin{equation}
    S_{O}(d_{O,1})=\pi d_{O,1}^2-\pi h_H^2.
\end{equation}
Therefore, the CDF of $d_{O,1}$ can be calculated using:
\begin{equation}
\begin{array}{r@{}l}
    F(d_{O,1})&=1-\exp(-\lambda_H S_O(d_{O,1}))\\
    &=1-\exp(-\lambda_{H} \pi d_{O,1}^2)\exp(\lambda_{H}\pi h_H^2),
\end{array}
\end{equation}
and the PDF of $d_{O,1}$ is the derivative of $F(d_{O,1})$, which is equal to the multiplication of HAPS density $\lambda_H$, angle range of Rician region at $d_{O,1}$ which is $2\pi$, the distance $d_{O,1}$, and the CCDF of $d_{O,1}$:
\begin{equation}
    f(d_{O,1})=2\lambda_{H}\pi d_{O,1}\exp(-\lambda_{H} \pi d_{O,1}^2)\exp(\lambda_{H}\pi h_H^2).
\end{equation}

For the case where $d_{th,1}<d_{O,1}<d_{th,2}$, we assume that the intersections of the circle centered at $\textbf{z}_U'$ with radius $\zeta=\sqrt{d_{O,1}^2-h_H^2}$, and the circle centered at $\textbf{o}_I'$ with radius $r_I\frac{h_H}{h_I}$ as $P_1$ and $P_2$. We can divide $S_{O}(d_{O,1})$ into two parts $S_1(d_{O,1})$ and $S_2(d_{O,1})$. $S_1(d_{O,1})$ is a sector area centered at $\textbf{z}_U'$ with radius $d_{O,1}$, central angle $2\pi-2\phi(d_{O,1})$, where
\begin{equation}
\begin{array}{r@{}l}
    \phi&(d_{O,1})= \arccos \big(\frac{\zeta^2+r_U^2\frac{h_H^2}{h_I^2}-r_I^2\frac{h_H^2}{h_I^2}}{2\zeta r_U\frac{h_H}{h_I}} \big)\\
    &= \arccos \big(\frac{h_I^2 d_{O,1}^2-h_I^2 h_H^2+r_U^2 h_H^2-r_I^2 h_H^2}{2\sqrt{d_{O,1}^2-h_H^2} r_U h_H h_I} \big).
\end{array}
\end{equation}
Especially, $\phi(d_{th,1})=0$ and $\phi(d_{th,2})=\arcsin{\frac{r_I}{r_U}}$. Therefore, we have
\begin{equation}
    S_1(d_{O,1})=(\pi-\phi(d_{O,1}))(d_{O,1}^2-h_H^2).
\label{S1}
\end{equation}

We obtain $S_2(d_{O,1})$ by subtracting the sector area centered at $\textbf{o}_I'$ with radius $r_I\frac{h_H}{h_I}$ and central angle $\eta(d_{O,1})$ from the quadrilateral $\textbf{z}_U'P_1\textbf{o}_I'P_2$, where
\begin{equation}
\begin{array}{r@{}l}
    \eta&(d_{O,1})= \arccos \big(\frac{r_U^2\frac{h_H^2}{h_I^2}+r_I^2\frac{h_H^2}{h_I^2}-\zeta^2}{2r_I\frac{h_H}{h_I} r_U\frac{h_H}{h_I}} \big)\\
    &= \arccos \big(\frac{r_U^2h_H^2+r_I^2h_H^2-d_{O,1}^2h_I^2+h_H^2h_I^2}{2r_I r_Uh_H^2} \big).
\end{array}
\end{equation}
Especially, $\eta(d_{th,1})=0$ and $\eta(d_{th,2})=\arccos\frac{r_I}{r_U}$. Therefore, we have:
\begin{equation}
S_2(d_{O,1})=\sin\eta(d_{O,1})r_Ir_U\frac{h_H^2}{h_I^2}-\eta(d_{O,1})r_I^2\frac{h_H^2}{h_I^2},
\label{S2}
\end{equation}
and
\begin{equation}
\begin{array}{r@{}l}
S_O(d_{O,1})&=S_1(d_{O,1})+S_2(d_{O,1})\\
&=(\pi-\phi(d_{O,1}))(d_{O,1}^2-h_H^2)\\
&+\sin\eta(d_{O,1})r_Ir_U\frac{h_H^2}{h_I^2}-\eta(d_{O,1})r_I^2\frac{h_H^2}{h_I^2}.
\end{array}
\end{equation}

The CDF can be formulated using
\begin{equation}
\begin{array}{r@{}l}
    F&(d_{O,1})=1-\exp(-\lambda_H S_O(d_{O,1}))\\
    &=1-\exp(-\lambda_{H}(\pi-\phi(d_{O,1}))(d_{O,1}^2-h_H^2))\\
    &\times\exp(-\lambda_{H}\sin\eta(d_{O,1})r_Ir_U\frac{h_H^2}{h_I^2}+\lambda_{H}\eta(d_{O,1})r_I^2\frac{h_H^2}{h_I^2}),
\end{array}
\end{equation}
and the PDF is the derivative of $F(d_{O,1})$, which is also equal to the multiplication of HAPS density $\lambda_H$, angle range of Rician region at $d_{O,1}$ which is $2\phi(d_{O,1})$, the distance $d_{O,1}$, and the CCDF of $d_{O,1}$, which is:
\begin{equation}
\begin{array}{r@{}l}
    f&(d_{O,1})=2\lambda_{H}(\pi-\phi(d_{O,1}))d_{O,1}\\
    &\times \exp(-\lambda_{H}(\pi-\phi(d_{O,1}))(d_{O,1}^2-h_H^2))\\
    &\times\exp(-\lambda_{H}\sin\eta(d_{O,1})r_Ir_U\frac{h_H^2}{h_I^2})\exp(\lambda_{H}\eta(d_{O,1})r_I^2\frac{h_H^2}{h_I^2}).
\end{array}
\end{equation}

For the case where $d_{O,1}>d_{th,2}$, $S_1(d_{O,1})$ has a central angle $\theta$ instead of $\phi(d_{O,1})$ in (\ref{S1}):
    $S_1(d_{O,1})=(\pi-\theta)(d_{O,1}^2-h_H^2)$
and $S_2(d_{O,1})$ can be obtained by subtracting the sector area centered at $\textbf{o}_I'$ with radius $r_I\frac{h_H}{h_I}$ and central angle $\frac{\pi}{2}-\theta$ from the quadrilateral $\textbf{z}_U' P_1 \textbf{o}_I' P_2$ where $\angle \textbf{z}_U' P_1 \textbf{o}_I'=\angle \textbf{z}_U' P_2 \textbf{o}_I'=\frac{\pi}{2}$. The area $S_2(d_{O,1})$ can be calculated by substituting $d_{th,2}$ into $d_{O,1}$ in (\ref{S2}), so that
    $S_2(d_{O,1})=r_I^2\frac{h_H^2}{h_I^2} \big( \cot\theta-(\frac{\pi}{2}-\theta) \big)$
and
\begin{equation}
\begin{array}{r@{}l}
    S&_O(d_{O,1})=S_1(d_{O,1})+S_2(d_{O,1})\\
    &=(\pi-\theta)(d_{O,1}^2-h_H^2)+r_I^2\frac{h_H^2}{h_I^2} \big( \cot\theta-(\frac{\pi}{2}-\theta) \big).
\end{array}
\end{equation}

The CDF can be formulated using:
\begin{equation}
\begin{array}{r@{}l}
    F&(d_{O,1})=1-\exp(-\lambda_H S_O(d_{O,1}))\\
    &=1-\exp(-\lambda_{H}(\pi-\theta)(d_{O,1}^2-h_H^2))\\
    &\times\exp(-\lambda_{H}\cot\theta r_I^2\frac{h_H^2}{h_I^2}+\lambda_{H}(\frac{\pi}{2}-\theta)r_I^2\frac{h_H^2}{h_I^2}),
\end{array}
\end{equation}
and the PDF can be formulated using:
\begin{equation}
\begin{array}{r@{}l}
    f&(d_{O,1})=2\lambda_H(\pi-\theta)d_{O,1}\\
    &\times\exp(-\lambda_{H}(\pi-\theta)(d_{O,1}^2-h_H^2))\\
    &\times\exp(-\lambda_{H}\cot\theta r_I^2\frac{h_H^2}{h_I^2}+\lambda_{H}(\frac{\pi}{2}-\theta)r_I^2\frac{h_H^2}{h_I^2}).
\end{array}
\end{equation}
We can summarize the PDF of $d_{O,1}$ as:
\begin{equation}
\begin{array}{l@{}l}
&f(d_{O,1})=2\lambda_{H}d_{O,1}(1-F(d_{O,1}))\\
&\times (\pi-\phi(d_{O,1})\mathbbm{1}(d_{th,1}<d_{O,1}<d_{th,2})-\theta\mathbbm{1}(d_{O,1}>d_{th,2})).
\end{array}
\end{equation}

\section{Proof of Lemma \ref{lem:PDFofdI1}}\label{app:PDFofdI1}
To evaluate the coverage probability for nearshore users within a hybrid channel environment, we also assume that HAPSs in the shadowed Rician region follow a PPP with density $\lambda_{H}$. We discuss CDFs and PDFs of $d_{I,1}$ for the cases where $d_{th,1}<d_{I,1}<d_{th,2}$ and $d_{th,2}<d_{I,1}<\infty$, respectively. We define that the area of the shadowed Rician region within a distance $d_{I,1}$ is $S_I(d_{I,1})$.

For the case where $d_{th,1}<d_{I,1}<d_{th,2}$, we can calculate $S_I(d_{I,1})$ using the subtracting $S_2(d_{I,1})$ from the sector area centered at $\textbf{z}_U'$ with radius $d_{O,1}$ and central angle $2\phi(d_{I,1})$, where
$S_2(d_{I,1})=\sin\eta(d_{I,1})r_Ir_U\frac{h_H^2}{h_I^2}-\eta(d_{I,1})r_I^2\frac{h_H^2}{h_I^2}$,
and
    $S_3(d_{I,1})=\phi(d_{I,1})(d_{I,1}^2-h_H^2)$,
with
    $\phi(d_{I,1})= \arccos \big(\frac{h_I^2 d_{I,1}^2-h_I^2 h_H^2+r_U^2 h_H^2-r_I^2 h_H^2}{2\sqrt{d_{I,1}^2-h_H^2} r_U h_H h_I} \big)$
and
    $\eta(d_{I,1})= \arccos \big(\frac{r_U^2h_H^2+r_I^2h_H^2-d_{I,1}^2h_I^2+h_H^2h_I^2}{2r_I r_Uh_H^2} \big)$.
Therefore, $S_I(d_{I,1})$ can be calculated using: 
\begin{equation}
\begin{array}{r@{}l}
S&_I(d_{I,1})=S_3(d_{I,1})-S_2(d_{I,1})=\phi(d_{I,1})(d_{I,1}^2-h_H^2)\\
    &-\sin\eta(d_{I,1})r_Ir_U\frac{h_H^2}{h_I^2}+\eta(d_{I,1})r_I^2\frac{h_H^2}{h_I^2}.
\end{array}
\end{equation}

The CDF can be formulated as:
\begin{equation}
\begin{array}{r@{}l}
    F&(d_{I,1})=1-\exp(\lambda_{H}S_I(d_{I,1}))\\
    &=1-\exp(-\lambda_{H}\phi(d_{I,1})(d_{I,1}^2-h_H^2))\\
    &\times\exp(\lambda_{H}\sin\eta(d_{I,1})r_Ir_U\frac{h_H^2}{h_I^2}-\lambda_{H}\eta(d_{I,1})r_I^2\frac{h_H^2}{h_I^2}),
\end{array}
\end{equation}
and the PDF of $d_{I,1}$ is:
\begin{equation}
\begin{array}{r@{}l}
    f&(d_{I,1})=2\lambda_{H}\phi(d_{I,1})d_{I,1}\exp(-\lambda_{H}\phi(d_{I,1})(d_{I,1}^2-h_H^2))\\
    &\times\exp(\lambda_{H}\sin\eta(d_{I,1})r_Ir_U\frac{h_H^2}{h_I^2}-\lambda_{H}\eta(d_{I,1})r_I^2\frac{h_H^2}{h_I^2}).
\end{array}
\end{equation}

For the case where $d_{I,1}>d_{th,2}$, the area can be divided into two parts, the first part is the intersection of the circular area centered at $\textbf{x}_{U}'$ with radius $\sqrt{d_{th,2}^2-h_H^2}$, and the circular area centered at $\textbf{o}_{I}'$ with radius $r_I\frac{h_H}{h_I}$. We have
    $S_3(d_{I,1})=\theta(d_{th,2}^2-h_H^2)$,
and 
    $S_2(d_{I,1})=\cot\theta r_I^2\frac{h_H^2}{h_I^2}-(\frac{\pi}{2}-\theta)r_I^2\frac{h_H^2}{h_I^2}$,
so that the area of the first part is $S_3(d_{I,1})-S_2(d_{I,1})$. The second part is an annular sector centered at $\textbf{z}_U'$ with inner radius $d_{th,2}$, outer radius $d_{I,1}$. Therefore the areas of the second part is:
    $S_4(d_{I,1})=\theta(d_{I,1}^2-d_{th,2}^2)$.
Therefore, $S_{I}(d_{I,1})$ can be calculated using
\begin{equation}
\begin{array}{r@{}l}
    S&_{I}(d_{I,1})=S_{3}(d_{I,1})+S_{2}(d_{I,1})+S_{4}(d_{I,1})\\
    &=\theta(d_{I,1}^2-h_H^2)+(\frac{\pi}{2}-\theta)r_I^2\frac{h_H^2}{h_I^2}-\cot\theta r_I^2\frac{h_H^2}{h_I^2}.
\end{array}
\end{equation}

Therefore, the CDF can be calculated using
\begin{equation}
\begin{array}{r@{}l}
    F&(d_{I,1})=1-\exp(-\lambda_{H}S_I(d_{I,1}))\\

    &=1-\exp(-\lambda_{H}\theta(d_{I,1}^2-h_H^2))\\
    &\times\exp(\lambda_{H}\cot\theta r_I^2\frac{h_H^2}{h_I^2}-\lambda_{H}(\frac{\pi}{2}-\theta)r_I^2\frac{h_H^2}{h_I^2}),
\end{array}
\end{equation}
and the PDF of $d_{I,1}$ is
\begin{equation}
\begin{array}{r@{}l}
    f&(d_{I,1})=2\lambda_{H}\theta d_{I,1}\exp(-\lambda_{H}\theta(d_{I,1}^2-h_H^2))\\
    &\times\exp(\lambda_{H}\cot\theta r_I^2\frac{h_H^2}{h_I^2}-\lambda_{H}(\frac{\pi}{2}-\theta)r_I^2\frac{h_H^2}{h_I^2}).
\end{array}
\end{equation}
We can summarize the PDF of $d_{I,1}$ as:
\begin{equation}
\begin{array}{l@{}l}
&f(d_{I,1})=2\lambda_{H}d_{I,1}(1-F(d_{I,1}))\\
&\times (\phi(d_{I,1})\mathbbm{1}(d_{th,1}<d_{I,1}<d_{th,2})+\theta\mathbbm{1}(d_{I,1}>d_{th,2})).
\end{array}
\end{equation}
\section{Proof of Theorem \ref{theo:accurate_for_SA}}\label{app:accurate_for_SA}
As shown in (\ref{PcovSum}), we can calculate the coverage probability using the sum of two terms:
\begin{equation}
    P_{\rm cov}(r_U,\tau)=P_{\rm cov}(r_U,\tau,\textbf{H}_O)+P_{\rm cov}(r_U,\tau,\textbf{H}_I).
\end{equation}

Similar to the offshore case, the coverage probability conditioned with TX selection case $\textbf{H}_O$ can be calculated using:
\begin{equation}
\begin{array}{r@{}l}
  &P_{\rm cov}(r_U,\tau|\textbf{H}_O,d_{O,1},d_{I,1})\\
  &=P_{\rm cov}(r_U,\tau|d_{I,1}> \mathcal{T}_{I}(d_{O,1}),d_{O,1})\\
  &= \P \big\{\frac{p_{O,0}W_{O,1}d_{O,1}^{-\alpha_O}}{N_0+I(d_{O,1},d_{I,1})}>\tau \big\}
  =\overline{F}_{W_{O,1}}\big(g_{O}^{h}(d_{O,1},d_{I,1})\big),
\end{array}
\end{equation}
where 
    $g_{O}^{h}(d_{O,1},d_{I,1})= \frac{\tau}{p_{O,0}}d_{O,1}^{\alpha_{O}} \big[N_{0}+I(d_{O,1},d_{I,1}) \big]$.
We define
    $\mathcal{T}_d(t)=\frac{h_H}{h_I}(r_U\cos{t}-\sqrt{r_I^2-r_U^2\sin^2{t}})$,
so that the interference from HAPS based on $d_{O,1}$ and $d_{I,1}$ is
\begin{equation}
\begin{array}{r@{}l}
I&(d_{O,1},d_{I,1})=I_O(d_{O,1})+I_I(d_{I,1})\\
&\approx \mathbb{E}(\sum\limits_{d_{O,j}>d_{O,1}} p_{O,0}W_{O,j}d_{O,j}^{-\alpha_O}+\sum\limits_{d_{I,k}>d_{I,1}} p_{I,0}W_{I,k}d_{I,k}^{-\alpha_I})\\
& =p_{O,0}(2b_{O}+\Omega_{O})\frac{(2\pi-2\theta)\lambda_{H}}{\alpha_O-2}z^{2-\alpha_O} \big|_{\infty}^{d_{O,1}}\\
& +p_{O,0}(2b_{O}+\Omega_{O})\frac{\lambda_{H}}{\alpha_O-2}\int_{-\theta}^{\theta}z^{2-\alpha_O} \big|_{\mathcal{T}_d(t)}^{\min\{d_{O,1},\mathcal{T}_d(t)\}} \dd t\\
& +p_{I,0}(2b_{I}+\Omega_{I})\frac{\lambda_{H}}{\alpha_I-2}\int_{-\theta}^{\theta}z^{2-\alpha_I} \big|_{\infty}^{\max\{d_{I,1},\mathcal{T}_d(t)\}} \dd t.

\end{array}
\label{IdO1dI1}
\end{equation}

Therefore, the coverage probability with TX selection case $\textbf{H}_O$ can be calculated using:
\begin{equation}
\begin{array}{r@{}l}
    P&_{\rm cov}(r_U,\tau,\textbf{H}_O)\\
    &= \int_{h_H}^{\infty} \int_{d_{th,1}}^{\infty} P_{\rm cov}(r_U,\tau|\textbf{H}_O,d_{O,1},d_{I,1})\\
    &\;\;\;\;\;\;\;\;\;\;\;\;\;\;\;\;\;\;\;\;\;\;\;\;\times f(d_{O,1})f(d_{I,1})\;\dd d_{I,1}\dd d_{O,1}\\
    &= \int_{h_H}^{\infty} \int_{\max\{d_{th,1},\mathcal{T}_{I}(d_{O,1})\}}^{\infty} P_{\rm cov}(r_U,\tau|d_{O,1},d_{I,1})\\
    &\;\;\;\;\;\;\;\;\;\;\;\;\;\;\;\;\;\;\;\;\;\;\;\;\times f(d_{O,1})f(d_{I,1})\;\dd d_{I,1}\dd d_{O,1}\\
    &= \int_{h_H}^{\infty} \int_{\max\{d_{th,1},\mathcal{T}_{I}(d_{O,1})\}}^{\infty} \overline{F}_{W_{O,1}}\big(g_{O}^{h}(d_{O,1},d_{I,1})\big)\\
    &\;\;\;\;\;\;\;\;\;\;\;\;\;\;\;\;\;\;\;\;\;\;\;\;\times f(d_{O,1})f(d_{I,1})\;\dd d_{I,1}\dd d_{O,1}.
\end{array}
\label{PcovRuTauHO}
\end{equation}

Similar to the onshore case, the coverage probability with TX 
selection case $\textbf{H}_{I}$ can be calculated using:
\begin{equation}
\begin{array}{r@{}l}
  &P_{\rm cov}(r_U,\tau|\textbf{H}_I,d_{O,1},d_{I,1})\\
  &=P_{\rm cov}(r_U,\tau|d_{O,1}> \mathcal{T}_{O}(d_{I,1}),d_{I,1})\\
  &= \P \big\{\frac{p_{I,0}W_{I,1}d_{I,1}^{-\alpha_I}}{N_0+I(d_{O,1},d_{I,1})}>\tau \big\}
  =\overline{F}_{W_{I,1}}\big(g_{I}^{h}(d_{O,1},d_{I,1})\big),
\end{array}
\end{equation}
where
    $g_{I}^{h}(d_{O,1},d_{I,1})= \frac{\tau}{p_{I,0}}d_{I,1}^{\alpha_{I}} \big[N_{0}+I(d_{O,1},d_{I,1}) \big]$.
The interference is shown in (\ref{IdO1dI1}) and the coverage probability with TX selection case $\textbf{H}_I$ can be calculated using:
\begin{equation}
\begin{array}{r@{}l}
    P&_{\rm cov}(r_U,\tau,\textbf{H}_I)\\
    &= \int_{d_{th,1}}^{\infty}\int_{h_H}^{\infty}  P_{\rm cov}(r_U,\tau|\textbf{H}_I,d_{O,1},d_{I,1})\\
    &\;\;\;\;\;\;\;\;\;\;\;\;\;\;\;\;\;\;\;\;\;\;\;\;\times f(d_{I,1})f(d_{O,1})\;\dd d_{O,1}\dd d_{I,1}\\
    &= \int_{d_{th,1}}^{\infty}\int_{\max\{h_H,\mathcal{T}_{O}(d_{I,1})\}}^{\infty} P_{\rm cov}(r_U,\tau|d_{O,1},d_{I,1})\\
    &\;\;\;\;\;\;\;\;\;\;\;\;\;\;\;\;\;\;\;\;\;\;\;\;\times f(d_{O,1})f(d_{I,1})\;\dd d_{I,1}\dd d_{O,1}\\
    &= \int_{d_{th,1}}^{\infty}\int_{\max\{h_H,\mathcal{T}_{O}(d_{I,1})\}}^{\infty} \overline{F}_{W_{I,1}}\big(g_{I}^{h}(d_{O,1},d_{I,1})\big)\\
    &\;\;\;\;\;\;\;\;\;\;\;\;\;\;\;\;\;\;\;\;\;\;\;\;\times f(d_{O,1})f(d_{I,1})\;\dd d_{O,1}\dd d_{I,1}.
\end{array}
\label{PcovRuTauHI}
\end{equation}

\section{Proof of Lemma \ref{lem:CDFPDFofdO1dI1}}\label{app:CDFPDFofdO1dI1}
Based on the approximation method 1, we model the HAPSs operating in the Rician channels and the HAPSs operating in the shadowed Rician channels using two independent PPPs with density $\lambda_H$, respectively. We discuss the cases where $h_H<d_{I,1}<d_{th,1}$ and $d_{th,1}<d_{I,1}<\infty$. When $h_H<d_{I,1}<d_{th,1}$, the area of the Rician region within the distance $d_{th,1}$ to $\textbf{z}_U$ is
    $S_O'(d_{O,1})=\pi (d_{O,1}^2-h_H^2)$,
and when $d_{th,1}<d_{I,1}<\infty$ the area of the Rician region contains a sector area centered at $\textbf{z}_U'$, with radius $d_{O,1}$ and central angle $2\pi-2\theta$, and a sector area centered at $\textbf{z}_U'$ with radius $d_{th,1}$. Therefore the area of the Rician region is
    $S_O'(d_{O,1})=(\pi-\theta)(d_{O,1}^2-h_H^2)+\theta(d_{th,1}^2-h_H^2)$.
In short, we have
\begin{equation}
    S_O'(d_{O,1})=(\pi-\theta)d_{O,1}^2+\theta \min\{d_{O,1},d_{th,1}\}^2-\pi h_H^2.
\end{equation}

Therefore, the CDF of the distance between the typical user and its closest HAPS in Rician region $d_{O,1}$ is:
\begin{equation}
\begin{array}{r@{}l}
    F&(d_{O,1})=1-\exp(-\lambda_{H}S_O'(d_{O,1}))\\
    &=1-\exp(-\lambda_H (\pi-\theta)d_{O,1}^2-\lambda_{H}\theta\min(d_{O,1},d_{th,1})^2)\\
    &\times \exp(\lambda_{H}\pi h_H^2),
\end{array}
\end{equation}
and the PDF of $d_{O,1}$ is the derivative of $F(d_{O,1})$:
\begin{equation}
\begin{array}{r@{}l}
f&(d_{O,1})=2\lambda_Hd_{O,1} \big((\pi-\theta)+\theta \mathbbm{1}(d<d_{th,1}) \big)\\
&\times\exp(-\lambda_H (\pi-\theta)d_{O,1}^2-\lambda_{H}\theta\min(d_{O,1},d_{th,1})^2)\\
&\times \exp(\lambda_{H}\pi h_H^2),
\end{array}
\end{equation}
for $d_{O,1}>h_H$.

Similarly, the Rician region within the distance $d_{I,1}$ is an annular sector area centered at $\textbf{z}_U'$, with inner radius $r_1$, infinite outer radius and central angle $2\theta$. Therefore, the area of the Rician region within the distance $d_{I,1}$ is:
\begin{equation}
    S_I'(d_{I,1})=\theta(d_{I,1}^2-d_{th,1}^2).
\end{equation}

Therefore, the CDF of $d_{I,1}$ can be written as:
\begin{equation}
\begin{array}{r@{}l}
F(d_{I,1})&=1-\exp(-\lambda_{H}S_I'(d_{I,1}))\\
&=1-\exp(-\lambda_{H}\theta(d_{I,1}^2-d_{th,1}^2)),

\end{array}
\end{equation}
and the PDF of $d_{I,1}$ is the derivative of $F(d_{I,1})$:
\begin{equation}
f(d_{I,1})=2\lambda_{H}\theta d_{I,1}\exp(-\lambda_{H}\theta(d_{I,1}^2-d_{th,1}^2)),
\end{equation}
for $d_{I,1}>d_{th,1}$.

\section{Proof of Theorem \ref{theo:approx1_for_SA}}\label{app:approx1_for_SA}
As shown in (\ref{PcovSum}), we can calculate the coverage probability using the sum of two terms:
\begin{equation}
    P_{\rm cov}(r_U,\tau)=P_{\rm cov}(r_U,\tau,\textbf{H}_O)+P_{\rm cov}(r_U,\tau,\textbf{H}_I).
\end{equation}

The same as the proof of Theorem \ref{theo:accurate_for_SA}, we can also obtain the coverage probability of nearshore users with $\textbf{H}_O$ is
\begin{equation}
\begin{array}{r@{}l}
    P&_{\rm cov}(r_U,\tau,\textbf{H}_O)= \int_{h_H}^{\infty} \int_{\max\{d_{th,1},\mathcal{T}_{I}(d_{O,1})\}}^{\infty}\\ &\overline{F}_{W_{O,1}}\big(g_{O}^{h}(d_{O,1},d_{I,1})\big) f(d_{O,1})f(d_{I,1})\;\dd d_{I,1}\dd d_{O,1},
\end{array}
\end{equation}
with
    $g_{O}^{h}(d_{O,1},d_{I,1})= \frac{\tau}{p_{O,0}}d_{O,1}^{\alpha_{O}} \big[N_{0}+I(d_{O,1},d_{I,1}) \big],$
and the coverage probability of nearshore users with $\textbf{H}_I$ is

\begin{equation}
\begin{array}{r@{}l}
    P&_{\rm cov}(r_U,\tau,\textbf{H}_I)= \int_{d_{th,1}}^{\infty}\int_{\max\{h_H,\mathcal{T}_{O}(d_{I,1})\}}^{\infty}\\ &\overline{F}_{W_{I,1}}\big(g_I^{h}(d_{O,1},d_{I,1})\big) f(d_{O,1})f(d_{I,1})\;\dd d_{O,1}\dd d_{I,1},
\end{array}
\end{equation}
with
    $g_{I}^{h}(d_{O,1},d_{I,1})= \frac{\tau}{p_{I,0}}d_{I,1}^{\alpha_{I}} \big[N_{0}+I(d_{O,1},d_{I,1}) \big]$.
Differently, based on the approximation method 1, we have approximate CDFs and PDFs of $d_{O,1}$ and $d_{I,1}$ formulated in Lemma \ref{lem:CDFPDFofdO1dI1}. In this case, interference conditioned on $d_{O,1}$ and $d_{I,1}$ can be calculated using it expectation
\begin{equation}
\begin{array}{r@{}l}
    I&(d_{O,1},d_{I,1})=I_O(d_{O,1})+I_I(d_{I,1})\\
    &\approx \mathbb{E}(\sum\limits_{d_{O,j}>d_{O,1}} p_{O,0}W_{O,j}d_{O,j}^{-\alpha_O}+\sum\limits_{d_{I,k}>d_{I,1}} p_{I,0}W_{I,k}d_{I,k}^{-\alpha_I})\\

    & =p_{O,0}(2b_{O}+\Omega_{O})\frac{(2\pi-2\theta)\lambda_{H}}{\alpha_{O}-2}z^{2-\alpha_{O}} \big|_{\infty}^{d_{O,1}}\\
    & \;\;\;+p_{O,0}(2b_{O}+\Omega_{O})\frac{2\theta\lambda_{H}}{\alpha_{O}-2}z^{2-\alpha_O} \big|_{d_{th,1}}^{\min\{d_{O,1},d_{th,1}\}}\\
    & \;\;\;+p_{I,0}(2b_{I}+\Omega_{I})\frac{2\theta\lambda_{H}}{\alpha_{I}-2}z^{2-\alpha_{I}} \big|_{\infty}^{d_{I,1}},
\end{array}
\end{equation}
which has a lower computation complexity than the interference for exact expression of coverage probability.

\ifCLASSOPTIONcaptionsoff
  \newpage
\fi

\bibliographystyle{IEEEtran}
\bibliography{ref}

\end{document}